\title{Maximum Coverage in Turnstile Streams\\ with Applications to Fingerprinting Measures}
\author{Alina Ene\footnote{(aene@bu.edu) Department of Computer Science, Boston University. Supported in part by NSF CAREER award CCF-1750333 and an Alfred P. Sloan Research Fellowship.
 \smallskip} 
\\ {\small Boston University} 
\and
Alessandro Epasto\footnote{(aepasto@google.com) Google Research. \smallskip}
 \\ {\small Google Research} \and 
Vahab Mirrokni\footnote{(mirrokni@google.com) Google Research. 
\smallskip} \\ {\small Google Research} \and
Hoai-An Nguyen\footnote{(hnnguyen@andrew.cmu.edu) Computer Science Department, Carnegie Mellon University.
Supported in part by an NSF GRFP fellowship grant number DGE2140739 and NSF CAREER Award CCF-2330255. \smallskip} \\ {\small Carnegie Mellon University} \and
Huy L. Nguyen\footnote{(hu.nguyen@northeastern.edu) Khoury College of Computer Sciences, Northeastern University. Supported in part by NSF award number 2311649. \smallskip} \\ {\small Northeastern University}
\and
David P. Woodruff\footnote{(dwoodruf@cs.cmu.edu) Computer Science Department, Carnegie Mellon University. Google Research. \smallskip} \\ {\small Carnegie Mellon University, Google Research}
\and
Peilin Zhong\footnote{(peilinz@google.com) Google Research.\smallskip} \\ {\small Google Research}
}
\date{}
\begin{document}

\maketitle

\begin{abstract}
In the maximum coverage problem we are given $d$ subsets from a universe $[n]$, and the goal is to output $k$ subsets such that their union covers the largest possible number of distinct items. We present the first algorithm for maximum coverage in the turnstile streaming model, where updates which insert or delete an item from a subset come one-by-one. Notably our algorithm only uses $\polylog n$ update time. We also present turnstile streaming algorithms for targeted and general fingerprinting for risk management where the goal is to determine which features pose the greatest re-identification risk in a dataset. As part of our work, we give a result of 
independent interest: an algorithm to estimate the complement of the $p^{\text{th}}$ frequency moment of a vector for $p \geq 2$. Empirical evaluation confirms the practicality of our fingerprinting algorithms demonstrating a speedup of up to $210$x over prior work.
\end{abstract}

\newpage 

\section{Introduction}
Maximum coverage is a classic NP-hard problem with applications including information retrieval \cite{ABBLMS2015stochastic}, influence maximization \cite{KKT2015maximizing}, and sensor placement \cite{KG2007near}. Given $d$ subsets of a universe containing $n$ items and a cardinality constraint $k \geq 0$, the goal is to output the $k$ subsets whose union covers the largest possible number of distinct items. Feige demonstrated that 
a simple greedy algorithm --- running in $O(knd)$ time and using $O(nd)$ space --- achieves a tight $1-1/e$ relative approximation (assuming P $\neq$ NP) \cite{F1998threshold}. The algorithm iterates for $k$ rounds and selects the subset that maximizes the marginal gain at each step. 
However, its polynomial time and space complexity make it impractical for large-scale datasets. Our goal therefore is to develop more efficient algorithms for maximum coverage.

To this end, we present the first one-pass turnstile streaming algorithm for maximum coverage. In the turnstile streaming model, updates come one-by-one in a stream. Each update inserts or deletes an item from a subset. The goal is to process each update efficiently while only maintaining sublinear space and then output the answer at the end of the stream. Our algorithm uses sublinear space and notably only $\polylog n$ update time to output a near-tight $(1-1/e-\eps)$ approximation to the answer. Our algorithm is one-pass, meaning it takes only one pass over the stream of updates.

Our algorithm offers efficient time complexity even with direct access to the full input. The input can be provided as a sequence of updates, and the \emph{total} runtime is near-linear in the number of items across the input subsets. 
In addition, allowing arbitrary deletions of items from subsets is critical for numerous applications, including the following extension to fingerprinting for dataset risk management.

We also develop turnstile streaming algorithms for targeted and general fingerprinting. The input consists of $n$ users and $d$ features, where each user has a value assigned for each feature. In targeted fingerprinting, the goal is to select $k$ features that minimize the number of users who share identical values with a given target user $u\in [n]$ at these features. In general fingerprinting, the goal is to select $k$ features that minimize the number of user pairs with matching values across those features. Here an update in the stream changes the value of a user at a feature. 

Our fingerprinting algorithms fit into the broader privacy attack literature \cite{SDG2023defweb, CDPSLDWG2019khyperloglog,ZWZL2023few} and is an extension of the work of Chia, Desfontaines, Perera, Simmons{-}Marengo, Li, Day, Wang, and Guevara~\cite{CDPSLDWG2019khyperloglog} in privacy auditing and risk measurement. 
Fingerprinting, which identifies users based on unique attribute combinations in a dataset, poses a significant privacy risk. Our algorithms mitigate this risk by identifying the $k$ features most likely to allow adversaries to successfully fingerprint users. Previous work, except for that of Guly{\'{a}}s, {\'{A}}cs, and Castelluccia~\cite{GAC2016near} (whose algorithms we improve), has only measured the risk of entire datasets or fixed sets of features. In contrast, our time and space efficient algorithms are suited for real-time monitoring of the re-identification risks even as a dataset evolves. Additionally, targeted fingerprinting is a form of frequency estimation with potential applications in discovering heavy hitters \cite{BDW2016optimal, ZKMSL2020federated}.

Our algorithms are all linear sketches, powerful structures that compress the input matrix while time efficiently handling insertions and deletions. Unlike algorithms designed for specific models, linear sketches are applicable across a wide range of settings. Besides directly implying turnstile streaming algorithms, they are well-suited for distributed contexts. One popular application for linear sketches is in the coordinator model where there are $k$ distributed machines and one coordinator. The input is split into pieces, and each machine gets a piece. They can only communicate with the coordinator (and want to minimize communication), and the goal is to get a joint solution. Another application is in parallel computation where there are $k$ distributed machines who can communicate to compute a joint solution. The goal is to again minimize the amount of communication. 

\subsection{Related Work}
There is an extensive body of work on the maximum coverage problem, and we only attempt to give an overview of the most relevant works.
In the following, an $x$-relative approximation means that the number of distinct items covered by the $k$ subsets outputted by the algorithm is at least $x \cdot \mathbf{OPT}$, where $\mathbf{OPT}$ is the number of distinct items covered by the optimal solution. $\tilde{O}(\cdot)$ notation is used to suppress poly-logarithmic factors in its argument. 

McGregor and Vu~\cite{MV2019better} provide a one-pass streaming algorithm\footnote{See \Cref{appen:streaming} about the streaming model.} that outputs a $(1-1/e-\eps)$- relative approximation for $\eps \in (0,1)$ in $\tilde{O}(d/\eps^2)$ space. They consider the insertion-only set-arrival streaming model where each update reveals a subset and \emph{all} the items it covers, and deletions are not supported. 
Bateni, Esfandiari, and Mirrokni~\cite{BEM2017almost} give 
a one-pass $(1 - 1/e - \eps)$ relative approximation algorithm that uses $\tilde{O}(d / \eps^3)$ memory. They consider the insertion-only streaming model where deletions are not supported. 
They specifically provide an algorithm that carefully samples a number of item-subset relationships and then show that any $(\alpha)$-relative approximation on this smaller subsampled universe achieves a $(\alpha - \eps)$-relative approximation for the original input. We use their sketch as a starting point.

There has also been work that achieves different approximation factors  \cite{SG2009maximum, MTV2021maximum}, in random-arrival streams \cite{WCW2023maximum, CMW2024improved}, with multiple passes \cite{CMW2024improved}, and in more general submodular maximization in the insertion-only set-arrival streaming model \cite{BMKK2014streaming, KMZLK2019submodular}. 
In contrast to all of the above, our algorithm allows arbitrary insertions or deletions of items from subsets. Note that there are more lines of work in the model where each update is the insertion or deletion of an entire subset (rather than items from subsets which we consider) than listed above. 

There has also been work on maximum coverage and submodular maximization in the dynamic model \cite{M2020dynamic, CP2022complexity, LMNTZ2020fully}. We note that these dynamic model algorithms do not achieve sublinear space and in some cases require exponential space. Moreover, they either have a worse update time than our algorithm or do not attain the same approximation quality.

\subsection{Our Contributions}

\textbf{Maximum Coverage Results.}
We formalize the input as an $n \times d$ matrix $\bA$ where entry $A_{ij}$ is nonzero if item $i$ is in subset $j$ and $0$ otherwise. An update takes the form $(i,j,\pm c)$, modifying $A_{ij}$ by adding or subtracting $c$.
\begin{theorem} \label{thm:max}
Given $n \times d$ matrix $\bA$, integer $k \geq 0$, and $\eps \in (0,1)$, there exists a one-pass turnstile streaming algorithm 
    using $\tilde{O}(d/\eps^3)$ space and $\tilde{O}(1)$ update time that outputs a $(1-1/e - \eps)$ relative approximation to maximum coverage with probability at least $1-1/d$.  
\end{theorem}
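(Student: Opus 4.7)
My plan is to implement the offline sketch of Bateni, Esfandiari, and Mirrokni~\cite{BEM2017almost} as a linear sketch that supports turnstile updates in $\polylog n$ time. Recall that BEM subsample the universe at $O(\log n)$ geometric rates and show that an $\alpha$-relative approximation computed on the subsampled item-subset incidence yields an $(\alpha-\eps)$-approximation on the original instance using $\tilde{O}(d/\eps^3)$ retained samples; since greedy achieves $1-1/e$ on the subsample, this would yield the $(1-1/e-\eps)$ guarantee provided we can reconstruct the subsample from a turnstile stream.

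For each sampling rate $p_\ell = 2^{-\ell}$ with $\ell \in [O(\log n)]$, I fix a hash function $h_\ell : [n] \to \{0,1\}$ that is $1$ with probability $p_\ell$; this defines the level-$\ell$ subsample of items. To recover the row of each sampled item under turnstile updates, I maintain, for each column $j$ and each level $\ell$, a linear sketch (say an $\ell_0$-sampler or a sparse-recovery structure such as a CountSketch) supported only on entries $(i,j)$ with $h_\ell(i)=1$. An update $(i,j,\pm c)$ is applied to only the $O(\log n)$ sketches tied to column $j$, touching $\polylog n$ cells total and yielding the claimed $\tilde{O}(1)$ update time. At the end of the stream, decoding each sketch produces the sampled entries column by column; transposing gives, for each sampled item, the set of columns containing it, exactly the information BEM's in-memory sketch would hold. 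I then run greedy on the reconstructed subsample to output the final $k$ columns.

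Correctness composes three guarantees: BEM's subsampling analysis, the $1-1/e$ bound for greedy on the subsample, and the recovery guarantees of the sketches. I would set each sketch's failure probability to $1/d^{O(1)}$, union-bound over all $\tilde{O}(d/\eps^3)$ samples, and combine to obtain overall success probability at least $1-1/d$. Space is dominated by $\tilde{O}(1/\eps^3)$ sketch capacity per column at $\polylog n$ bits per sample, giving $\tilde{O}(d/\eps^3)$ total.

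The main obstacle I foresee is controlling the per-column support so that sparse recovery is not overwhelmed: at the heavier sampling rates, a dense column may still have $\Theta(n)$ subsampled entries, far exceeding what a logarithmic-capacity $\ell_0$-sampler can extract. The resolution is that BEM's analysis only needs $\tilde{O}(1/\eps^3)$ recovered samples per level, so one caps each per-column sketch at this capacity and argues that the correct level --- the smallest $\ell$ whose subsample is not overflowing --- can be detected from the sketch itself. Ensuring this truncation remains consistent under deletion-induced cancellations, so that items can re-enter the subsample once their copies vanish and reappear, is the delicate point the formal proof must pin down.
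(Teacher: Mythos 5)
Your high-level plan (realize the subsample of \cite{BEM2017almost} via linear sketches over item-subsampling levels, then run greedy offline) is the same starting point as the paper, but the recovery step as you describe it does not reconstruct the object that BEM's analysis needs, and this is a genuine gap. BEM's guarantee is about a \emph{per-row} sample: for every sampled item one must retain \emph{all} of its nonzero entries if it has fewer than $\tilde{O}(d/(\eps k))$ of them, and at least $\tilde{O}(d/(\eps k))$ of them otherwise, up to a global budget of $\tilde{O}(d/\eps^3)$. Your sketches are \emph{per-column} with capacity $\tilde{O}(1/\eps^3)$ each; at the correct sampling level a single column can still contain $\Theta(k\log d/\eps^2)$ sampled entries (its coverage is only bounded by the subsampled $\mathbf{OPT}$), so either the column sketch overflows or you truncate and silently drop entries of small items, and then the reconstructed matrix is not distributed like $\bA_*$ and the $(\alpha-\eps)$ transfer theorem no longer applies. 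The conflation of the global $\tilde{O}(d/\eps^3)$ budget with a per-column budget is exactly the point where the argument breaks. The paper's resolution is structurally different: after subsampling rows at $\log n$ rates, it hashes the \emph{rows} into $b=O(k\log d/\eps^2)$ buckets ($t=O(\log(d/\eps))$ times), proves counting lemmas showing there are only $O(k\log d/\eps^2)$ ``large'' items and $O(d\log d/\eps^2)$ nonzeros among ``small'' items, so with high probability each bucket holds at most one large item plus $O(d\log(1/\eps)/(\eps k))$ small-item entries, and then does exact sparse recovery per bucket (CountSketch with $\tilde{O}(d/(\eps k))$ buckets, with $L_0$ sketches over $\log(rd)$ further subsampling levels to find the right recovery level). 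This is what yields per-row recovery within $\tilde{O}(d/\eps^3)$ total space.

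Two further points. First, you never say how the algorithm chooses among the $O(\log n)$ levels at the end: each level produces a candidate set of $k$ columns, and the algorithm must estimate their coverage on the original $\bA$ to output the best one; the paper does this by keeping an $L_0$ sketch per column and evaluating the union of any $k$ chosen columns via a random linear combination of those columns' sketches (costing another $\eps$ factor, accounted for in the error budget). Second, the deletion-consistency worry you flag at the end is a red herring under the correct design: nothing capped or truncated is maintained \emph{during} the stream; all maintained structures are linear sketches, and the capping/selection happens only at query time, so cancellations are handled automatically. As written, though, your proposal neither fixes the per-row recovery issue nor supplies the final level-selection step, so it does not yet establish the theorem.
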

Note that the only dependence on $k$ in our space complexity is a $\poly \log k$ factor which is suppressed by the $\tilde{O}$. 
Our space complexity matches that of \cite{BEM2017almost} and, for constant $\eps$, that of \cite{MV2019better}. Additionally, several lower bounds exist. 
Assadi~\cite{A2017tight} show that achieving a $(1-\eps)$-relative approximation in a constant number of passes requires $\Omega(d/\eps^2)$ space. Assadi and Khanna~\cite{AK2018tight} show that even achieving a $n^{1/3}$ or $\sqrt{k}$ relative approximation in one pass with a sketch requires the sketch to have size $\Omega(d/k^2)$. \cite{MV2019better} shows that achieving better than a $(1-1/e)$- approximation in a constant number of passes requires $\Omega(d/k^2)$ space. \cite{BEM2017almost} also show that any $(1/2 + \eps)$-relative approximation multi-pass streaming algorithm requires $\Omega(d)$ space. 

\textbf{Fingerprinting Results.} In fingerprinting, the input is a $n \times d$ matrix $\bA$ where $A_{ij}$ is the value of user $i$ at feature $j$. An update takes the form $(i, j, \pm c)$, modifying $A_{ij}$ by adding or subtracting $c.$
We first reduce targeted fingerprinting to maximum coverage to improve upon the $O(nd)$ space and $O(knd)$ time  algorithm of \cite{GAC2016near}. We note that \cite{GAC2016near} achieve a $1-1/e$ approximation. 
\begin{corollary}\label{col:targeted}
   Given a $n \times d$ matrix $\bA$, target user $u \in [n]$, and $\eps \in (0,1)$, there exists a one-pass turnstile streaming algorithm using $\tilde{O}(d/\eps^3)$ space and $\tilde{O}(1)$ update time that outputs a $(1-1/e-\eps)$ relative approximation to targeted fingerprinting with probability at least $1-1/d$. 
\end{corollary}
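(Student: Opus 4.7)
The plan is to reduce targeted fingerprinting to maximum coverage and then invoke \Cref{thm:max}. Define the auxiliary $(n-1) \times d$ matrix $\bB$, with rows indexed by users $i \neq u$, by $B_{ij} = A_{ij} - A_{uj}$, so that $B_{ij}$ is nonzero exactly when user $i$ disagrees with $u$ at feature $j$. For any $k$ features $j_1, \ldots, j_k$, the users matching $u$ at all of them are exactly those $i$ with $B_{i,j_t} = 0$ for every $t$, so the number of matching users equals $(n-1) - |S_{j_1} \cup \cdots \cup S_{j_k}|$, where $S_j = \{i \neq u : B_{ij} \neq 0\}$. Minimizing the former is equivalent to maximizing the latter, so a $(1-1/e-\eps)$ approximation for maximum coverage on $\bB$ immediately yields the same guarantee for targeted fingerprinting.

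It remains to simulate the turnstile maximum coverage algorithm of \Cref{thm:max} on $\bB$ using the stream of updates to $\bA$. For an update $(i,j,\pm c)$ with $i \neq u$, entry $B_{ij}$ changes by the same amount, and the update can be forwarded directly in $\tilde O(1)$ time. The main obstacle is an update $(u,j,\pm c)$ to the target row, which shifts the entire $j$-th column of $\bB$ by the opposite amount and therefore naively corresponds to $\Omega(n)$ entry-level updates. To handle this I would exploit the linearity of the sketch underlying \Cref{thm:max}: maintain (i) the sketch of $\bA$ itself under the incoming stream, and (ii) the $d$-dimensional vector $A_u$ explicitly. Non-target updates are forwarded to the sketch, while an update $(u,j,\pm c)$ only modifies one coordinate of $A_u$; both operations take $O(1)$ time. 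Since $\bB = \bA - \mathbf{1}_n A_u^\top$, the linearity of the sketch lets us recover the sketch of $\bB$ from the sketch of $\bA$ and $A_u$ at query time, outside of the per-update budget.

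Combining the reduction with \Cref{thm:max} then gives a $(1 - 1/e - \eps)$-relative approximation in $\tilde O(d/\eps^3)$ space and $\tilde O(1)$ update time, with failure probability at most $1/d$, matching the claimed bound. The one step that requires care is verifying that the sketch built in the proof of \Cref{thm:max} is a linear function of its input matrix, so that the rank-one correction $\mathbf{1}_n A_u^\top$ can be subtracted off at the end without rerunning any updates; this should follow from the fact that the max coverage sketch is assembled from standard linear sketching primitives (such as $L_0$-samplers and frequency-moment estimators) whose behavior depends only on the support of the underlying vector.
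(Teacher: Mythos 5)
Your proposal is correct and essentially matches the paper's own proof: the paper also reduces targeted fingerprinting to maximum coverage on the row-shifted matrix $A_{ij}-A_{uj}$ (Lemma~\ref{lemma:tarFing}), runs \hyperref[alg:secMC]{Maximum-Coverage-LS} on $\bA$ while storing the target row $A_u$ explicitly in $O(d)$ space, and applies the shift to the linear sketch only after the stream ends. The only cosmetic difference is that the paper phrases the post-stream correction as simulating updates to the sketch rather than as subtracting a rank-one term, which is the same use of linearity you describe.
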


We also improve upon the $O(nd)$ space and $O(knd)$ time algorithm of \cite{GAC2016near} (which achieves a $1-1/e$ approximation) for general fingerprinting. While general fingerprinting can be easily reduced to maximum coverage, it requires tracking all $\binom{n}{2}$ user pairs to determine whether they differ in value on a certain feature. Upon an update, potentially $O(n^2)$ user pairs could be affected, making it impractical to use a linear sketch under sublinear space constraints. As a result, we instead design our general fingerprinting algorithm by first designing the following framework for submodular maximization under cardinality constraints over certain functions.  
\begin{theorem}
\label{thm:subThm}
Take $N$ to be a set of $n$ items. 
Let $f:2^N \rightarrow \mathbb{R}_{\geq 0}$ be a monotone, non-negative submodular function. Given input subsets $S_1, \ldots, S_d \in N$, we aim to maximize $f$ by selecting $k$ subsets. 
Also take $f$ to be linearly sketchable\footnote{For the definition of linearly sketchable see \Cref{prelim:LSF}.} with a $(1 \pm \gamma)$-relative approximation in $O(s)$ space. For $\eps \in (0,1)$, if we set $\gamma = \eps / k$, 
then there exists a one-pass turnstile streaming algorithm which outputs a $(1 - 1/e - \eps)$-relative approximation using $O(sk)$ space. The algorithm succeeds with probability at least $1-1/\poly(d)$ if querying the linear sketch of $f$ results in error at most $O(\frac{1}{\poly(d)k})$. 
\end{theorem}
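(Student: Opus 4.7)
The plan is to simulate the offline greedy algorithm for monotone submodular maximization under a cardinality constraint by using the linear sketchability of $f$, maintaining $k$ independent sketches of $f$ (one per greedy round) so that the adaptively chosen queries of each round are independent of the sketch randomness used to answer them.

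\textbf{Algorithm.} During the stream I would maintain $k$ independent linear sketches $\mathcal{L}_1, \ldots, \mathcal{L}_k$ of $f$, each of size $O(s)$ and each producing a $(1 \pm \gamma)$-relative approximation with $\gamma = \eps/k$, for a total space of $O(sk)$; every update is forwarded to all $k$ sketches. After the stream ends, I would run the standard greedy algorithm: starting from $T_0 = \emptyset$, for $i = 1, \ldots, k$ use $\mathcal{L}_i$ to estimate $f(T_{i-1} \cup \{S_j\})$ for each $j \in [d] \setminus T_{i-1}$, pick the index $j^*$ maximizing this estimate, and set $T_i = T_{i-1} \cup \{S_{j^*}\}$. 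The output is $T_k$.

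\textbf{Why $k$ independent sketches are needed.} Greedy is adaptive: the queried set $T_{i-1}$ depends on the outputs of the sketches used in rounds $1, \ldots, i-1$. Using a single sketch throughout would couple its internal randomness to the queries, breaking the $(1\pm\gamma)$ guarantee. By dedicating the fresh, independent sketch $\mathcal{L}_i$ to round $i$, the set $T_{i-1}$ is independent of $\mathcal{L}_i$'s randomness, so the approximation guarantee applies conditionally to all $d$ queries in that round.

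\textbf{Analysis.} The main technical step is propagating the multiplicative sketch error through the greedy recursion. A $(1\pm\gamma)$ error on each queried $f$-value gives an additive error of at most $2\gamma \cdot f(\mathrm{OPT})$ on each marginal-gain estimate; the standard inductive proof of the $(1-1/e)$ bound for greedy then loses an extra $O(k\gamma)$ factor, so choosing $\gamma = \eps/k$ delivers the $(1-1/e-\eps)$ approximation. For the high-probability statement, I would union bound over the $k \cdot d$ queries: by assumption each query fails with probability $O(1/(\poly(d) k))$, which when absorbed into a sufficiently large polynomial gives total success probability at least $1 - 1/\poly(d)$. The obstacle to watch out for is ensuring that within round $i$ the $d$ queries can be treated as non-adaptive with respect to $\mathcal{L}_i$; this is exactly what the independence of $T_{i-1}$ from $\mathcal{L}_i$ buys, and once that is set up the rest is routine.
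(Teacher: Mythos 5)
Your proposal matches the paper's proof essentially step for step: $k$ independent sketches (one fresh sketch per greedy round to handle adaptivity), the modified greedy that queries the sketch, $\gamma = \eps/k$ to absorb the multiplicative query error into the $(1-1/e-\eps)$ guarantee, $O(sk)$ space, and a union bound over the $dk$ evaluations for the failure probability. The only difference is cosmetic: where you sketch the error-propagation argument through the greedy induction yourself, the paper delegates that step to Theorem 5 of Horel and Singer.
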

We then instantiate this framework to solve general fingerprinting. To do this, we design a novel linear sketch\footnote{See \Cref{prelim:ls} for more about linear sketches.} for estimating $n^p - F_p$ for $p \geq 2$ on a $n$ dimensional vector $\bx$ where $F_p$ is the $p^{\text{th}}$ frequency moment. 
If we take $\mathcal{Z}$ to be the set of distinct values in $\bx$ and $f_i$ to be the frequency of the $i^{\text{th}}$ distinct value in $\bx$, then $F_p = \sum_{i \in \mathcal{Z}} f_i^p$. For example, for $\bx = (1,5,5,3,-2,3, 3,3)$ we have $F_p = 1^p + 2^p + 4^p + 1^p.$ 
Here, updates are of the form $(i,\pm c)$ which performs $x_i \leftarrow x_i \pm c$. 
\begin{theorem} \label{thm:genSketch}
Given a $n$-dimensional vector $\bx$, constant $p \geq 2$, and $\gamma, \delta \in (0,1)$, there exists a linear sketch of size 
$\tilde{O}(\gamma^{-\frac{2}{p-1}})$ that supports updates in $\tilde{O}(\gamma^{-\frac{2}{p-1}})$ time and 
outputs a $(1 \pm \gamma^{\frac{1}{p-1}})$ relative approximation to $n^p - F_p$ with probability at least $1-\delta$. 
\end{theorem}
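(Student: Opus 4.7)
The approach is a sampling-based linear sketch. Sample $s = \tilde{O}(\gamma^{-2/(p-1)})$ indices $i_1,\ldots,i_s$ uniformly and independently from $[n]$ (specified by a pseudorandom hash family with polylogarithmic seed), and for each $j \in [s]$ maintain the counter $y_j = x_{i_j}$ through the stream. Each turnstile update $(i, \pm c)$ modifies $y_j$ exactly for slots with $i_j = i$, so the data structure is a linear sketch with space and worst-case update time $\tilde{O}(\gamma^{-2/(p-1)})$. At query time, let $X_j = y_j$ and form the degree-$p$ $U$-statistic
\[
C_p \;=\; \binom{s}{p}^{-1} \sum_{T \in \binom{[s]}{p}} \mathbf{1}\!\bigl[\,X_{j_1} = \cdots = X_{j_p} \text{ where } T = \{j_1,\ldots,j_p\}\,\bigr],
\]
which is unbiased for $F_p/n^p$ since each $p$-subset of slots corresponds to $p$ iid uniform coordinates of $\bx$. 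Output $\hat{T} = n^p(1 - C_p)$; by construction $\mathbb{E}[\hat{T}] = n^p - F_p$.

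\textbf{Variance and concentration.} The main work is to bound $\text{Var}(C_p)$. Hoeffding's $U$-statistic decomposition gives $\text{Var}(C_p) = \sum_{k=1}^p \binom{p}{k}\binom{s-p}{p-k}/\binom{s}{p} \cdot \zeta_k$, where $\zeta_k = F_{2p-k}/n^{2p-k} - (F_p/n^p)^2$ is the covariance of two $p$-subsets sharing $k$ indices, and the sum is dominated for large $s$ by $\Theta(\zeta_1/s)$. Using Cauchy--Schwarz and power-mean bounds on $g_v = f_v/n \in [0,1]$, I would show $\zeta_1 = \tilde{O}(q_p)$ where $q_p = (n^p - F_p)/n^p$, so that $\text{Var}(\hat T) \leq \tilde{O}(n^{2p} q_p / s)$. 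Chebyshev then yields relative error $\tilde{O}(1/\sqrt{s\, q_p})$, matching the target $\gamma^{1/(p-1)}$ when $s = \tilde{O}(\gamma^{-2/(p-1)})$ in the regime where $q_p = \Omega(1)$; a median-of-means over $O(\log(1/\delta))$ independent copies amplifies the success probability to $1-\delta$.

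\textbf{Main obstacle.} The subtlety is the regime $q_p = o(1)$, when one value $v^*$ has $f_{v^*}$ very close to $n$ and the variance-based bound is too weak to give multiplicative accuracy on the small quantity $n^p - F_p$. I would resolve this with a second branch: detect from the empirical sample distribution whether a single value occurs with empirical mass exceeding $1 - O(1/\sqrt s)$, and if so, estimate $n - f_{v^*}$ directly from the count of non-$v^*$ samples (attaining tighter relative accuracy than the $U$-statistic) and substitute into the leading-order expansion $n^p - F_p = p\, n^{p-1}(n - f_{v^*}) + O\bigl((n - f_{v^*})^2 n^{p-2}\bigr)$, whose correction terms are captured in the sample. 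Carefully combining the two branches and tracking how the sampling errors propagate through the leading-order identity yields the uniform $(1 \pm \gamma^{1/(p-1)})$-relative bound.
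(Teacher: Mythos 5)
Your first branch is sound: the $U$-statistic is unbiased for $F_p/n^p$, and indeed $\zeta_k = F_{2p-k}/n^{2p-k} - (F_p/n^p)^2 \le q_p$ for all $k$, so with $s = \tilde{O}(\gamma^{-2/(p-1)})$ samples you get relative error $O(1/\sqrt{s\,q_p})$, which meets the target only when $q_p = \Omega(1)$. The gap is that your second branch does not repair the regime $q_p = o(1)$, and cannot: both branches are computed from the same $s$ uniformly sampled coordinates, and the number of minority coordinates you observe is essentially $\mathrm{Binomial}(s, q)$ with $q = (n-f_{v^*})/n$. If $q \ll 1/s$ (e.g., $n - f_{v^*} = O(1)$ while $s$ is independent of $n$), the sample contains no minority coordinate at all with high probability, yet the theorem demands a $(1\pm\gamma^{1/(p-1)})$-relative approximation to $n^p - F_p \approx p\,n^{p-1}(n - f_{v^*})$, a quantity that differs by large factors between inputs your sketch cannot distinguish (say $n-f_{v^*}=1$ versus $10$). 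Even in the intermediate regime $1/s \ll q \ll 1$, the direct count of non-$v^*$ samples has relative error $\Theta(1/\sqrt{sq})$ — the same Binomial noise floor as the $U$-statistic, not ``tighter relative accuracy'' — which exceeds $\gamma^{1/(p-1)}$ once $q \ll 1$. So the failure is information-theoretic for uniform coordinate sampling with $s$ independent of $n$; no estimator or branch-combination on top of that sample can achieve the stated guarantee.

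The paper's proof uses different primitives precisely to escape this: it identifies the (candidate) majority value $b$, uses linearity to subtract $b$ from every coordinate, and then (i) queries an $L_0$ sketch of the shifted vector, whose $(1\pm\eps)$ guarantee is relative to the support size $n - f_b$ no matter how small that is, giving $f_b$ with additive error $\eps(n-f_b)$ rather than $\eps n$; and (ii) uses perfect $\ell_0$ samplers of the shifted vector, which sample uniformly among only the entries $\ne b$, so the remaining frequencies are estimated with additive error $O(\eps(n-f_b))$ and values of frequency below roughly $\gamma^{1/(p-1)}(n-f_1)$ can be safely ignored. Combined with the inequality $n^p - F_p \ge f_1^{p-1}(n - f_1)$, these errors translate into the claimed $(1\pm\gamma^{1/(p-1)})$ relative error with $\tilde{O}(\gamma^{-2/(p-1)})$ space. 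To fix your proposal you would need to replace plain coordinate sampling by such sparsity-aware linear sketches (or an equivalent mechanism whose accuracy scales with $n - f_1$); as written, the construction does not prove the theorem.
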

$n^p - F_p$ is the complement of the frequency moment of a dataset (with the error dependent on the complement), and we believe it to be of independent interest. 
Frequency moments have numerous applications. For example, $F_p$ for $p \geq 2$ can indicate the degree of the skew of data which is used in the selection of algorithms for data partitioning \cite{DNSS1992practical}, error estimation \cite{IP1995balancing}, and more. See \cite{AMS1999space} for a more in-depth discussion. There are also direct applications for the quantity $n^p - F_p$ such as our use of the sketch to instantiate \Cref{thm:subThm} solve general fingerprinting.

\begin{theorem}\label{thm:genThm}
Given a $n \times d$ matrix $\bA$ and $\eps \in (0,1)$, there exists a one-pass turnstile streaming algorithm using 
$\tilde{O}(dk^3/\eps^2)$ space and $\tilde{O}(k^3/\eps^2)$ update time   that outputs a $(1 - 1/e - \eps)$ relative approximation to general fingerprinting with probability at least $1-1/d$. 
\end{theorem}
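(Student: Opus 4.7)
The plan is to cast general fingerprinting as submodular maximization and invoke \Cref{thm:subThm}, using the $p=2$ case of the complement frequency moment sketch from \Cref{thm:genSketch} as the linear sketch required by the framework. Let $f(T):=|\{(i,i'):i<i',\ \exists j\in T,\ A_{ij}\neq A_{i'j}\}|$ for $T\subseteq[d]$; this is a coverage function over the universe of $\binom{n}{2}$ user pairs, hence monotone and submodular, and maximizing it subject to $|T|=k$ is exactly general fingerprinting. As the excerpt points out, directly sketching this coverage function over pairs is infeasible in sublinear space, so the real work is designing a linear sketch for $f$ itself.

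The key idea is to replace the per-user tuple signature $(A_{ij})_{j\in T}$ with a scalar fingerprint that is linear in $T$. I would draw i.i.d.\ coefficients $r_1,\dots,r_d$ uniformly from a finite field of size $\poly(n,d,k)$ and define $\mathbf{y}^{(T)}\in\mathbb{R}^n$ by $y^{(T)}_i:=\sum_{j\in T} r_j A_{ij}$. A Schwartz--Zippel / union-bound argument shows that, with probability $1-1/\poly(d)$, $y^{(T)}_i=y^{(T)}_{i'}$ iff $A_{ij}=A_{i'j}$ for every $j\in T$, simultaneously over every subset $T$ the framework queries (at most $O(kd)$ of them). Counting matching against non-matching pairs then gives the identity $f(T)=\tfrac{1}{2}\bigl(n^2-F_2(\mathbf{y}^{(T)})\bigr)$, placing us exactly in the regime of \Cref{thm:genSketch} with $p=2$.

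The linearity $\mathbf{y}^{(T)}=\sum_{j\in T} r_j\,\bA_{*,j}$ is what makes $f$ linearly sketchable. For each feature $j$ I would maintain one copy of the $p=2$ sketch of $r_j\,\bA_{*,j}$ with accuracy parameter $\gamma=\eps/k$ and failure probability $O(1/(\poly(d)k))$, the latter obtained by median-boosting across $\polylog(d)$ independent sketches. Summing the per-feature sketches over $T$ yields, by linearity, a sketch of $\mathbf{y}^{(T)}$, which on query returns a $(1\pm\eps/k)$-relative estimate of $n^2-F_2(\mathbf{y}^{(T)})=2f(T)$. An update $(i,j,\pm c)$ to $\bA$ becomes a single update of magnitude $\pm r_j c$ to column $j$'s sketch, and each per-feature sketch has size and update time $\tilde O(k^2/\eps^2)$, giving aggregate sketch size $s=\tilde O(dk^2/\eps^2)$. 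Plugging into \Cref{thm:subThm} then yields a $(1-1/e-\eps)$-relative approximation using $O(sk)=\tilde O(dk^3/\eps^2)$ total space and $\tilde O(k^3/\eps^2)$ update time, as claimed.

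The main obstacle I anticipate is guaranteeing that the fingerprint-collision bound and the $n^2-F_2$ sketch's estimate both remain correct \emph{uniformly} over every subset $T$ the greedy inside \Cref{thm:subThm} may query. Although there are exponentially many $T\subseteq[d]$ a priori, the greedy actually inspects only $O(kd)$ subsets, so a union bound over exactly these queries --- combined with enough slack in the field size of the $r_j$'s and in the boosted success probability of the sketch --- suffices; the resulting $\polylog$ factors in the failure probability are absorbed into the $\tilde O$ notation, preserving the stated space and update-time bounds.
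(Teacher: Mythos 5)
Your proposal is correct and takes essentially the same route as the paper: reduce each queried subset of columns to a single vector via a random linear combination (collision-free with high probability by a Schwartz--Zippel/union-bound argument over the $O(dk)$ greedy queries), note that the number of separated pairs equals $\tfrac{1}{2}\bigl(n^2 - F_2\bigr)$ of that vector, and apply the $p=2$ sketch of \Cref{thm:genSketch} with $\gamma=\eps/k$ inside the framework of \Cref{thm:subThm}. The only cosmetic difference is that you fix the coefficients $r_j$ and pre-scale each column's sketch at update time, whereas the paper keeps per-column sketches of $\bA$ and applies the random combining vector to the concatenated sketches at query time; by linearity these are equivalent.
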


\textbf{Experimental Results.}
We illustrate the practicality of our fingerprinting algorithms by running experiments on two different datasets. In a direct comparison with the implementations of \cite{GAC2016near}, our algorithms show significantly improved efficiency while retaining high comparative accuracy. Specifically we achieve a speedup of up to $49x$ and $210x$ while achieving high comparative accuracy for targeted and generalized fingerprinting, respectively. 

We also show that our general fingerprinting algorithm can serve as a dimensionality reduction technique and apply it in the context of feature selection for machine learning models. 
Since the time complexity of many clustering algorithms including $k$-means scales with the dimensionality of a dataset, we used our algorithm to select $x$ features that best separate the data. Then we ran $k$-means on these $x$ features instead of the full feature space, showing that we increase efficiency while sacrificing little in accuracy. 
We believe our techniques to be general and extendable to other clustering and machine learning algorithms outside of $k$-means.

\subsection{Road Map}
In \Cref{sec:prelim} we have our preliminaries, which include an exposition of the linear sketches we use. In \Cref{sec:maxCov} we present our maximum coverage algorithm and targeted fingerprinting corollary. In \Cref{sec:submod} we outline our submodular maximization framework. In \Cref{sec:Sketch} we present our linear sketch for $n^p - F_p$ for $p \geq 2$. In \Cref{sec:genFing} we give our general fingerprinting algorithm. Finally in \Cref{sec:exper} we present our experiments. 

\section{Preliminaries} \label{sec:prelim}

\textbf{Notation.} We denote $A_{ij}$ as the entry at the $i^{\text{th}}$ row and $j^{\text{th}}$ column of matrix $\bA$. $\tilde{O}(\cdot)$ notation suppresses logarithmic factors in its argument. In general we boldface vectors and matrices. 

\subsection{Linear Sketches} \label{prelim:ls}
Given a $n \times d$ matrix $\bA$, we can compress it while retaining essential information to solve the problem by multiplying it with a $r \times n$ linear sketching matrix $\bS$. A linear sketch is a matrix drawn from a certain family of random matrices independent of 
$\bA$. 
This independence ensures that 
$\bS$ can be generated without prior knowledge of the contents of 
$\bA$.
Linear sketches support insertions and deletions to the entries of 
$\bA$, as $\bS(\bA + c_{ij}) = \bS \bA + \bS c_{ij}$  holds for any update $c_{ij}$, which adds or subtracts $c$ from an entry of $\bA$.
This allows us to maintain 
$\bS \bA$ throughout the updates without storing 
$\bA$ itself. Furthermore, 
$\bS$ is typically stored in an implicit, pseudorandom form (e.g., via hash functions) rather than explicitly, enabling efficient sketching of updates $c_{ij}$. 
The primary focus is on minimizing the space requirement of a linear sketch, specifically ensuring that the sketching dimension $r$ is sublinear in $n$ and ideally much smaller. Another important metric is the update time.
Update time refers to the time complexity required for the sketch to process an update.

\subsubsection{CountSketch.}\label{prelim:CS}
Next we review the Count-Sketch algorithm for frequency estimation from Charikar, Chen, and Farach{-}Colton~\cite{CCF2004finding}. Consider an underlying $n$-dimensional vector $\bx$. The algorithm maintains a $q \times B$ matrix $\bC$ where $B$ is the number of buckets or the sketching dimension. It keeps $q$ distinct hash functions $h_i: [n] \to [B]$ and $q$ sign functions $g_i: [n] \to \{-1, 1\}$. The algorithm maintains $\bC$ such that $C[\ell, b] = \sum_{j: h_\ell(j) = b} g_{\ell} (j) \cdot x_j$. The frequency estimation $\hat{x}_i$ of $x_i$ is defined to be the median of $\{g_\ell(i) \cdot C[\ell, h_\ell(i)]\}_{\ell \le q}$. Formally, when $q = O(\log n)$, we have with probability at least $1 - 1/\poly(n)$, $|\hat{x}_i - x_i| \le O\left(\frac{\|\bx_{-B/4}\|_2}{\sqrt{B}}\right)$ simultaneously for all $i \in [n]$. Here $\bx_{-B/4}$ denotes vector $\bx$ with its top $B/4$ entries (by magnitude) zeroed out. If $\bx$ has only $B/4$ nonzero entries, with a Countsketch using space $\tilde{O}(B)$ we can recover all $B/4$ entries/frequencies exactly. The update time is $O(\log n)$. 

\subsubsection{$L_0$ Sketch.}
\label{prelim:lSk}
Consider an underlying vector $\bx = (x_1,\ldots,x_n)$ where all entries are initially set to $0$. We receive $m$ updates of the form $(i, v) \in [n] \times \{-M,..., M\}$ in a stream where the update performs $x_i \leftarrow x_i +v$. At the end of the stream, the goal is to output a $(1 \pm \eps)$ relative approximation of $L_0$ with probability at least $1-\delta$ where $L_0 = |\{ i: x_i \neq 0\}|$. Kane, Nelson, and Woodruff~\cite{KNW2010optimal} give a $L_0$ sketch with $O(1)$ update 
time that uses $O(\eps^{-2} \log n (\log (1/\eps) + \log \log (mM)) \cdot \log(1/\delta))$ memory.

\subsubsection{Perfect $\ell_0$ Samplers.}
Consider an underlying vector $\bx = (x_1, \ldots, x_n)$. Let  $\text{Supp}(\bx)$ be the set of nonzero elements of $\bx$. A perfect $\ell_0$ sampler, with probability $1-\delta$, returns a tuple $(i, x_i)$ for $x_i \in \text{Supp}(\bx)$ such that $\Pr[i = j] = \frac{1}{\|\bx\|_0} \pm n^{-c} $ for every $j$ such that $x_j \in \text{Supp}(\bx)$ for large constant $c$. Note that it returns the value of $x_i$ exactly with no error. With probability $\delta$, the sampler outputs FAIL. 

An $\ell_0$ sampler is a linear sketch and accommodates both insertions and deletions to the underlying vector $\bx$. 
The parameter $n^{-c}$ can be made arbitrarily small by increasing constant $c$, 
effectively making the sampling process indistinguishable from perfect uniform random sampling of nonzero entries on polynomial length streams. Importantly, increasing 
$c$ incurs only constant factors in space usage.
Jowhari, Saglam, and Tardos~\cite{JST2011tight} give a $\ell_0$ sampler that uses $O(\log^2 n \log(1/\delta))$ bits of space. By inspecting Theorem $2$ of \cite{JST2011tight} and using appropriate sparse recovery schemes we can see that the update time is $\poly(\log n) \cdot \log(1/\delta)$.

\subsection{Streaming Model} \label{appen:streaming}
In this paper, we represent the input as a $n \times d$ matrix $\bA$. In the streaming model, it is standard to initialize all the entries to zero before the stream of updates. The algorithm then processes a stream of updates which come one-by-one, each of the form  $(i, j, \pm c)$ for some $c$. This modifies entry $A_{ij}$ by performing $A_{ij} = A_{ij} + c$ or $A_{ij} = A_{ij} - c$ depending on the sign. We make the standard assumption that each $c$ is at most $\poly(n)$. 

When the updates can only be positive, this is the insertion-only streaming model. When updates can be both positive and negative this is referred to as the turnstile streaming model. 
The updates can appear in any arbitrary order in the stream, and we make the standard assumption that the length of the stream is at most $\poly(n)$. The goal of the streaming algorithm is to process the stream efficiently, using sublinear space in the size of the input matrix $\bA$ (and therefore it cannot store all the updates) and a small constant number of passes over the stream. In this work, we restrict our focus to one-pass algorithms. At the end of the stream, the algorithm can do some post-processing and then must output the answer.  While streaming algorithms are not required to maintain a stored answer at every point during the stream, there is no restriction on when the stream may terminate. 
Any time or space used before or after processing the stream is attributed to pre-processing or post-processing, respectively. Generally, our primary focus is on optimizing the memory usage and update time during the stream. Here the update time is the time complexity required by the algorithm to process an update. 

\subsection{Useful Definitions}

\textbf{Moment Estimation.}\label{prelim:momentE}
Consider an underlying vector $\bx = (x_1, \ldots, x_n)$. For all $i \in [n]$, $x_i \in [m]$. Let $f_i = |\{j: x_j = i\}|$ be the number of occurrences of value $i$ in $\bx$. We define the $p^{\text{th}}$ frequency moment of $\bx$ as 
$F_p \stackrel{\text{def}}{=}  \sum_{i=1}^m f_i^p$ for $p \geq 0$. 

\paragraph{Monotone Submodular Maximization.} \label{prelim:submod}
Consider a non-negative set function $f : 2^V \rightarrow \mathbb{R}_{+}$. If for all $S \subseteq T \subseteq V \setminus \{e\}$, $f$ satisfies: 
$f(S \cup \{e\}) - f(S) \geq f(T \cup \{ e\}) - f(T), $
then $f$ is submodular. We assume that $f(\emptyset) = 0$. If $f(S) \leq f(T)$ for all $S \subseteq T$, then $f$ is also monotone. 

\paragraph{Linearly Sketchable Functions.} \label{prelim:LSF}
All the functions $f : 2^d \rightarrow \mathbb{R}_{+}$ that we consider are of the form $f(\mathcal{C}) = g(\{\ba_i\}_{i \in \mathcal{C}})$ where $\ba_1, \ldots, \ba_d$ are 
a set of vectors that are either fixed in advance or are the columns of the $n \times d$ matrix $\bA$ that is being updated in the stream. We say that a function $f$ is ``linearly sketchable'' if there exists a randomized sketching matrix $\bS$ and a corresponding function $g_{\bS}$ such that, for any vectors $\ba_1, \ldots, \ba_d$, with high probability for all $\mathcal{C} \subseteq [d]$, $f(\mathcal{C})$
can be approximated by $g_{\bS}(\{\bS \cdot \ba_i\}_{i \in \mathcal{C}})$. 

\paragraph{Markov's Inequality.}
If $X$ is a nonnegative random variable and $a > 0$, then \[
\Pr(X \geq a) \leq \frac{E[X]}{a}.\] 

\paragraph{Chebyshev's Inequality.}
For any random variable $X$ and $t > 0$. 
\[
\Pr(|X - \E[X]| \geq t) \leq \frac{\textnormal{Var}[X]}{t^2}. 
\]

\section{Maximum Coverage Algorithm} \label{sec:maxCov}
 
We now present our sketch, \hyperref[alg:secMC]{Max-Coverage-LS} (\Cref{alg:secMC}), to prove \Cref{thm:max}. Recall that the input is formalized as a $n \times d$ matrix $\bA$, where entry $A_{ij}$ is nonzero if $i$ is in subset $j$, and $0$ otherwise. Our approach uses Algorithm $1$ from \cite{BEM2017almost} as a starting point.  \cite{BEM2017almost} reduces the original input matrix $\bA$ to a smaller universe $\bA_*$ by carefully sampling a subset of its nonzero entries. They then show that running the classical greedy algorithm on this smaller universe yields a $(1-1/e-\eps)$-relative approximation for the maximum coverage problem on $\bA$.

The plan for this section is as follows.  
We will first introduce the smaller universe $\bA_*$, describing its properties and role in the problem. Next we will show how we construct $\bA_*$ under the assumption that the entire input matrix $\bA$ in its final state is fully accessible. The construction is done in a careful way so that it will be easy to turn it into a linear sketch. Finally, we will remove the assumption that $\bA$ is fully accessible giving us our final algorithm which will efficiently handle updates. 

In the following, note that rows of $\bA$ and items of $\bA$ refer to the same thing. Constructing $\bA_*$ involves permuting the rows of $\bA$ and processing them in the order determined by the permutation. For each row $i$ in $\bA$, 
a subset of $\tilde{O}(d/(\eps k))$ nonzero entries is arbitrarily selected and added to $\bA_*$. This process continues until $\bA_*$ contains $\tilde{O}(d/\eps^3)$ nonzero entries in total. So, $\bA_*$ is a carefully subsampled version of $\bA$, where only $\tilde{O}(d/\eps^3)$ of the nonzero entries are retained while the rest are set to $0$.
We restate their algorithm \hyperref[alg:sketchH]{$\bA_*(k, \eps, \delta)$} (\Cref{alg:sketchH}). In \cite{BEM2017almost} this subsampled matrix is referred to as $H_{\leq d}$. 

\begin{algorithm}[ht]
\caption{$\bA_*(k, \eps, \delta)$}
\label{alg:sketchH}
\begin{algorithmic}[1]
    \REQUIRE $k, \eps \in (0, 1]$, and $\delta$. 
    \ENSURE $\bA_*(k, \eps, \delta)$. 
    \STATE Let $\delta' = \delta \log \log_{1-\eps} n$. 
    \STATE Let $h$ be an arbitrary hash function that uniformly and independently maps each item (or each row of $\bA$) to $[0, 1]$.
    \STATE Initialize $\bA_*(k, \eps, \delta)$.
    \WHILE{number of nonzero entries in  $\bA_*(k, \eps, \delta)$ is less than $\frac{24d\delta' \log(1/\eps)\log d}{(1-\eps)\eps^3}$}
        \STATE Pick item $i$ of minimum $h(i)$ that has not been considered yet.  
        \IF{there are less than $\frac{d \log (1/\eps) }{\eps k}$ nonzero entries in the $i^{\text{th}}$ row of $\bA$}
            \STATE Add all the nonzero entries from the $i^{\text{th}}$ row of $\bA$ to $\bA_*(k, \eps, \delta)$. 
        \ELSE 
            \STATE Add $\frac{d \log (1/\eps)}{\eps k}$ of the nonzero entries of $\bA$, chosen arbitrarily, to $\bA_*(k, \eps, \delta)$. 
        \ENDIF
    \ENDWHILE 
\end{algorithmic}
\end{algorithm}

\cite{BEM2017almost} proves that solving the maximum coverage problem on 
$\bA_*$ with a $\alpha$-relative approximation guarantees a $(\alpha - \eps)$-relative approximation on the original matrix $\bA$ 
with high probability. The final $(1-1/e-\eps)$-relative approximation is achieved using \hyperref[alg:kcoverBat]{k-cover} (\Cref{alg:kcoverBat}), which sets appropriate parameters and applies the greedy algorithm (or any $(1-1/e)$ approximation algorithm) to $\bA_*$. 

\begin{algorithm}[ht]
\caption{$k$-cover}
\label{alg:kcoverBat}
\begin{algorithmic}[1]
    \REQUIRE $k$ and $\eps \in [0,1]$. 
    \ENSURE A $(1-1/e - \eps)$ approximate solution to maximum coverage with probability $1-1/d$. 
    \STATE Set $\delta = 2 + \log d$ and $\eps' = \eps/12$. 
    \STATE Construct $\bA_*(k, \eps', \delta)$. 
    \STATE Run the greedy algorithm (or any $1-1/e$ approximation algorithm) on $\bA_*(k, \eps', \delta)$ and report the output.
\end{algorithmic}
\end{algorithm}

\begin{theorem}[Theorem $2.7$ and $3.1$ of \cite{BEM2017almost}]
Running \hyperref[alg:kcoverBat]{$k$-cover} with $\bA_*$ produces a $(1-1/e-\eps)$ approximate solution to maximum coverage with probability $1-1/d$. 
\end{theorem}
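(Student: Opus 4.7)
The plan is to invoke the sampling analysis of Bateni, Esfandiari, and Mirrokni~\cite{BEM2017almost} in two stages. First I would show that $\bA_*$ is a faithful coverage sketch of $\bA$ for every candidate $k$-subset, and second I would apply the classical $(1-1/e)$ analysis of greedy on $\bA_*$ and transfer the guarantee back to $\bA$.

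For the first stage I would exploit the fact that the hash function $h$ induces a uniform random permutation of the items (rows of $\bA$). The while loop in \hyperref[alg:sketchH]{$\bA_*(k,\eps,\delta)$} adds items in permutation order until the total count of retained nonzero entries reaches the budget $B = \Theta(d\delta'\log(1/\eps)\log d/\eps^3)$. Conditioning on the random stopping time, the retained items form a uniform subsample of rate $p := (\text{number of retained rows})/n$, so for any fixed $k$-subset $\mathcal{C}$ the quantity $\text{cov}_{\bA_*}(\mathcal{C})$ (before capping) is a sum of independent indicators with mean $p \cdot \text{cov}_\bA(\mathcal{C})$. A Chernoff bound then gives
\[
\text{cov}_{\bA_*}(\mathcal{C}) \in (1 \pm \eps) \cdot p \cdot \text{cov}_\bA(\mathcal{C})
\]
with failure probability $\exp(-\Omega(\eps^2 p \cdot \text{cov}_\bA(\mathcal{C})))$, modulo the per-row cap discussed next.

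The per-row cap at $d\log(1/\eps)/(\eps k)$ entries only affects ``heavy'' items (those incident to at least that many subsets). For such an item, if it is covered by $\mathcal{C}$ then $\mathcal{C}$ must share a nonzero with the arbitrarily-chosen capped entries in order to be detected in $\bA_*$. I would argue, following \cite{BEM2017almost}, that the cap is large enough relative to $k$ and the $\log(1/\eps)$ slack that the expected number of missed heavy items summed over a $k$-subset is at most $\eps \cdot \mathbf{OPT}$. Combining this with the Chernoff bound yields $\text{cov}_{\bA_*}(\mathcal{C}) = (1 \pm O(\eps)) \cdot p \cdot \text{cov}_\bA(\mathcal{C})$ for each fixed $\mathcal{C}$.

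The hard part is making this bound hold uniformly over every solution that greedy might select. A naive union bound over all $\binom{d}{k}$ subsets is wasteful; instead one unions only over the $d \cdot k$ marginal-gain queries that greedy actually issues, applying a careful inductive argument across the $k$ rounds so that only a $\log d$ rather than $k \log d$ factor enters the sample complexity. This is what drives the $d/\eps^3$ budget and the parameter choices $\delta' = \delta \log\log_{1-\eps}(n)$ and $\delta = 2 + \log d$ in \hyperref[alg:kcoverBat]{$k$-cover}. Once uniform approximation holds, running greedy on $\bA_*$ returns a $(1-1/e)$-approximation to $\mathbf{OPT}(\bA_*)$; dividing by $p$ and applying the uniform bound in the reverse direction transfers this to a $(1-1/e - O(\eps))$-approximation to $\mathbf{OPT}(\bA)$. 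The rescaling $\eps' = \eps/12$ inside the algorithm absorbs the constant factors, and the choice $\delta = 2 + \log d$ boosts the overall failure probability to at most $1/d$, yielding the claimed guarantee.
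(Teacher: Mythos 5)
Note first that the paper does not prove this statement at all: it is imported verbatim as Theorems 2.7 and 3.1 of \cite{BEM2017almost}, so your reconstruction is necessarily its own route, and it has a genuine flaw at its core. The claim that for each fixed $k$-subset $\mathcal{C}$ one gets $\mathrm{cov}_{\bA_*}(\mathcal{C}) = (1\pm O(\eps))\, p\cdot \mathrm{cov}_{\bA}(\mathcal{C})$ is false, because the per-row cap retains $d\log(1/\eps)/(\eps k)$ nonzero entries \emph{chosen arbitrarily}: a capped (heavy) item that $\mathcal{C}$ covers may retain no entry in any column of $\mathcal{C}$, so a fixed $\mathcal{C}$ can lose essentially all capped items (e.g.\ if every item lies in every subset and the retained entries all avoid $\mathcal{C}$, the sketch value is $0$ while the true coverage is $n$). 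Your ``expected number of missed heavy items'' has no randomness to average over, since the capping is adversarial. The argument in \cite{BEM2017almost} is asymmetric rather than two-sided: dropping entries can only \emph{decrease} coverage as seen in $\bA_*$, so no solution is over-valued beyond the item-sampling fluctuation, and the specific cap $d\log(1/\eps)/(\eps k)$ is used to show the optimum is not under-valued --- each capped item keeps entries in that many distinct subsets, so by averaging some single subset covers a $\log(1/\eps)/(\eps k)$ fraction of the remaining capped items, and $k$ greedy picks inside $\bA_*$ cover all but roughly a $(1-\log(1/\eps)/(\eps k))^k \le \eps$ fraction of them; hence $\mathbf{OPT}(\bA_*) \ge (1-O(\eps))\,p\cdot\mathbf{OPT}(\bA)$ even though individual solutions are not preserved, and the transfer back uses only the one-sided bounds.

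Two further problems. Union bounding only over the $dk$ marginal-gain queries greedy actually issues is not legitimate as stated: those queries are adaptive, i.e.\ they depend on the realized sample, so they cannot be fixed in advance without a substantial additional argument. The analysis of \cite{BEM2017almost} union bounds over all $\binom{d}{k}\approx d^{k}$ candidate solutions, and it is precisely this that forces the subsampled optimum to be $\Theta(k\log d/\eps^{2})$ --- the $k\log d$ factor is not avoided, contrary to your claim --- which, multiplied by the per-item cap $d\log(1/\eps)/(\eps k)$, is what produces the $\tilde{O}(d/\eps^{3})$ entry budget. Finally, conditioning on the stopping time of the while loop destroys the independence your Chernoff bound assumes; the concentration should be argued at fixed hash thresholds (geometric sampling levels), which is exactly how the streaming implementation in \Cref{alg:secMC} sidesteps the issue.
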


We now present \hyperref[alg:buildA]{building-$\bA_*$} (\Cref{alg:buildA}) where we assume 
we are given complete access to $\bA$ in its final state along with linear space. Afterwards, we will show how to turn this into a linear sketch which will allow us to efficiently handle updates in sublinear space. 

\begin{algorithm*}[t]
\caption{building-$\bA_*$ ($n \times d$ matrix $\bA$, $\eps \in (0,1)$, $k$)}
\label{alg:buildA}
\begin{algorithmic}[1]
    \STATE Set $\delta = (2 + \log d) \log \log_{1-\eps}n$.
    \STATE Set $\eps = \eps / 8.$
    \STATE Subsample rows from $\bA$ to get $\bA'$ such that $\mathbf{OPT}$ in $\bA'$ is $O(k \log d/\eps^2)$. For clarity, row $j$ in $\bA'$ and $\bA$ both correspond to the row vector that corresponds to item $j$.  
    \STATE Set $b = O(\frac{k \log d}{\eps^2})$. 
    \STATE Set $t = O(\log \frac{d}{\eps})$. 
    \FOR{$i = 1, \ldots, t$}
        \STATE Use a hash function to hash each row of $\bA'$ to $b$ buckets in structure $\mathcal{C}_i$.
        \FOR{each bucket in $\mathcal{C}_{i}$}
                \STATE If there are $r$ rows hashed to the bucket, denote the $r$ rows concatenated into a vector of length $rd$ as $\bv$.
                \STATE Randomly sample $O(\frac{d \log(1/\eps)}{\eps k})$ nonzero entries from $\bv$ and store it in $\bA'_i$. 
        \ENDFOR
    \ENDFOR 
    \STATE Initialize $\bA_* (k, \eps)$ as a $n \times d$ matrix with all entries initially set to $0$. 
    \STATE Let $\mathcal{P}$ be a random permutation of the rows that are in $\bA'$. 
    \WHILE{the number of nonzero entries in $\bA_*(k,\eps)$ is less than $\frac{24d\delta' \log(1/\eps)\log d}{(1-\eps)\eps^3}$}
        \STATE Process the row $j$ that comes next in $\mathcal{P}$. 
        \STATE Determine among all $i \in [t]$ which $\bA'_i$ has the most nonzero entries from row $j$. Take this $i$ to be $z$. 
        \IF{row $j$ has less than $\frac{d \log (1/\eps)}{\eps k}$ nonzero entries in $\bA'_z$}
            \STATE Add all of the nonzero entries from row $j$ in $\bA'_z$ to $\bA_*(k,\eps)$. 
            \ELSE 
                \STATE Add $\frac{d \log (1/\eps)}{\eps k}$ of the nonzero entries from row $j$ in $\bA'_z$, chosen arbitrarily, to $\bA_*(k,\eps)$. 
        \ENDIF
    \ENDWHILE 
\end{algorithmic}
\end{algorithm*}

We now prove that \hyperref[alg:buildA]{building-$\bA_*$} correctly builds $\bA_*$ with high probability. Recall that we want $\tilde{O}(d/(\eps k))$ nonzero entries per row until we reach $\tilde{O}(d/\eps^3)$ of them. At a high level, our proof goes as follows. In line $3$ we subsample down to a smaller universe $\bA'$ which only causes us to lose an $\eps$ factor in our approximation. Now in this smaller universe, in line $7$ we hash the rows of $\bA'$ to a bunch of buckets. In lines $8-11$, from each bucket, we keep a capped number of nonzero entries and store them. Then starting at line $13$, we use all our stored nonzero entries to create $\bA_*$. We will show that the rows of $\bA'$ are sufficiently spread out among the buckets so that for each row in $\bA'$ which has nonzero entries, we keep $\tilde{O}(d/(\eps k))$ of its nonzero entries. We now present our formal proof.
\begin{lemma}
\label{cl:AMV}
     Obtaining an $(1-1/e)$ approximate solution to maximum coverage on $\bA'$ is an $(1-1/e- \eps/4)$ approximation solution on $\bA$ with probability at least $1-1/\poly(d)$. 
\end{lemma}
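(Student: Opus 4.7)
The plan is to show that uniform row subsampling at an appropriate rate $p$ preserves, multiplicatively, the coverage of every $k$-subset of columns, and then to chain the concentration inequalities to convert a $(1-1/e)$-approximation on $\bA'$ into a $(1-1/e-\eps/4)$-approximation on $\bA$. First I would set up the model: each row (item) of $\bA$ is kept independently with probability $p$ to form $\bA'$, where $p$ is chosen so that $p\cdot \mathbf{OPT}_{\bA} = \Theta(k \log d / \eps^2)$. This is precisely the regime in which line~3 of \hyperref[alg:buildA]{building-$\bA_*$} places us, since $\mathbf{OPT}_{\bA'}$ concentrates around $p \cdot \mathbf{OPT}_{\bA}$.

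Next, for a fixed $k$-subset $\mathcal{C} \subseteq [d]$, the coverage on $\bA'$ is a sum of independent indicator random variables (one per item in $\mathrm{cov}_{\bA}(\mathcal{C})$), so $\E[\mathrm{cov}_{\bA'}(\mathcal{C})] = p \cdot \mathrm{cov}_{\bA}(\mathcal{C})$. A Chernoff bound yields
\[
\Pr\bigl[\,|\mathrm{cov}_{\bA'}(\mathcal{C}) - p\cdot \mathrm{cov}_{\bA}(\mathcal{C})| > (\eps/16)\cdot p\cdot \mathbf{OPT}_{\bA}\,\bigr] \le \exp\bigl(-\Omega(\eps^2 \cdot p \cdot \mathbf{OPT}_{\bA})\bigr) = d^{-\Omega(k)}.
\]
Union-bounding over all at most $\binom{d}{k} \le d^k$ choices of $\mathcal{C}$ gives, with probability $1 - 1/\poly(d)$, the event $\mathcal{E}$ that $|\mathrm{cov}_{\bA'}(\mathcal{C}) - p\cdot \mathrm{cov}_{\bA}(\mathcal{C})| \le (\eps/16)\cdot p\cdot \mathbf{OPT}_{\bA}$ holds simultaneously for every $k$-subset $\mathcal{C}$.

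Conditioned on $\mathcal{E}$, let $S^\star$ be optimal on $\bA$ and $\hat S$ be any $(1-1/e)$-approximate solution on $\bA'$. Applying the concentration to both $S^\star$ and $\hat S$:
\[
\mathrm{cov}_{\bA'}(\hat S) \ge \tfrac{e-1}{e}\,\mathrm{cov}_{\bA'}(S^\star) \ge \tfrac{e-1}{e}\bigl(1-\tfrac{\eps}{16}\bigr)\,p\,\mathbf{OPT}_{\bA},
\]
and then $\mathrm{cov}_{\bA}(\hat S) \ge \tfrac{1}{p}\,\mathrm{cov}_{\bA'}(\hat S) - \tfrac{\eps}{16}\,\mathbf{OPT}_{\bA} \ge \bigl(\tfrac{e-1}{e} - \tfrac{\eps}{4}\bigr)\,\mathbf{OPT}_{\bA}$ after bounding the $\eps$ cross-terms. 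This is the desired $(1-1/e-\eps/4)$-relative approximation.

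The main obstacle is twofold. First, one has to justify the union bound being taken over an arbitrary $(1-1/e)$-approximate solution $\hat S$ rather than over a specific deterministic set: because $\mathcal{E}$ holds uniformly over all $k$-subsets, any set output by the downstream greedy algorithm is automatically covered, so this works. Second, the subsampling rate $p$ depends on the unknown quantity $\mathbf{OPT}_{\bA}$; this is standard to handle by running the construction in parallel over $O(\log n)$ geometrically spaced guesses of $\mathbf{OPT}_{\bA}$ and reporting the best output, and incurs only a logarithmic overhead that is absorbed into the $\tilde O(\cdot)$. The arithmetic of choosing constants so that $(1-1/e)(1-\eps/16) - \eps/16 \ge 1-1/e - \eps/4$ is routine.
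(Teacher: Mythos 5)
Your proof is correct, but it takes a different route from the paper: the paper disposes of this lemma in two lines by citing Corollary~9 of McGregor and Vu \cite{MV2019better} (which proves exactly that subsampling the universe down to $\mathbf{OPT}_{\bA'} = O(k\log d/\eps^2)$ costs only an additive $O(\eps)$ in the approximation, with the paper re-weighting $\eps$), whereas you re-derive that subsampling guarantee from scratch via a Chernoff bound for each fixed $k$-subset of columns and a union bound over the at most $d^k$ such subsets, then chain the multiplicative/additive errors. Your argument is essentially the standard proof behind the cited corollary, and your two "obstacle" remarks are well placed: the uniform-over-all-$k$-subsets event is exactly what licenses applying the bound to the adaptively chosen greedy output, and the unknown sampling rate is indeed handled in the paper not inside this lemma but at the algorithmic level, by running $\log n$ geometric guesses in parallel in \hyperref[alg:secMC]{Max-Coverage-LS} and selecting the best output with the $L_0$ sketches. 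What your version buys is a self-contained argument with explicit constants and failure probability $d^{-\Omega(k)}$ before the union bound; what the paper's citation buys is brevity and consistency with the exact parameterization of \cite{MV2019better}. The only caveat worth noting is that your Chernoff step assumes fully independent row sampling, while the implemented sketch subsamples rows via hash functions, so one either assumes sufficient independence of the hash family or defers (as the paper does) to the analysis in \cite{MV2019better}; this is a modeling point rather than a gap in the mathematics.
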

\begin{proof}
     This states that we only lose a $\eps/4$ factor by reducing to a smaller universe via subsampling such that $\mathbf{OPT} = O(k \log d / \eps^2)$. This is proven in \cite{MV2019better} in Corollary $9$. Note that in \cite{MV2019better} they prove a $(1-1/e)$ approximate solution on $\bA'$ is an $(1-1/e - 2 \eps)$-approximation solution on $\bA$ but we re-weigh $\eps$ in our algorithm. 
\end{proof}

We denote rows of $\bA'$ that have at least $d/k$ nonzero entries as ``large" and the others as ``small". We argue that the number of large items and the total number of nonzero entries among small items is bounded appropriately. 

\begin{lemma}
\label{lem:largeEl}
     There are at most $O(k \log d / \eps^2)$ large items in $\bA'$. 
\end{lemma}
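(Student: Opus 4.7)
The plan is to upper bound the number of large items in $\bA'$ by relating it to $\mathbf{OPT}$ on $\bA'$, which is $O(k \log d / \eps^2)$ by the subsampling performed in line $3$ of \hyperref[alg:buildA]{building-$\bA_*$}. Let $L$ denote the number of large items, i.e., rows of $\bA'$ with at least $d/k$ nonzero entries. I would show that $\mathbf{OPT}$ on $\bA'$ is at least $(1 - 1/e) L$, from which the lemma follows immediately by rearranging.

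To establish this, I would apply a one-line probabilistic argument: pick a uniformly random $k$-subset $\mathcal{C} \subseteq [d]$ of columns. For any large item $i$, the probability that $\mathcal{C}$ misses all of the (at least) $d/k$ columns where row $i$ is nonzero is at most
\[
\frac{\binom{d - d/k}{k}}{\binom{d}{k}} \;=\; \prod_{j=0}^{k-1}\frac{d - d/k - j}{d - j} \;\leq\; \left(1 - \frac{1}{k}\right)^k \;\leq\; \frac{1}{e},
\]
where the termwise bound holds because $(d - d/k - j)/(d - j) \leq 1 - 1/k$ rearranges to $j/k \geq 0$. Thus each large item is covered by $\mathcal{C}$ with probability at least $1 - 1/e$, so by linearity of expectation the expected number of large items covered by $\mathcal{C}$ is at least $(1 - 1/e) L$. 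In particular, there exists a deterministic choice of $k$ columns covering at least $(1 - 1/e) L$ items, giving $\mathbf{OPT} \geq (1 - 1/e) L$.

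Combining $\mathbf{OPT} = O(k \log d / \eps^2)$ with $L \leq \frac{e}{e-1}\,\mathbf{OPT}$ yields the claimed bound. The argument is a direct probabilistic-method calculation and I do not anticipate any real obstacle; the only mild subtleties are the termwise hypergeometric comparison above and the edge case where $d/k$ is not an integer (handled by replacing it with $\lceil d/k \rceil$, which only strengthens the ``large'' hypothesis) or where $k \geq d$ (in which case $\mathbf{OPT}$ trivially covers every row and $L \leq n \leq \mathbf{OPT}$).
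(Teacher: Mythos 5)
Your proof is correct, but it takes a genuinely different route from the paper. The paper argues by contradiction with an iterative, greedy-style process: it assumes there are $C_2 \cdot k\log d/\eps^2$ large items, picks $k$ columns one at a time while removing the large items they cover, and uses an averaging argument (each surviving large item contributes at least $d/k$ nonzero entries, so some column must cover at least a $1/k$ fraction of the survivors) to conclude that $k$ columns could cover more than $\mathbf{OPT} = O(k\log d/\eps^2)$ items, a contradiction; this mirrors the machinery it reuses for the companion bound on small items (\Cref{lem:smallE}). You instead give a one-shot probabilistic-method argument: a uniformly random $k$-subset of columns covers each large item with probability at least $1-1/e$ (via the hypergeometric termwise bound, which you verify correctly), so $\mathbf{OPT} \geq (1-1/e)L$ and hence $L \leq \tfrac{e}{e-1}\mathbf{OPT} = O(k\log d/\eps^2)$. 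Your version is shorter, yields an explicit constant rather than an unspecified $C_2 > 2C_1$, and handles the integrality and $k \geq d$ edge cases cleanly (one nitpick: in the $k \geq d$ case the chain $L \leq n \leq \mathbf{OPT}$ is not quite right when all-zero rows exist, but $L \leq \mathbf{OPT}$ holds directly since choosing all $d$ columns covers every large item). The paper's approach buys a template that also proves the small-items lemma, whereas yours is a cleaner standalone argument for this particular statement.
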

\begin{proof}
Suppose that there are $\ell$ large items. Take $\mathcal{S}$ to be the set of these large items. Take $\mathcal{P}$ to be the set of the $d$ input subsets.  Now we will conduct the following process. First, we choose a subset $p_1 \in \mathcal{P}$. Suppose that $p_1$ covers $c_1$ large items from $\mathcal{S}$. Recall that ``covers" means that those items are in subset $p_1$. Now remove $p_1$ from $\mathcal{P}$ and the $c_1$ large items it covered from $\mathcal{S}$. Now let us choose another subset $p_2 \in \mathcal{P}$. Suppose that $p_2$ covers $c_2$ of the (remaining) items in $\mathcal{S}$. Again remove $p_2$ from $\mathcal{P}$ and the $c_2$ large items it covered from $\mathcal{S}$. Continue this process for a total of $k$ times. 

Since we know that $\mathbf{OPT} =C_1 \cdot k \log d / \eps^2$ for some constant $C_1$, it must be that  $c_1 + c_2 + \cdots +c_k \leq  C_1 \cdot k\log d/\eps^2$. 
Now, suppose for the sake of contradiction that $\ell =  C_2 \cdot k\log d /\eps^2$ for some constant $C_2$. Then, 
   \[C_2 \cdot k\log d / \eps^2  -c_1 - \cdots -c_k > C_2 \cdot k\log d/\eps^2 - C_1 \cdot k \log d / \eps^2 > C_2 \cdot k\log d/(2\eps^2) \]
   for $C_2 > 2 C_1$. 

So, at each step of the above process, there were at least $C_2 \cdot k\log d/(2\eps^2)$ large items left in $\mathcal{S}$ and therefore at least $C_2d \log d / (2\eps^2)$ nonzero entries among those large items. So, at the end of the process we could have covered $k \cdot C_2 \log d / (2\eps^2)$ large items. But then we would have 
\[
\mathbf{OPT} \geq  C_2/2 \cdot k \log d / \eps^2 > C_1 \cdot k \log d/\eps^2\] 
which is a contradiction. 
\end{proof}

\begin{lemma}
\label{lem:smallE}
     There are $O(\frac{d \log d}{\eps^2})$ total nonzero entries among small items in $\bA'$.  
\end{lemma}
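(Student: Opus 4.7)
The plan is to mirror the greedy pigeonhole argument used in Lemma~\ref{lem:largeEl}, now applied to the ``small'' side of the dichotomy. Let $S$ be the total number of nonzero entries among small items in $\bA'$, and let $\mathbf{OPT} \leq C_1 k\log d/\eps^2$ be the optimum coverage value in $\bA'$ guaranteed by the subsampling in line~3 of \hyperref[alg:buildA]{building-$\bA_*$} (cf.\ Lemma~\ref{cl:AMV}). Suppose for contradiction that $S \geq C_2 d\log d/\eps^2$ for a constant $C_2 > 2C_1$ to be fixed at the end.

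Run the following restricted greedy process: for $j = 1,\ldots,k$, pick a remaining subset $p_j$ that covers the maximum number of currently uncovered small items, let $c_j$ be the number of such items, and remove both $p_j$ and those $c_j$ small items from consideration. Let $T_j = c_1 + \cdots + c_j$, and let $S_j$ denote the number of (small item, subset) incidences that remain at the start of round $j$, so $S_1 = S$. Since $p_1,\ldots,p_k$ form a feasible solution on $\bA'$, we have $T_k \leq \mathbf{OPT}$. Two observations drive the argument: (i) by pigeonhole over the at most $d$ remaining subsets, $c_j \geq S_j / d$; and (ii) by the definition of ``small'' (each small item has fewer than $d/k$ nonzero entries), removing $c_j$ small items destroys at most $c_j \cdot d/k$ incidences, so $S_{j+1} \geq S_j - c_j \cdot d/k$, which telescopes to $S_j \geq S - (d/k) T_{j-1}$.

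Combining (i) and (ii), as long as $T_{j-1} \leq Sk/(2d)$ we have $S_j \geq S/2$ and hence $c_j \geq S/(2d)$. Therefore either $T_j$ first exceeds $Sk/(2d)$ at some round $j \leq k$, or we add at least $S/(2d)$ in every round and reach $T_k \geq k \cdot S/(2d) = Sk/(2d)$; in either case $T_k \geq Sk/(2d)$. Substituting $S \geq C_2 d\log d/\eps^2$ gives $T_k \geq C_2 k\log d/(2\eps^2) > C_1 k\log d/\eps^2 \geq \mathbf{OPT}$ once $C_2 > 2C_1$, contradicting $T_k \leq \mathbf{OPT}$. Hence $S = O(d\log d/\eps^2)$.

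The step I expect to require the most care is the bookkeeping that yields the clean recursion $S_{j+1} \geq S_j - c_j d/k$ and its telescoped form $S_j \geq S - (d/k) T_{j-1}$, together with verifying the ``either/or'' dichotomy that drives the final contradiction; everything else is a mechanical parallel to Lemma~\ref{lem:largeEl}.
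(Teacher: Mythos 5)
Your proposal is correct and follows essentially the same route as the paper: run a $k$-round greedy/pigeonhole process over the small items, use the fact that each small item has fewer than $d/k$ nonzero entries to bound the incidences destroyed, and derive a coverage exceeding $\mathbf{OPT}=O(k\log d/\eps^2)$ for a contradiction. Your version just carries out the incidence bookkeeping (the $S_j$ recursion and the either/or case split) more explicitly than the paper does.
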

\begin{proof}
Take $\mathcal{S}$ to be the set of small items. Take $\mathcal{P}$ to be the set of the $d$ input subsets. Suppose that there are $s$ total nonzero entries among the items in $\mathcal{S}$. Now we will conduct the following process. First we choose a subset $p_1 \in \mathcal{P}$. Suppose that $p_1$ covers $c_1$ small items from $\mathcal{S}$. Recall that ``covers" means that those items are in subset $p_1$. Now remove $p_1$ from $\mathcal{P}$ and the $c_1$ small items it covered from $\mathcal{S}$. 
Now let us choose another subset $p_2 \in \mathcal{P}$. Suppose that $p_2$ covers $c_2$ of the (remaining) items in $\mathcal{S}$. Again remove $p_2$ from $\mathcal{P}$ and the $c_2$ small items it covered from $\mathcal{S}$. Continue this process for a total of $k$ times. 

We know that $\mathbf{OPT} = C_1 \cdot k \log d/\eps^2$ for some constant $C_1$. Therefore, we have that in the above process we removed at most 
\[
(c_1 + \ldots + c_k) \cdot \frac{d}{k} \leq C_1 \cdot \frac{d \log d}{\eps^2}
\]
edges 
since we have $c_1 + \ldots + c_k \leq \mathbf{OPT}$. 

Now suppose for the sake of contradiction that 
$s = C_2 \cdot d \log d/ \eps^2$ for some constant $C_2$. But then, in the above process, at each step we could have found a subset covering at least $(C_2 - C_1) \cdot  \log d / \eps^2$ new items. This would mean that we have 
$\mathbf{OPT} \geq k \cdot  (C_2 - C_1) \cdot \log d/\eps^2$ which for appropriate $C_2$ is a contradiction. 

\end{proof}

We want to show that for each large item, we recover $d \log(1/\eps) / (\eps k)$ of their nonzero entries  from $\bA'$. In addition, we want to show that for each small item, we recover all their nonzero entries from $\bA'$. We refer to any item corresponding to a row in $\bA'$  that contains nonzero entries as a ``nonzero" item.

We begin by proving that each nonzero item is hashed to a bucket with no other large item and not too many nonzeros from small items with high probability. 

\begin{lemma}
\label{cl:largeE}
      Every nonzero item for some $i \in [t]$ is hashed to a bucket containing
\begin{enumerate}
    \item no other large item 
    \item at most $O(\frac{d \log (1/\eps)}{\eps k})$ nonzero entries from small items
\end{enumerate}
      with probability $1-1/\poly(d)$. 
\end{lemma}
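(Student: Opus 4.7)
\textbf{Proof proposal for \Cref{cl:largeE}.} My plan is to fix a single nonzero item $j$ and a single repetition $i \in [t]$, show that both conditions hold in that repetition with some constant probability $c$, then use independence across the $t = \Theta(\log(d/\eps))$ repetitions to amplify to probability $1 - 1/\poly(d/\eps)$, and finally union bound over all nonzero items in $\bA'$. For condition (1), the hash function places each of the at most $L = O(k\log d/\eps^2)$ large items (\Cref{lem:largeEl}) into one of $b = \Theta(k\log d/\eps^2)$ buckets uniformly. By a union bound, the probability that any other large item collides with $j$'s bucket is at most $L/b$, which can be driven below $1/4$ by tuning the hidden constant in $b$. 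Call the success event $A_{i,j}$; then $\Pr[A_{i,j}] \geq 3/4$.

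For condition (2), let $X_{i,j}$ be the number of nonzero entries contributed by small items to $j$'s bucket in repetition $i$. Since the total mass of small nonzero entries is $O(d\log d / \eps^2)$ by \Cref{lem:smallE} and the hash distributes this mass over $b = \Theta(k\log d/\eps^2)$ buckets uniformly and independently of $j$, we have $\E[X_{i,j}] = O(d/k)$. Setting the target threshold to $C \cdot d\log(1/\eps)/(\eps k)$ for a sufficiently large constant $C$, Markov's inequality yields
\[
\Pr\!\left[X_{i,j} \geq C\,\tfrac{d\log(1/\eps)}{\eps k}\right] \;\leq\; \frac{\E[X_{i,j}]}{C\,d\log(1/\eps)/(\eps k)} \;=\; O\!\left(\frac{\eps}{C\log(1/\eps)}\right),
\]
which is at most $1/4$ for $\eps$ small and $C$ large. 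Call this success event $B_{i,j}$. Combining, $\Pr[A_{i,j} \cap B_{i,j}] \geq 1/2$.

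Now I leverage the $t$ independent repetitions: the probability that \emph{no} $i \in [t]$ satisfies both $A_{i,j}$ and $B_{i,j}$ is at most $(1/2)^t$. Choosing $t = \Theta(\log(d/\eps))$ with a sufficiently large hidden constant, this becomes $1/\poly(d/\eps)$, and a union bound over the (polynomially many) nonzero items in $\bA'$ yields the claim with probability $1 - 1/\poly(d)$.

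The main delicacy I expect is not in any single probabilistic step, which are all elementary, but in choosing the constants so the pieces line up: $b$ must be large enough relative to the $L = O(k\log d/\eps^2)$ large items to make condition (1) succeed with constant probability, while simultaneously the same $b$ must keep $\E[X_{i,j}] = O(d/k)$ so that Markov's inequality beats the target threshold $O(d\log(1/\eps)/(\eps k))$ by a constant factor. The union bound over items is also slightly subtle: I will need to argue the number of nonzero items in $\bA'$ is at most $\poly(d)$ (indeed, bounded by the length of the polynomially long input stream), so that the $1/\poly(d/\eps)$ per-item failure probability composes to an overall $1/\poly(d)$ guarantee.
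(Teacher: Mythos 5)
Your proposal is correct and takes essentially the same route as the paper: a constant per-repetition failure bound using \Cref{lem:largeEl} for collisions with large items and \Cref{lem:smallE} plus Markov for the small-item mass, then amplification over the $t = \Theta(\log(d/\eps))$ repetitions and a union bound over nonzero items. The one small fix: for the final union bound you should count nonzero items via \Cref{lem:largeEl} and \Cref{lem:smallE} (at most $\tilde{O}((k+d)/\eps^2)$ of them), not via the $\poly(n)$ stream length, since the per-item failure probability is only $1/\poly(d/\eps)$ and would not survive a union bound over $\poly(n)$ items.
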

\begin{proof}
Consider some nonzero item $x$. Now consider some iteration $i \in [t]$. 

We will first consider the large items. By \Cref{lem:largeEl}, there are at most $C_1 \cdot k  \log d/\eps^2$ large items for some constant $C_1$. $\mathcal{C}_{i}$ has $C_2 \cdot k \log d/\eps^2$ buckets. For appropriate $C_2$, we can say that $C_2 > 4 C_1$.
In the worst case, every large item is hashed to a different bucket. Then, the probability of $x$ being hashed to a bucket with another large item is at most $1/4$. 

Now let us look at the small items. By \Cref{lem:smallE}, there are at most $O(d\log d/\eps^2)$ total nonzero entries among small items. Since small items have at most $d/k$ nonzero entries, in the worst case there are $O(k \log d / \eps^2) = C_3 \cdot k \log d / \eps^2 $ small items each with $d/k$ edges. Recall that $\mathcal{C}_i$ has $C_2 \cdot k\log d / \eps^2$ buckets. Therefore, the expected number of nonzero entries from small items in the same bucket as $x$ is at most $C_3d / C_2k \leq 1/4 \cdot d/k.$ By Markov's inequality, the probability that the actual number of nonzero entries from small items in the same bucket as $x$ is more than $d/k$ is at most $1/4$. 

So, the probability that $x$ is hashed to a bucket containing another large item or more than $d/k$ nonzero entries from small entries is at most $1/2$. 

We have $O(d\log(1/\eps)/(\eps k)) =  C_4 \cdot d\log(1/\eps)/(\eps k) \geq  d/k$ for appropriate $C_4$ and $\eps \in (0,1/2)$. However note that our proof still holds for the full range of $\eps$ since we can always use a smaller $\eps$ to achieve the desired error bound while only incurring an extra constant factor in the space/time.

We hash 
 $O(\log (d/\eps))$ times (since we do it for $i = 1, \ldots, t)$. Since we have at most $\tilde{O}((k + d)/\eps^2)$ nonzero items by \Cref{lem:largeEl} and \Cref{lem:smallE}, we have the result by taking a union bound.
\end{proof}

So by \Cref{cl:largeE}, we recover all nonzero entries from small items and $d\log(1/\eps)/(\eps k)$ nonzero entries from each large item present in $\bA'$ with probability $1-1/\poly(d)$. Therefore by combining this with \Cref{cl:AMV}, we conclude the proof of correctness for \hyperref[alg:buildA]{building-$\bA_*$}. 

Our final step is to remove the assumption that we have direct access to $\bA$. 
We now show how to implement \hyperref[alg:buildA]{building-$\bA_*$} via a linear sketch, \hyperref[alg:secMC]{Max-Coverage-LS} (\Cref{alg:secMC}). Again recall that we must build $\bA_*$ while receiving updates to the entries of underlying matrix $\bA$. We first describe our algorithm and give a high level proof review. Then we will present our formal proof. 

\begin{algorithm*}[!htbp]
\caption{Max-Coverage-LS ($n \times d$ matrix $\bA$, $\eps \in (0,1)$, $k$)}
\label{alg:secMC}
\begin{algorithmic}[1]
    \STATE Set $\delta = (2 + \log d) \log \log_{1-\eps}n$.
    \STATE Set $\eps = \eps / 8.$
    \STATE Keep a $L_0$ sketch for each column of $\bA$. Denote these as $L_0(z)$ for $z \in [d]$. 
    \FOR{$m = 1, 2, \ldots, \log n$} \STATE \COMMENT{Run in parallel}
        \STATE Use a hash function to subsample each row from $\bA$ with probability $1/2^m$. Call the subsampled matrix we consider in this $m^{\text{th}}$ run $\bA'_m$. 
        \STATE \COMMENT{We do not store $\bA'_m$ explicitly. This means that in the $m^{\text{th}}$ parallel run we only consider updates to the subsampled rows that form $\bA'_m$.}
        \STATE Set $b = O(\frac{k \log d}{\eps^2})$. 
        \STATE Set $t = O(\log \frac{d}{\eps})$. 
        \FOR{$i = 1, \ldots, t$}  
            \STATE Use a hash function to hash each row of $\bA'_m$ to $b$ buckets in structure $\mathcal{C}_{m,i}$.
           \STATE \COMMENT{We do not store the rows of $\bA'_m$ explicitly in structure $\mathcal{C}_{m,i}$. Rather, each bucket only considers updates to the rows that are hashed there.}
            \FOR{each bucket in $\mathcal{C}_{m,i}$}
                \STATE If there are $r$ rows hashed to the bucket, denote the $r$ rows concatenated into a vector of length $rd$ as $\bv$. 
                \FOR{$q = 1, 2, \ldots, \log(rd)$}
                    \STATE \COMMENT{Run in parallel}
                    \STATE Use a hash function to subsample each entry of $\bv$ with probability $1/2^q$. Call the subsampled vector we consider in this $q^\text{th}$ run $\bv_q$ \COMMENT{we again do not store $\bv_q$ explicitly}.
                    \STATE Keep a $L_0$ sketch for $\bv_q$. Denote it as $L_{0,q}$. 
                    \STATE Keep a CountSketch structure with $O(\frac{d \log(1/\eps)}{\eps k})$ buckets for $\bv_q$. Denote it as $CS_q$. 
                \ENDFOR
            \ENDFOR
        \ENDFOR
    \ENDFOR
    \STATE Set the error probability for each $L_0$ sketch and CountSketch structure such that the total error across all of them is at most $1/\poly(d)$. 
    \STATE Upon an update, the $L_0$ sketches and CountSketch structures will handle it. 
    \STATE \textbf{Upon a query:}  
   \FOR{each $m \in [\log n]$}
        \STATE Initialize $\bA_{m,*}(k, \eps)$.
        \STATE For each iteration $i \in [t]$, for each bucket in $\mathcal{C}_{m,i}$, take the smallest $q$ such that querying $L_{0,q}$ returns a number that is $O(\frac{d \log (1/\eps)}{\eps k})$ to be $q'$. 

        \STATE Take $\mathcal{S}$ to be the set of rows of $\bA'_m$ that have nonzero entries recovered by $CS_{q'}$ for any bucket in $\mathcal{C}_{m,i}$ for any iteration $i \in [t]$. Take $\mathcal{P}$ to be the set of these recovered nonzero entries.

        \STATE Let $h$ be a hash function that maps uniformly between $[0,1]$ the rows in $\mathcal{S}$. 
        
        \WHILE{the number of nonzero entries in $\bA_{m,*}(k,\eps)$ is less than $\frac{24d\delta' \log(1/\eps)\log d}{(1-\eps)\eps^3}$}
        \STATE Process the row $j$ that comes next in the ordering as determined by hash function $h$. 
        \IF{row $j$ has less than $\frac{d \log (1/\eps)}{\eps k}$ nonzero entries in $\mathcal{P}$}
            \STATE Add all of the nonzero entries from row $j$ in $\mathcal{P}$ to $\bA_{m,*}(k,\eps)$. 
        \ELSE
            \STATE Add $\frac{d \log (1/\eps)}{\eps k}$ of the nonzero entries from row $j$ in $\mathcal{P}$, chosen arbitrarily, to $\bA_{m,*}(k,\eps)$. 
        \ENDIF
        \ENDWHILE 
   \ENDFOR 
   \STATE Output $L_0(z)$ for $z \in [d]$ and $\bA_{m,*}(k, \eps)$ for $m \in [\log n]$. 
\end{algorithmic}
\end{algorithm*}

In line $3$ of \hyperref[alg:secMC]{Max-Coverage-LS} (\Cref{alg:secMC}), we first keep a $L_0$ sketch for each column of $\bA$. This will be useful for the final answer our overall algorithm will output for maximum coverage. Then in line $6$, we use a hash function to subsample rows from $\bA$ to form $\bA'_m$ for $m \in [\log n]$. Recall that we want to subsample down from $\bA$ to a smaller universe $\bA'$ such that $\mathbf{OPT}$ in $\bA'$ is $O(k \log d/\eps^2)$. However, we do not know what this sampling rate is since it depends on the contents of $\bA$. Therefore, we subsample at $\log n$ different levels and form $\log n$ different $\bA_*$. We will later show how to pick which $\bA_*$ we will use. 

Now for each subsampling rate $m$, we consider the subsampled matrix $\bA'_m$. Note that we do not store $\bA'_m$ explicitly since this would not fit in our space allotment. Instead, in this parallel run we only consider updates that affect $\bA'_m$. 

In line $11$, we hash the rows of $\bA'_m$ to $b$ buckets. Note that we are doing this for $t$ iterations to amplify the probability of success. Then for each bucket we consider the rows hashed there as one long vector $\bv$. Again we do not explicitly store this vector - rather the structures in this bucket only consider updates relevant to this vector. 

Recall that in each bucket our goal is to recover $x = O(d \log (1/\eps) / (\eps k))$ nonzero entries from vector $\bv$. 
So, in each bucket in lines $13-19$, we subsample the vector $\bv$ in $\log(rd)$ levels. Then in each level $q$, we keep an $L_0$ sketch and CountSketch structure with $x$ buckets for $\bv_q$. As stated in \Cref{prelim:CS}, when keeping a CountSketch structure with $4x$ buckets of a vector that contains at most $x$ nonzero entries, the CountSketch will recover those entries exactly. So we want to identify the sampling level $q'$ where $\bv_q$ has at most $x$ entries and query the corresponding CountSketch to recover the appropriate number of nonzero entries from $\bv$. Upon a query, we are identifying the appropriate $q'$ in line $29$. Since an $L_0$ sketch returns how many nonzero entries there are in $\bv_q$ for some $q$, we are simply using the $L_0$ sketches to find $q'$. The rest of the algorithm builds $\bA_{m,*}$ for $m \in [\log n]$ with our recovered nonzero entries. 

We note that in line $25$ we state that the $L_0$ sketches and CountSketch structures will handle updates. Recall that $L_0$ sketches and CountSketch structures are both linear sketches. So, they are able to handle insertions and deletions to the vector they are considering. So when an update comes, these linear sketches will update thereby updating our entire linear sketch.

Once we have an $L_0$ sketch for each column of $\bA$ and all the $\bA_{m,*}$ for $m \in [\log n]$, we perform the following process, \hyperref[alg:kc]{max-coverage} (\Cref{alg:kc}) to get the final answer. 
\begin{algorithm}[ht]
\caption{max-coverage}
\label{alg:kc}
\begin{algorithmic}[1]
    \REQUIRE $k$ and $\eps \in [0,1]$. 
    \ENSURE A $1-1/e - \eps$ approximate solution to maximum coverage with probability $1-1/d$. 
    \STATE Set $\eps' = \eps/48$. 
    \STATE For $m \in [\log n]$, construct $\bA_{m,*}(k, \eps')$ using \hyperref[alg:secMC]{Max-Coverage-LS} (\Cref{alg:secMC}). Also store the $L_0$ sketches of the columns of $\bA$ outputted by \Cref{alg:secMC}. 
    \STATE Run the greedy algorithm (or any $1-1/e$ approximation algorithm) on each $\bA_{m,*}(k,\eps')$.
    \STATE Use the $L_0$ sketches to determine for which $\bA_{m,*}$ the greedy algorithm gave the best answer and output it.
\end{algorithmic}
\end{algorithm}

\hyperref[alg:kc]{max-coverage} (\Cref{alg:kc}) is almost the same as \hyperref[alg:kcoverBat]{$k$-cover} (\Cref{alg:kcoverBat}). The difference is we run the classical greedy algorithm on each $\bA_{m, *}$ for $m \in [\log n]$. Now we need to figure out which $m$ corresponded to the subsampling rate such that $\textbf{OPT}$ in $\bA_{m,*}$ is $O(k \log d / \eps^2)$. Instead of doing this, we just use the answer from the $\bA_{m,*}$ that will give us the best answer. 

Let us say for some $m$ that running the greedy algorithm on $\bA_{m,*}$ outputs subsets $s_1, \ldots, s_k$. We use the $L_0$ sketches corresponding to $s_1, \ldots, s_k$ to estimate how many distinct items are covered by those subsets. This allows us to form our final output. Now we present our formal proof. 

\begin{lemma}
\label{lem:finalMCP}
     \hyperref[alg:secMC]{Max-Coverage-LS} (\Cref{alg:secMC}) and \hyperref[alg:kc]{max-coverage} (\Cref{alg:kc}) correctly implement \hyperref[alg:buildA]{building-$\bA_*$} (\Cref{alg:buildA}) and \hyperref[alg:kcoverBat]{$k$-cover} (\Cref{alg:kcoverBat}) with probability at least $1-1/\poly(d)$. 
\end{lemma}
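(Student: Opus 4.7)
The plan is to show that \hyperref[alg:secMC]{Max-Coverage-LS} and \hyperref[alg:kc]{max-coverage} are linear-sketch implementations of \hyperref[alg:buildA]{building-$\bA_*$} and \hyperref[alg:kcoverBat]{$k$-cover}, with the simulation losing only $1/\poly(d)$ probability mass. Two things need genuine argument beyond what is already given by \Cref{cl:AMV}, \Cref{lem:largeEl}, \Cref{lem:smallE}, and \Cref{cl:largeE}: (i) we do not know the correct row-subsampling rate in advance, so we enumerate all $\log n$ geometric rates in parallel and select the best output using the column $L_0$ sketches $L_0(z)$; and (ii) inside each bucket we must recover $\Theta(d \log(1/\eps)/(\eps k))$ nonzero entries of a vector $\bv$ that is never explicitly stored, using geometric subsampling plus CountSketch.

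First I would handle the unknown sampling rate. \Cref{cl:AMV} needs a row-subsampling rate of $1/2^{m^\star}$ for some $m^\star \in [\log n]$ so that $\mathbf{OPT}$ on the subsampled matrix is $O(k \log d / \eps^2)$. Since \hyperref[alg:secMC]{Max-Coverage-LS} runs every rate $1/2^m$ in parallel, the run at $m = m^\star$ produces an $\bA'_{m^\star}$ that is distributionally identical to the $\bA'$ of \hyperref[alg:buildA]{building-$\bA_*$}, so greedy on $\bA_{m^\star,*}$ yields a $(1-1/e-\eps)$ approximation. To pick the best $m$ in post-processing I would use the column $L_0$ sketches from line 3: for each candidate tuple $(s_1, \ldots, s_k)$ returned by greedy on some $\bA_{m,*}$, a linear combination of $L_0(s_1), \ldots, L_0(s_k)$ yields an $(1 \pm \eps)$ estimate of the union coverage (standard union-size estimation via $L_0$), and taking the argmax across $m$ loses at most an extra $\eps$ factor, which is absorbed by rescaling $\eps$ up front.

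Next I would handle bucket-level recovery. Fix any bucket in $\mathcal{C}_{m,i}$ and let $\bv$ be the (implicit) concatenation of the rows hashed there. By \Cref{cl:largeE}, with probability $1-1/\poly(d)$ this bucket holds at most one large item and $O(x)$ nonzero entries from small items, where $x := d \log(1/\eps)/(\eps k)$, so \hyperref[alg:buildA]{building-$\bA_*$} only requires $O(x)$ nonzero entries from $\bv$. In the sketch, we subsample $\bv$ geometrically at rates $1/2^q$ and keep both $L_{0,q}$ and a CountSketch $CS_q$ with $\Theta(x)$ buckets on each $\bv_q$. A standard argument shows that with high probability some level $q'$ yields $\|\bv_{q'}\|_0 = \Theta(x)$; the query step finds this $q'$ via the $L_0$ sketches, and by the exact recovery guarantee of \Cref{prelim:CS}, $CS_{q'}$ returns every nonzero of $\bv_{q'}$ exactly. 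When $\|\bv\|_0 \le x$, the smallest admissible $q'$ is zero and we recover all of $\bv$, matching the ``add all'' branch; otherwise we obtain $\Theta(x)$ arbitrary nonzeros, which is exactly what the ``arbitrary subset of size $x$'' branch of \hyperref[alg:buildA]{building-$\bA_*$} allows.

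The main obstacle is the union bound. There are $\log n$ parallel runs, each with $O(\log(d/\eps))$ hash layers, each containing $O(k \log d / \eps^2)$ buckets with $O(\log(rd))$ subsample levels, plus $d$ column $L_0$ sketches; in total this is $\poly(d) \cdot \polylog(n)$ internal sketches. I would set each $L_0$ sketch and each CountSketch to fail with probability $1/\poly(d)$ with a sufficiently large exponent so that the aggregate failure probability is $1/\poly(d)$, and so that the $L_0$ estimates are accurate enough to distinguish ``$\Theta(x)$ nonzeros'' from ``$\gg x$ nonzeros'' at adjacent subsample levels. Combined with the $1-1/\poly(d)$ high-probability events of \Cref{cl:AMV} and \Cref{cl:largeE}, this establishes the lemma. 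The most delicate point to verify carefully is the geometric subsampling step that picks $q'$: it must simultaneously succeed for every bucket across every parallel run, but this is exactly what the union bound over the $\poly(d)\cdot \polylog(n)$ good-event sub-sketches handles.
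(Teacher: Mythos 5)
Your proposal follows essentially the same route as the paper's proof: run all $\log n$ row-subsampling rates in parallel so one matches the unknown correct rate, recover each bucket's implicit vector $\bv$ by geometric subsampling with $L_0$ sketches to locate the right level $q'$ and CountSketch for exact recovery of $\bv_{q'}$, select the best $\bA_{m,*}$ at the end by estimating union coverage of the chosen columns via the per-column $L_0$ sketches, and set per-sketch failure probabilities so a union bound gives $1-1/\poly(d)$. The one detail to make explicit is that the ``linear combination'' in the union-size estimation must use \emph{random} coefficients (the paper right-multiplies the selected columns by a random vector with entries in $\{-\poly(d),\ldots,\poly(d)\}$ and exploits linearity of the $L_0$ sketches) so that turnstile cancellations cannot make a covered item appear uncovered.
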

\begin{proof}
    The first step in \hyperref[alg:buildA]{building-$\bA_*$} is subsampling from $\bA$ to get $\bA'$ such that $\mathbf{OPT}$ in $\bA'$ is $O(k \log d/ \eps^2)$. Since this sampling rate depends on what $\mathbf{OPT}$ is in $\bA$, in \hyperref[alg:secMC]{Max-Coverage-LS}, we instead sample in $\log n$ different rates. So in one of the $\log n$ different parallel runs, we will sample with the correct rate. We will describe how we choose the right run to consider later. 

    Let us consider the parallel run with the correct sampling rate. In each iteration $i \in [t]$, for each bucket, we need to show that we recover $x = O(\frac{d \log(1/\eps)}{\eps k})$ nonzero entries from $\bv$. The length of $\bv$ is $rd$. So, we subsample in $\log(rd)$ levels. At each level we keep an $L_0$ sketch python and CountSketch structure with $x$ buckets. At some level $q'$ we will have that the number of nonzero entries of $\bv_{q'} \leq x/4$. We use the $L_0$ sketch to find $q'$. By the correctness of CountSketch (see \Cref{prelim:CS}) we will exactly recover the entries of $\bv_{q'}$.  
    Note that we set the failure probability appropriately for the CountSketch structures and $L_0$ sketches so we only incur $1/\poly(d)$ total error. 

    So  \hyperref[alg:secMC]{Max-Coverage-LS} (\Cref{alg:secMC}) produces a $L_0$ sketch for each column of $\bA$ and $\bA_{m,*}$ for $m \in [\log n]$. We must figure out which $\bA_{m,*}$ is the one that corresponds to the desired subsampling rate. We instead find which $\bA_{m,*}$ gives us the best answer on the original input $\bA$ using the $L_0$ sketches in the following way. 

    Suppose that for some $\bA_{m,*}$ the greedy algorithm chooses subsets $s_1,\ldots,s_k$. We take the $L_0$ sketches for these subsets (or columns of $\bA$) and reduce to the vector case to estimate how many distinct items these subsets cover in their union. 
    
    Imagine that we are working with the original input $\bA$. Now, take the original columns $s_1,\ldots, s_k$ and concatenate them into a $n \times k$ matrix $\bL$. We now randomly generate a $k \times 1$ vector $\bx$ with entries between $[-\poly(d), \poly(d)]$. Now multiply $\bL$ by $\bx$ . We can see with probability at least $1-1/\poly(d)$, the $i^{\text{th}}$ entry in $\bL \cdot \bx$ is nonzero if and only if the $i^{\text{th}}$ row of $\bL$ is nonzero. So, if the $i^{\text{th}}$ entry of $\bL \cdot \bx$ is nonzero, that means the $i^{\text{th}}$ item was covered by the union of the subsets $s_1, \ldots, s_k$. 

    Note that $\bL \cdot \bx$ is by definition equivalent to summing $\bL_1 \cdot x_1 + \bL_2 \cdot x_2 + \dots + \bL_k \cdot x_k$ where $\bL_i$ denotes the $i^{\text{th}}$ column of $\bL$ and $x_i$ denotes the $i^{\text{th}}$ entry of $\bx$ . Since the $L_0$ sketches are linear sketches, by definition they have the property that the $L_0$ sketch of the sum of two vectors is equivalent to summing the $L_0$ sketches for the two vectors (see \Cref{prelim:ls}). Therefore, using the $L_0$ sketches for the $d$ columns of $\bA$, we can create the $L_0$ sketch for $\bL \cdot \bx$ and query it to get a $(1+\eps/4)$ approximation to the true coverage of the union of subsets $s_1,\ldots,s_k$. 
\end{proof}

\begin{lemma}
\label{cl:memoryOfSketch}
\hyperref[alg:secMC]{Maximum-Coverage-LS} uses $\tilde{O}(d/\eps^3)$ bits of memory. 
\end{lemma}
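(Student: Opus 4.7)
The plan is a direct accounting of every sketch that \hyperref[alg:secMC]{Max-Coverage-LS} maintains, followed by collecting the factors. There is no real obstacle here: each component is a stock linear sketch whose space is known from \Cref{prelim:ls}, \Cref{prelim:CS}, and \Cref{prelim:lSk}. The only thing to be careful about is that failure probabilities are set so that a union bound over all sketches still yields $1/\poly(d)$ error, which contributes only $\poly\log d$ factors (absorbed by $\tilde{O}$).

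First I would handle the ``global'' column sketches from line 3: one $L_0$ sketch per column of $\bA$, totaling $d \cdot \poly\log n = \tilde{O}(d)$ bits. These sit outside all the loops and so contribute an additive $\tilde{O}(d)$, which is dominated by the target bound.

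Next I would walk through the nested structure built for each subsampling level $m \in [\log n]$. For a fixed $m$, we run $t = O(\log(d/\eps))$ independent hashings, each into $b = O(k\log d/\eps^2)$ buckets. Inside each bucket we keep $\log(rd) = O(\log(nd))$ parallel copies; each copy stores one $L_0$ sketch (of size $\poly\log n$) plus one CountSketch with $O(\frac{d\log(1/\eps)}{\eps k})$ buckets and $O(\log n)$ rows, i.e., $\tilde{O}(d/(\eps k))$ bits. Multiplying, the total across one bucket's $q$-runs is $\tilde{O}(d/(\eps k))$; across one hashing's $b$ buckets this is $b \cdot \tilde{O}(d/(\eps k)) = \tilde{O}(d/\eps^3)$, where the $k$ in $b$ cancels with the $1/k$ from the CountSketch width. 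Multiplying by the $t = O(\log(d/\eps))$ hashings and the $\log n$ subsampling levels only adds $\poly\log$ factors, so the total over all $m,i,q$ is still $\tilde{O}(d/\eps^3)$ bits.

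Finally I would combine the two buckets of cost. Adding the $\tilde{O}(d)$ from the column $L_0$ sketches to the $\tilde{O}(d/\eps^3)$ from the nested structures gives $\tilde{O}(d/\eps^3)$ total bits. I would close by noting that $\delta = (2+\log d)\log\log_{1-\eps} n = \poly\log(n,d,1/\eps)$, the CountSketch failure probability is set to $1/\poly(d)$ (another $\poly\log d$ factor in space), and we only ever store hash function seeds rather than explicit hash tables, so all hash-function storage is absorbed into the $\tilde{O}$ notation. This yields the claimed $\tilde{O}(d/\eps^3)$ memory bound.
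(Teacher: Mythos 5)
Your proposal is correct and follows essentially the same accounting as the paper: the dominant cost comes from the $\log n$ subsampling levels, each with $t=O(\log(d/\eps))$ hashings into $b=O(k\log d/\eps^2)$ buckets, each bucket holding $O(\log(rd))$ $L_0$ sketches and CountSketch structures of width $O(d\log(1/\eps)/(\eps k))$, where the $k$ factors cancel to give $\tilde{O}(d/\eps^3)$. The only small slip is your claim that the per-column $L_0$ sketches total $\tilde{O}(d)$ bits; since each such sketch needs accuracy $\eps$ it costs $\tilde{O}(1/\eps^2)$ space, so this term is $\tilde{O}(d/\eps^2)$ as in the paper---still dominated by $\tilde{O}(d/\eps^3)$, so the conclusion is unaffected.
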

\begin{proof}
Keeping a $L_0$ sketch for each column of $\bA$ requires $\tilde{O}(d/\eps^2)$ space total. 

The other structures we store are the $L_0$ sketches and CountSketch structures in each bucket. We have $\log n$ subsampling instances. In each instance we store $O(b \log(d/\eps))$ buckets for $b = O(k \log d / \eps^2)$. In each bucket we store $O(\log(rd))$ $L_0$ sketches and $O(\log(rd))$ CountSketch structures with sketching dimension (i.e. the number of buckets in the CountSketch structure) $O(d \log(1/\eps) / (\eps k))$. So the total space requirement of the $L_0$ sketches and CountSketch structures is $\tilde{O}(d/\eps^3)$.
\end{proof}

\begin{lemma}
\label{cl:timeOfSketch}
 The update time of \hyperref[alg:secMC]{Maximum-Coverage-LS} is $\tilde{O}(1)$. 
\end{lemma}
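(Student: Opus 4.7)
The plan is to trace the path of a single update $(i,j,\pm c)$ through every data structure maintained by \hyperref[alg:secMC]{Max-Coverage-LS} and bound the per-structure update cost, showing every factor incurred is polylogarithmic in $n$ and $d$. Because the sketch is built as a tree of linear substructures ($L_0$ sketches and CountSketch instances gated by hash functions), the cost per update is simply the sum over the substructures that the update actually reaches, multiplied by their individual update times.

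First, I would account for the $L_0$ sketches over the columns of $\bA$ kept in line~3: only the sketch for column $j$ is touched, and by the $L_0$-sampler bound stated in \Cref{prelim:lSk} (the KNW construction) this takes $O(1)$ update time. Next, for each of the $\log n$ parallel instances indexed by $m$, I evaluate the row-subsampling hash function in $O(1)$ time to decide whether row $i$ belongs to $\bA'_m$; if not, the instance is skipped. If row $i$ is retained, I then iterate over the $t=O(\log(d/\eps))$ copies $\mathcal{C}_{m,i'}$, use one hash evaluation to locate the unique bucket that row $i$ maps to, and, within that bucket, iterate over the $O(\log(rd))$ subsampling levels $q$. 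At each such level, one hash evaluation determines whether the update lands in $\bv_q$, after which the $L_0$ sketch $L_{0,q}$ and the CountSketch $CS_q$ must be updated. By \Cref{prelim:lSk} the $L_0$ update costs $\mathrm{poly}\log n$, and by \Cref{prelim:CS} the CountSketch update costs $O(\log n)$.

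Multiplying these contributions, the total per-update cost is bounded by
\[
O(1) \;+\; \log n \cdot O(\log(d/\eps)) \cdot O(\log(rd)) \cdot \bigl(O(\log n) + \mathrm{poly}\log n\bigr),
\]
which is $\mathrm{poly}\log(nd/\eps) = \tilde O(1)$ under the paper's convention that $\tilde O(\cdot)$ hides polylogarithmic factors. I would finish by noting that the hash functions used for row subsampling, bucket assignment, and entry subsampling are stored in implicit (pseudorandom) form so that each evaluation is $O(1)$, which is what justifies the constant-time gating at each node of the tree.

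The only mildly subtle point, and what I would flag as the main obstacle, is making sure the update-time accounting is consistent with the sketch's conceptual structure: an update is \emph{not} propagated to every bucket of every $\mathcal{C}_{m,i'}$ or to every subsampled copy $\bv_q$, but only to the unique bucket (per $i'$) and to those $q$ for which the hash-gate admits the update. The argument therefore rests on using the linear-sketch property (\Cref{prelim:ls}) to justify that these gated updates faithfully maintain the claimed sketches without ever materializing $\bA'_m$, $\bv$, or $\bv_q$, so the $O(\log n)\cdot O(\log(d/\eps))\cdot O(\log(rd))$ bound on the number of substructure updates is tight and yields the $\tilde O(1)$ per-update time.
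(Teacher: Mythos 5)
Your proposal is correct and follows essentially the same route as the paper's proof: trace the update to the single column $L_0$ sketch, then to the one affected bucket per iteration across the $\log n$ subsampling instances and $t$ repetitions, and multiply the polylogarithmic per-structure update costs of the $L_0$ sketches and CountSketch structures. Your write-up is just a more explicit accounting of the same argument.
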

\begin{proof}
We have an $L_0$ sketch for each column of $\bA$. Each update will only cause one of these to update. So the update time here is $O(1)$.

Let's look at the update time of the $L_0$ sketches in the buckets. Each update will only cause $\tilde{O}(1)$ of them to update. Note that only one bucket gets affected with an update and we only have to additionally consider the multiple subsampling levels and iterating to boost success probability. So the update time here is $\tilde{O}(1)$. 

We finally consider the update time of the CountSketch structures in each bucket. Again only one bucket gets affected during an update. So the update time here is also $\tilde{O}(1)$. 
\end{proof}

Note that we incur only a $\eps$ factor loss in total, resulting in a final $1-1/e-\eps$ approximation. Specifically, we lose a $\eps / 4$ factor going from $\bA$ to $\bA'$, another $\eps / 4$ factor from running the greedy algorithm on $\bA_*$, and a  $\eps/4$ factor from using the $L_0$ sketches to determine which set of outputs to return. With \Cref{lem:finalMCP}, \Cref{cl:memoryOfSketch}, and \Cref{cl:timeOfSketch}, we can now conclude the proof of \Cref{thm:max}.

\subsection{Targeted Fingerprinting}
We now present our proof for \Cref{col:targeted} via a reduction to maximum coverage. Recall that the input to targeted fingerprinting is $n \times d$ matrix $\bA$ and target user $u\in [n]$. 
\begin{lemma}
\label{lemma:tarFing}
Take $\bA'$ to be $\bA$ with the updates $A_{ij} = A_{ij} - A_{uj}$ applied for all $i \in [n], j \in [d]$. For any union of subsets $\mathcal{U}$, the number of items covered on $\bA'$ is equivalent to the number of users separated from the target on $\bA$.
\end{lemma}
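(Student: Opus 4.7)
The plan is to prove the lemma by a direct entry-wise analysis of the transformed matrix $\bA'$, establishing a bijection between ``item $i$ is covered in $\bA'$ by the union $\mathcal{U}$'' and ``user $i$ is separated from the target user $u$ in $\bA$ by the feature set $\mathcal{U}$.''

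First I would unpack the definition of the transformation: by construction, $A'_{ij} = A_{ij} - A_{uj}$ for every $(i,j) \in [n] \times [d]$. Therefore $A'_{ij} \neq 0$ if and only if $A_{ij} \neq A_{uj}$, which is precisely the condition that user $i$ disagrees with the target user $u$ on feature $j$ in the original matrix $\bA$. In particular, note that row $u$ of $\bA'$ is identically zero, which correctly reflects the fact that the target user can never be separated from itself.

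Next I would translate the two notions of ``count'' into the common language of nonzero indicators. For any union of subsets $\mathcal{U} \subseteq [d]$, by the definition of maximum coverage, the number of items covered in $\bA'$ by $\mathcal{U}$ equals $\bigl|\{\, i \in [n] : \exists\, j \in \mathcal{U}\ \text{with}\ A'_{ij} \neq 0 \,\}\bigr|$. By the definition of targeted fingerprinting, the number of users separated from $u$ by the feature set $\mathcal{U}$ in $\bA$ equals $\bigl|\{\, i \in [n] : \exists\, j \in \mathcal{U}\ \text{with}\ A_{ij} \neq A_{uj} \,\}\bigr|$.

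Finally, combining the two expressions with the equivalence $A'_{ij} \neq 0 \iff A_{ij} \neq A_{uj}$ established in the first step shows that the two sets of row indices are identical, hence they have the same cardinality, proving the lemma. There is no real technical obstacle here; the only subtlety worth explicitly noting is that the target user $u$ is automatically excluded on both sides (since $A'_{uj} = 0$ for all $j$, and a user cannot be separated from itself), so the counting is consistent without having to special-case $i = u$.
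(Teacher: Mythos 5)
Your proof is correct and follows essentially the same route as the paper's: both argue that $A'_{ij} \neq 0$ if and only if $A_{ij} \neq A_{uj}$, so coverage of item $i$ in $\bA'$ by $\mathcal{U}$ coincides with separation of user $i$ from the target in $\bA$. Your explicit remark that row $u$ of $\bA'$ is identically zero is a nice touch but does not change the argument.
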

\begin{proof}
    For all $i\in [n]$, for any $j \in [d]$ such that $A_{ij}=A_{uj}$, we have $A'_{ij}= 0$. Additionally, for all $i \in [n]$, for any $j \in [d]$ such that $A_{ij} \neq A_{uj}$, we have that $A'_{ij}$ is nonzero. 

    In other words, for all users, for any feature where they shared the same value with the target user $u$, this entry is now $0$. In addition, for any feature where they did not share the same value with the target user, this entry is now nonzero. We can see that the maximum coverage problem on $\bA'$ exactly corresponds to finding $k$ features which separates the most users from target user $u$ on $\bA$.
\end{proof}

Algorithmically, we simply run  \hyperref[alg:secMC]{Maximum-Coverage-LS} and alongside store the row that corresponds to target user $u$ in $O(d)$ space. After the completion of the updates, we can simulate forming $\bA'$ from $\bA$ by sending updates to the maximum coverage sketch for $\bA$. Therefore, the approximation factor, space, and update time all follow from \Cref{thm:max} giving us \Cref{col:targeted}.  

\section{Submodular Maximization Framework} \label{sec:submod}
Here, we outline a framework to design algorithms to maximize monotone non-negative submodular functions that are linearly sketchable subject to a cardinality constraint. 
At a high level we will receive a linear sketch of the input matrix $\bA$ such that querying the sketch will produce the function's output value on some union of subsets. We then adapt the classical greedy algorithm for maximizing a monotone submodular function to query the linear sketches instead of accessing the input matrix directly. 

We note that setting $\gamma = \eps / k$ for many linear sketches introduces $\poly(k)$ factors in the final memory usage. However, setting $\gamma = \eps / k$ is provably necessary when performing submodular maximization over queried function values that are preserved up to a $(1 \pm \gamma)$ factor to achieve a $1 - 1/e - \eps$ approximation (see Theorem $5$ of Horel and Singer~\cite{HS2024maximization}). Note that this applies to all algorithms that perform submodular maximization that have this property. 

We now prove \Cref{thm:subThm}. \Cref{thm:subThm} allows us to create an algorithm to maximize a \emph{specific} monotone non-negative submodular function subject to a cardinality constraint by simply sketching the input $\bA$ via a linear sketch that satisfies the properties of the theorem. 

Let $\mathcal{C}$ be a subset of the column vectors of $\bA$. In the following, $\{ \bS \cdot \ba_i\}_{i \in \mathcal{C}}$ can be thought of as the sketch of $\bA$ restricted to $\mathcal{C}$. As described in \Cref{prelim:LSF}, we say that our function $f$ has a corresponding sketching matrix $\bS$ and $g_{\bS}$. Recall that $g_{\bS}$ gives the answer to a query of function $f$ based on $\bS$. For any two subsets of columns $X$ and $Y$, let $g_{\bS}(\{ \bS \cdot \ba_i\}_{i \in X|Y})$ denote the marginal gain of adding $X$, or $g_{\bS}(\{ \bS \cdot \ba_i\}_{i \in X \cup Y}) - g_{\bS}(\{ \bS \cdot \ba_i\}_{i \in Y})$. Take $c \in d \setminus \mathcal{C}$ to denote a column $c$ which is not already in subset $\mathcal{C}$.  

We now present our algorithm, \hyperref[alg:frameworkAlg]{sketchy-submodular-maximization} (\Cref{alg:frameworkAlg}). We first create $k$ independent linear sketches (recall that the process of creating a linear sketch for the input function is given as input to the algorithm). 
Then we run the following classical greedy submodular maximization algorithm with the modification that instead of directly evaluating the input function $f$ it queries the given sketch. Note that in each of the $k$ adaptive rounds, we use a different sketch. The classical greedy algorithm in each round simply looks at all subsets that have not been chosen and adds the one with the largest marginal gain to the output set (see Nemhauser, Wolsey, and Fisher~\cite{NWF1978analysis}).

\begin{algorithm}[ht]
\caption{sketchy-submodular-maximization}
\label{alg:frameworkAlg}
\begin{algorithmic}[1] 
    \STATE Initialize $\mathcal{C} \leftarrow \emptyset$. 
    \WHILE{$|\mathcal{C} |\leq k$}
        \STATE  $ \mathcal{C} \leftarrow \mathcal{C}  \cup \operatorname*{argmax}_{c \in d\setminus \mathcal{C} } g_{\bS}(\{\bS \cdot \ba_i \}_{i \in c | \mathcal{C} })$.
    \ENDWHILE
    \STATE Return $\mathcal{C} $. 
\end{algorithmic}
\end{algorithm}

We first analyze the memory usage. We are given that each sketch takes $O(s)$ space. Since there are $k$ rounds of adaptivity, the total space taken by the sketches is $O(sk)$. The update time will depend on the specific linear sketch. 

Now, let us prove correctness. 
We assume by our theorem statement that our sketch $\bS$ and corresponding function $g_{\bS}$ give us a $(1 \pm \gamma)$-approximation to the queried values of our input function $f$. There are $k$ adaptive rounds. Since we create as many sketches and use a different one in each round, adaptivity between the rounds does not introduce error. In addition, despite getting $(1 \pm \gamma)$-approximations to all our queried values instead of the true queried values of our input function, we still get our desired approximation ratio by setting $\gamma = \eps / k$. This is proven and discussed in Theorem $5$ of \cite{HS2024maximization}. 

We also get the desired output with probability at least $1-1/\poly(d)$. 
Since the error probability for each function evaluation is $O(\frac{1}{\poly(d)k})$, by a union bound over all $dk$ function evaluations, we have an error probability of at most $1-1/\poly(d)$.

\section{$n^p - F_p$ for $p \geq 2$} \label{sec:Sketch}
We prove \Cref{thm:genSketch} with \hyperref[alg:sketchAlg]{p-Tuples-Sketch} (\Cref{alg:sketchAlg}). 
Since we know $n$ and $p$, our main goal is to estimate $F_p$. However, existing work which estimates $F_p$ has the error guarantee on $F_p$. In our problem, we want the error guarantee on the complement of $F_p$ (which is $n^p - F_p$). We first present a high level algorithm and proof overview. Then we present the algorithm and formal proof. 

The structures we keep are $L_0$ sketches and a number of perfect $\ell_0$ samplers. Recall that an $L_0$ sketch of vector $\bx$ returns the number of nonzero entries of $\bx$ (with relative $(1\pm \gamma)$ error) and an $\ell_0$ sampler of $\bx$ returns a nonzero entry exactly uniformly at random. Then, upon an update, since $L_0$ sketches and perfect $\ell_0$ samplers are linear sketches, they can handle arbitrary insertions/deletions to the underlying vector $\bx$. 

Upon a query, the algorithm does the following to output its final approximation to $n^p - F_p$. First it figures out which value has the highest frequency and estimates this frequency using a $L_0$ sketch and the $\ell_0$ samplers. Querying the $L_0$ sketch and subtracting the result from $n$ gives us the number of $0$'s. The frequencies of the rest of the values can be estimated by looking at their relative frequency among the $\ell_0$ samplers (which can be viewed as a uniform sample of the entries of $\bx$) and scaling it. 

Let us call the highest frequency value $b$ and its corresponding frequency $f_b$. If $f_b > n/2$, then we will estimate the final approximation $f_b$ differently than the rest of the values. Specifically, we subtract off value $b$ from a $L_0$ sketch and then query the result. This gives us the number of entries in $\bx$ that were not equal to value $b$. Then we set $f'_b$ to be $n$ minus that result. This is important because it gives us an estimate to $f_b$ with $\gamma \cdot (n-f_b)$ error instead of $\gamma \cdot f_b$ error. 

Now for the rest of the frequencies, we will again use the $\ell_0$ samplers. We simulate updates that subtract off value $b$ from the underlying vector that the $\ell_0$ samplers consider. This means that the $\ell_0$ samplers are taking uniform samples of $\bx$ with $b$ subtracted off all the entries. So when estimating the frequencies of the values besides $b$ using the $\ell_0$ samplers after these updates, we have their frequencies with error at most $\gamma \cdot (n-f_b)$ instead of additive error $\gamma n$. For values with very small frequency, we ignore them and show that this does not result in too much error. 

We now formally present the algorithm, \hyperref[alg:sketchAlg]{p-Tuples-Sketch}. 
\begin{algorithm}[t]
\caption{p-Tuples-Sketch ($n \times 1$ vector $\bx$, constant integer $p \geq 2$, $\gamma, \delta \in (0,1)$)}
\label{alg:sketchAlg}
\begin{algorithmic}[1]
\STATE $\eps \leftarrow \frac{\gamma^\frac{1}{p-1}}{16 \cdot 2^p}$. 
\STATE Keep two independent $L_0$ sketches, $L_0^1, L_0^2$ of $\bx$ each with $\delta' = \delta / 8$ and $\eps = \eps$. 
\STATE For $t = 2\eps^{-2} \cdot \log(2 (\delta / 8)^{-1})$, keep $2t$ perfect $\ell_0$ samplers of $\bx$. Denote the first $t$ as $\mathcal{S}_1$ and the others as $\mathcal{S}_2$.
\STATE Set $\delta'$ for each $\ell_0$ sampler s.t. the total probability of failure across them is at most $\delta / 8$.
\STATE \textbf{Upon an update:}
\STATE The $L_0$ sketches and perfect $\ell_0$ samplers will handle updates.
\STATE \textbf{Upon a query:}
\STATE Initialize an empty set $\mathcal{B}$.
\STATE \COMMENT{Estimating the frequency of the highest frequency value, $b$.}
\STATE Query $L_0^1$ to get $w_1$. 
Set $b \leftarrow 0$ and $f'_b \leftarrow n - w_1$. 
\STATE Estimate the frequency of a value using $\mathcal{S}_1$ by taking its frequency in $\mathcal{S}_1$ and scaling by $w_1/t$. 
\STATE Find the value $v$ with highest frequency $f'$ in $\mathcal{S}$. 
\STATE \textbf{If} $f' > f'_b$ \textbf{then} $b \leftarrow v$ and $f'_b \leftarrow f'$. 
\STATE \textbf{If} $f'_b < \frac{3\gamma^{\frac{1}{p-1}}}{4} \cdot n$ \textbf{then} output $n^p$. 
\STATE \COMMENT{If frequency of $b$ is large enough, estimate it separately for appropriate error.}
\IF{$f'_b > \frac{n}{2}$}
     \STATE Subtract off value $b$ from $L_0^2$ and query it to get $w_2$. 
    \STATE Set $f'_b = n - w_2$. 
\ENDIF
\STATE  Add $(b, f'_b)$ to $\mathcal{B}$. 
\STATE Simulate update $x_i \leftarrow x_i - b$ for all $i \in [n]$ \COMMENT{$\mathcal{S}_2$ will update}. 
\STATE Use $\mathcal{S}_2$ to get all values and their frequencies (take the frequency in $\mathcal{S}_2$ and scale by $w_2/t$). 
\FOR{all values $v$ with frequency $f'_v \geq \frac{\gamma^{\frac{1}{p-1}}}{4} (n - f'_b)$}
    \STATE Add $(v, f'_v)$ to $\mathcal{B}$. 
\ENDFOR 
\STATE Using all $z'$ tuples $(v, f'_v) \in \mathcal{B}$, calculate $n^p - \sum_{j=1}^{z'} (f'_j)^p$ and output. 
\end{algorithmic}
\end{algorithm}

\begin{lemma}
\label{cl:genSpaceTime}
\hyperref[alg:sketchAlg]{p-Tuples-Sketch} uses $\tilde{O}(\gamma^{-\frac{2}{p-1}})$ space and has an update time of $\tilde{O}(\gamma^{-\frac{2}{p-1}})$. 
\end{lemma}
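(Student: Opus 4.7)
The plan is to prove both the space and update-time bounds by directly summing the resource usage of the individual linear sketches kept by \hyperref[alg:sketchAlg]{p-Tuples-Sketch}, using the explicit complexities recorded in \Cref{prelim:lSk} for $L_0$ sketches and in the paragraph on perfect $\ell_0$ samplers. The only inputs we need from the algorithm itself are the parameter settings $\eps = \Theta(\gamma^{1/(p-1)})$ (where $p$ is a constant so $2^p$ is absorbed into the $\tilde{O}$) and $t = \Theta(\eps^{-2}\log(1/\delta))$, which together give $t = \tilde{\Theta}(\gamma^{-2/(p-1)})$.

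For the space bound, I would first note that \hyperref[alg:sketchAlg]{p-Tuples-Sketch} stores exactly (i) two $L_0$ sketches $L_0^1, L_0^2$ on $\bx$ and (ii) a family of $2t$ perfect $\ell_0$ samplers split into $\mathcal{S}_1$ and $\mathcal{S}_2$. By the Kane--Nelson--Woodruff bound from \Cref{prelim:lSk}, each $L_0$ sketch uses $O(\eps^{-2}\log n\,(\log(1/\eps)+\log\log(mM))\log(1/\delta'))$ bits, which under our parameter choice is $\tilde{O}(\gamma^{-2/(p-1)})$. By the Jowhari--Saglam--Tardos bound, each perfect $\ell_0$ sampler uses $O(\log^2 n \log(1/\delta''))$ bits for its individual failure probability $\delta''$; since we set $\delta''$ so the union bound over all $2t$ samplers is at most $\delta/8$, $\log(1/\delta'')$ contributes only a logarithmic factor, and multiplying by $2t = \tilde{O}(\gamma^{-2/(p-1)})$ samplers yields a total of $\tilde{O}(\gamma^{-2/(p-1)})$ bits. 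Summing the two contributions gives the claimed $\tilde{O}(\gamma^{-2/(p-1)})$ space.

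For the update-time bound, I would observe that on receiving a stream update $(i,\pm c)$, only the components that sketch $\bx$ need to process it (the ``simulate update $x_i \leftarrow x_i - b$'' line is post-processing at query time, not part of update handling). Each of $L_0^1, L_0^2$ processes the update in $O(1)$ time by \Cref{prelim:lSk}, and each of the $2t$ perfect $\ell_0$ samplers processes it in $\poly(\log n)\cdot \log(1/\delta'')$ time; multiplying by the $2t = \tilde{O}(\gamma^{-2/(p-1)})$ samplers gives $\tilde{O}(\gamma^{-2/(p-1)})$ total update time, which dominates the $L_0$ sketch contribution. There is no real obstacle here, since both bounds follow by routine aggregation; the one detail to handle carefully is distinguishing between stream-time updates (where only $L_0^1, L_0^2$ and the $2t$ samplers are touched once each) and query-time work (the $x_i \leftarrow x_i - b$ simulation for $\mathcal{S}_2$), and confirming the parameter settings push the $\eps$ dependence through as $\gamma^{-2/(p-1)}$ rather than a higher power.
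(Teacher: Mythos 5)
Your proposal is correct and follows the same route as the paper's (much terser) proof: the bound is just the aggregate cost of the two $L_0$ sketches and the $2t = \tilde{O}(\gamma^{-2/(p-1)})$ perfect $\ell_0$ samplers, with $\eps = \Theta(\gamma^{1/(p-1)})$ and $p$ constant. Your extra bookkeeping on the per-sketch bounds and the stream-time versus query-time distinction is a more detailed version of the same argument, not a different one.
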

\begin{proof}
    We keep two $L_0$ sketches and $2t = \tilde{O}(\gamma^{-\frac{2}{p-1}})$ perfect $\ell_0$ samplers. Recall that $p$ is a constant. 
\end{proof}

Now we prove correctness. We first give the following result which we will use throughout the proof. 
\begin{lemma}[Lemma $3$ of Bhattacharyya, Dey, and Woodruff~\cite{BDW2016optimal}] \label{lem:relF}
    Let $f_i$ and $\hat{f_i}$ be the frequencies of an item $i$ in a stream $\mathcal{S}$ (of length $n$) and in a random sample of $\mathcal{T}$ of size $r$ from $\mathcal{S}$, respectively. Then for $r \geq 2 \gamma^{-2} \log(2 \delta^{-1})$, with probability $1-\delta$, for every universe item $i$ simultaneously, 
    \[
   \Big{|}\frac{\hat{f_i}}{r} - \frac{f_i}{n} \Big{|} \leq \gamma. 
   \]
\end{lemma}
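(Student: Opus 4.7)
The statement is a classical uniform-concentration result for sampled frequencies, so the plan is to combine a per-item Chernoff/Hoeffding tail bound with a uniform-convergence step that handles every universe item simultaneously.

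First I would fix a single universe item $i$ with true relative frequency $p_i = f_i / n$. Modeling $\mathcal{T}$ as $r$ independent uniform draws from $\mathcal{S}$, the count $\hat{f_i}$ is a sum of $r$ i.i.d.\ Bernoulli$(p_i)$ random variables, so Hoeffding's inequality gives
\[
\Pr\!\left[\Big|\tfrac{\hat{f_i}}{r} - p_i\Big| \geq \gamma \right] \leq 2 \exp(-2 r \gamma^2).
\]
Plugging in $r \geq 2 \gamma^{-2} \log(2 \delta^{-1})$ bounds this probability by $\delta$ at the fixed $i$, which is the single-item version of the lemma.

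Next I would lift this to a simultaneous bound over every item in the universe. A naive union bound over all potentially $n$ distinct items would inflate the sample-size requirement by a $\log n$ factor, which the stated rate does not allow. I would instead split items by frequency: the ``heavy'' items with $p_i \geq \gamma/2$ number at most $2/\gamma$ and can be handled by a direct union bound of the single-item Hoeffding estimate, still absorbed by the stated $r$. For the ``light'' items ($p_i < \gamma/2$) the target inequality reduces to showing $\hat{f_i}/r < 3\gamma/2$, which I would control collectively via a uniform-convergence argument on the class of singleton indicators $\{\mathbf{1}[\cdot = i]\}_i$. This class has VC dimension $1$, so the VC inequality yields dimension-free uniform convergence at sample size $O(\gamma^{-2}\log(1/\delta))$, matching the stated bound without paying a $\log n$.

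The main obstacle is precisely this uniformization step: avoiding the $\log n$ penalty when quantifying over all universe items at once. The VC-dimension-$1$ argument for singletons (or equivalently a Dvoretzky--Kiefer--Wolfowitz-style uniform bound) is the cleanest resolution I know; once it is in place, combining it with the per-item Hoeffding estimate yields the simultaneous additive $\gamma$ guarantee claimed by the lemma.
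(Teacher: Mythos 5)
First, a point of reference: the paper does not prove this statement at all --- it is imported verbatim as Lemma 3 of Bhattacharyya, Dey, and Woodruff, so there is no in-paper proof to compare against. (The only hint about the cited proof is the later remark in the proof of the lemma bounding $\sum_{i \in \mathcal{B}, i>1} \Delta(f_i)$, which extracts $\E[f_i'] = r \cdot f_i$ and $\textnormal{Var}[f_i'] \leq r \cdot f_i$ from it, suggesting the source argument is an expectation/variance, Chebyshev- or Bernstein-style argument rather than your VC route.) So your proposal should be judged on its own merits, and there it has real gaps as written. The most concrete one is the light-item reduction: for an item with $p_i < \gamma/2$, showing $\hat{f_i}/r < 3\gamma/2$ is \emph{necessary} but not \emph{sufficient} for $|\hat{f_i}/r - p_i| \leq \gamma$ (take $p_i = \gamma/100$ and $\hat{f_i}/r = 1.2\gamma$: the threshold is met but the deviation exceeds $\gamma$), so that step does not establish the claim for light items. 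Second, the constants do not close. The heavy/light split forces a union bound over up to $2/\gamma$ heavy items, which costs an extra $\log(1/\gamma)$ inside the exponent; with the stated $r = 2\gamma^{-2}\log(2\delta^{-1})$ this is only absorbed when $\gamma$ is not too small relative to $\delta$ (roughly $\gamma \gtrsim \delta^{3}$), so it does not prove the lemma as stated in all regimes. Likewise, the classical VC bound for a dimension-$1$ class carries a $\sqrt{\log r / r}$ term (Sauer--Shelah symmetrization); removing it requires chaining or a sharper inequality, so ``VC dimension $1$'' alone does not deliver the clean $2\gamma^{-2}\log(2\delta^{-1})$ constant.

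The fix is the tool you mention only in passing: apply the Dvoretzky--Kiefer--Wolfowitz inequality with Massart's constant directly to the empirical CDF of the sample. It gives $\Pr[\sup_x |\hat{F}(x) - F(x)| > \gamma/2] \leq 2\exp(-2r(\gamma/2)^2) \leq \delta$ exactly when $r \geq 2\gamma^{-2}\log(2\delta^{-1})$, and since each singleton frequency is a difference of two CDF values, $|\hat{f_i}/r - f_i/n| \leq 2\sup_x|\hat{F}(x) - F(x)| \leq \gamma$ simultaneously for \emph{all} items --- no heavy/light split, no union bound, no per-item Hoeffding step, and the stated constant comes out exactly. You should also say a word about the sampling model: if $\mathcal{T}$ is drawn without replacement, the i.i.d.\ assumption behind Hoeffding/DKW needs the standard remark that sampling without replacement is at least as concentrated, so the bound only improves.
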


We will assume that all $L_0$ sketches and $\ell_0$ samplers succeed in the following, which is true with probability $1-3\delta/8$.
By \Cref{lem:relF}, we have that using the $\ell_0$ samplers (from $\mathcal{S}_1$ and $\mathcal{S}_2$) to estimate the frequencies incurs error at most $2 \delta / 8$. So, we show that we incur at most $3\delta/8$ error in the rest of the algorithm. 

For the rest of the analysis, let us order the frequencies of the distinct values of vector $\bx$ in non-increasing order as $f_1 \geq f_2 \geq \ldots \geq f_z$. 

In the algorithm we determine the value of highest frequency and estimate its frequency as $f'_b$. If $f'_b$ is too small, we simply output $n^p$. We now show that this is a good approximation. 
\begin{lemma} \label{cl:smallf}
    If $f_1 \leq \gamma^{\frac{1}{p-1}} \cdot n$, then $n^p$ is a $(1\pm \gamma^{\frac{1}{p-1}})$ approximation to $n^p- \sum_{i \in \mathcal{Z}}f_i^p$.
\end{lemma}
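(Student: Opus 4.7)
The plan is to unpack what a ``$(1\pm\gamma^{1/(p-1)})$-approximation'' means here: the additive error of replacing $n^p - \sum_{i\in\mathcal{Z}} f_i^p$ by $n^p$ is exactly $\sum_{i\in\mathcal{Z}} f_i^p$, so the lemma reduces to showing
\[
\sum_{i\in\mathcal{Z}} f_i^p \;\leq\; \gamma^{1/(p-1)} \cdot \Bigl(n^p - \sum_{i\in\mathcal{Z}} f_i^p\Bigr).
\]
Thus the whole task boils down to upper bounding $\sum_i f_i^p$ in terms of $n^p$ using only the hypothesis $f_1 \leq \gamma^{1/(p-1)} n$, where $f_1$ is the largest frequency.

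The workhorse inequality will be the elementary bound $\sum_i f_i^p \leq f_1^{p-1}\sum_i f_i$, which just says that replacing each $f_i$ by the maximum only increases the sum. Since every coordinate of $\bx$ contributes to exactly one frequency class (including the zero-value class), $\sum_i f_i = n$, and plugging in the hypothesis yields
\[
\sum_i f_i^p \;\leq\; \bigl(\gamma^{1/(p-1)} n\bigr)^{p-1}\cdot n \;=\; \gamma \cdot n^p.
\]
I will then use that $\gamma\in(0,1)$ and $p\geq 2$ imply $\gamma \leq \gamma^{1/(p-1)}$, and that $\gamma$ may be taken at most $1/2$ (otherwise the approximation statement is trivial), so $n^p - \sum_i f_i^p \geq (1-\gamma)n^p \geq n^p/2$. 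Combining,
\[
\sum_i f_i^p \;\leq\; \gamma\cdot n^p \;\leq\; 2\gamma\cdot\Bigl(n^p - \sum_i f_i^p\Bigr) \;\leq\; \gamma^{1/(p-1)}\cdot\Bigl(n^p - \sum_i f_i^p\Bigr),
\]
which is exactly the required bound.

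I do not anticipate any real obstacle: the argument is essentially two lines once the multiplicative statement is rewritten additively, and the only piece of mild care needed is the absolute-constant bookkeeping in the last step. This is precisely why \hyperref[alg:sketchAlg]{p-Tuples-Sketch} guards with the slightly stronger threshold $\tfrac{3}{4}\gamma^{1/(p-1)}n$ rather than $\gamma^{1/(p-1)}n$ itself---any such constant loss can be absorbed by rescaling $\gamma$ by an absolute constant at the top of the algorithm.
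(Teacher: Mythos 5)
Your proof is correct and essentially matches the paper's: both reduce the claim to showing $\sum_{i\in\mathcal{Z}} f_i^p \leq \gamma\, n^p$ using $\sum_i f_i = n$ and the cap $f_1 \leq \gamma^{1/(p-1)}n$ (you via the one-liner $\sum_i f_i^p \leq f_1^{p-1}\sum_i f_i$, the paper via the equivalent extremal configuration of $\gamma^{-1/(p-1)}$ values each at the cap). The only blemish is that your last step $2\gamma \leq \gamma^{1/(p-1)}$ fails literally at $p=2$ (and for $\gamma$ near $1/2$ when $p$ is slightly larger), but this is the same constant-factor slack the paper itself leaves implicit, and as you note it is absorbed by the algorithm's rescaling of $\eps$ and its stricter $\tfrac{3}{4}\gamma^{1/(p-1)}n$ threshold.
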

\begin{proof}
Since $f_1 \leq \gamma^{\frac{1}{p-1}} \cdot n$, it must be that $f_i \leq \gamma^{\frac{1}{p-1}} \cdot n$ for all $i$. In this case,  $\sum_{i \in \mathcal{Z}} f_i^p$ is maximized when there are 
$1/\gamma^{\frac{1}{p-1}}$  values each with true frequency $\gamma^{\frac{1}{p-1}} \cdot  n$. So we have 
    \[\sum_{i \in \mathcal{Z}} f_i^p \leq \sum_{i=1}^{(\frac{1}{\gamma})^{\frac{1}{p-1}}} (\gamma^{\frac{1}{p-1}} \cdot n)^{p} = 
    \left (\frac{1}{\gamma}\right )^{\frac{1}{p-1}} \cdot \gamma^{{\frac{p}{p-1}}} \cdot n^p = \gamma \cdot n^p.
    \]  
\end{proof}
Recall that in the algorithm we have $\eps = \frac{\gamma^{\frac{1}{p-1}}}{16 \cdot 2^p}$.
\begin{lemma} \label{cl:errS}
For all values $v$ simultaneously, $f'_v = f'_{\mathcal{S}_1, v} \cdot w_1 / t = f_v \pm 3\eps n$. $f'_{\mathcal{S}_1, v}$ denotes the frequency of $v$ among the uniform samples which make up $\mathcal{S}_1$. 
\end{lemma}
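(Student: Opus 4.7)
The plan is to combine the two sources of randomness --- the $L_0^1$ sketch estimating the number of nonzero entries of $\bx$, and the $t$ perfect $\ell_0$ samplers in $\mathcal{S}_1$ --- and then account for how their errors propagate when we multiply them together. Throughout I condition on the successful events of both $L_0^1$ and all $\ell_0$ samplers in $\mathcal{S}_1$, which by the union bound in the algorithm setup holds with the promised probability; the failure probability of the samplers is absorbed into the $\delta/8$ budget and the perfect samplers behave as exact i.i.d.\ uniform draws from the nonzero coordinates of $\bx$ up to an $n^{-c}$ slack that we ignore.

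The first step is to apply Lemma \ref{lem:relF} with stream length $L_0(\bx)$ (the true number of nonzero entries of $\bx$), sample size $r = t = 2\eps^{-2}\log(2(\delta/8)^{-1})$, and error parameter $\eps$. This gives that simultaneously for every value $v$,
\[
\Bigl|\tfrac{f'_{\mathcal{S}_1, v}}{t} - \tfrac{f_v}{L_0(\bx)}\Bigr| \;\le\; \eps.
\]
The second step is to use the guarantee of $L_0^1$, which is set with relative error $\eps$, so that $w_1 = (1 \pm \eps)\, L_0(\bx)$, and in particular $w_1 \le (1+\eps)\,L_0(\bx) \le (1+\eps)\, n$.

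The third step is to multiply through. Writing $f'_v = f'_{\mathcal{S}_1, v}\cdot w_1/t$, the bound from the first step yields
\[
f'_v \;=\; \tfrac{f_v}{L_0(\bx)} \cdot w_1 \;\pm\; \eps\, w_1,
\]
and substituting $w_1/L_0(\bx) = 1 \pm \eps$ from the second step gives
\[
f'_v \;=\; f_v(1 \pm \eps) \;\pm\; \eps(1+\eps)\, n.
\]
Since $f_v \le n$ and $\eps \le 1$, the total additive error is at most $\eps n + \eps(1+\eps) n \le 3\eps n$, which is exactly the claimed bound.

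The only ``obstacle'' is being careful that Lemma \ref{lem:relF} is applied to the correct base stream: the $\ell_0$ samplers draw from the nonzero entries of $\bx$, not from all $n$ coordinates, which is why the relative frequency inside $\mathcal{S}_1$ approximates $f_v/L_0(\bx)$ rather than $f_v/n$, and why rescaling by $w_1$ (an estimate of $L_0(\bx)$) rather than by $n$ is the right move. Everything else is routine error accounting, and the union bound over values is already built into Lemma \ref{lem:relF}.
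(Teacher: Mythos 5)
Your proposal is correct and matches the paper's argument: both combine the simultaneous additive-error guarantee of Lemma \ref{lem:relF} applied to the uniform samples in $\mathcal{S}_1$ (with base the nonzero entries of $\bx$) with the $(1\pm\eps)$ relative guarantee of $L_0^1$ for the rescaling factor $w_1$, yielding combined additive error at most $(2\eps+\eps^2)n \le 3\eps n$. Your extra care in noting that the samplers' relative frequencies approximate $f_v/L_0(\bx)$, hence the rescaling by $w_1$, is just a more explicit rendering of the same step the paper performs.
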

\begin{proof}
By the correctness of $L_0^1$, we have that $w_2$ is a $(1 \pm \eps)$ relative approximation to the number of nonzero entries in $\bx$. By \Cref{lem:relF}, we incur at most $\eps n$ additive error for estimating each $f'_v$ simultaneously using $\mathcal{S}_1$ if $w_2$ was the exact number of nonzeros in $\bx$. So the combined error is at most $(2 \eps + \eps^2) \cdot n \leq 3 \eps \cdot n$.
\end{proof}

In addition, we use $L_0^1$ to determine the number of $0$'s to see if $0$ is the value of the largest frequency. This incurs at most $\eps \cdot n$ error.
Recall that in the algorithm we output $n^p$ if we estimate $f'_b$ to be less than $\frac{3\gamma^{\frac{1}{p-1}}}{4} \cdot n$. Since by \Cref{cl:errS} and the correctness of $L_0^1$ we know the error in estimating each frequency is at most $3\eps \cdot n \leq \frac{\gamma^{\frac{1}{p-1}}}{4} \cdot n$, at worst all values had frequency 
$\gamma^{\frac{1}{p-1}} \cdot n$, and we output $n^p$. This does not incur too much error by \Cref{cl:smallf}. 

In the rest of the analysis, we can assume that $f_1 \geq \frac{\gamma^{\frac{1}{p-1}}}{2} \cdot n$. 
We now claim that incurring additive $\gamma^{\frac{1}{p-1}} \cdot f_1^{p-1} \cdot (n-f_1)$ error still gives us the desired error guarantee. 

\begin{lemma}
\label{cl:genError}
  Incurring $\gamma^{\frac{1}{p-1}} \cdot f_1^{p-1} \cdot (n-f_1)$ error gives us $\gamma^{\frac{1}{p-1}} \cdot (n^p - F_p)$ total error. 
\end{lemma}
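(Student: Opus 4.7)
The plan is to reduce the inequality we need to establish, namely $\gamma^{1/(p-1)} f_1^{p-1}(n-f_1) \le \gamma^{1/(p-1)} (n^p - F_p)$, to the purely combinatorial statement $f_1^{p-1}(n - f_1) \le n^p - F_p$ by dividing out the common factor $\gamma^{1/(p-1)}$. The rest of the work is to show this deterministic inequality using only the facts that $f_1 \ge f_i$ for all $i$ and that $\sum_{i} f_i = n$ (recall that in the definition of $F_p$ used in this paper the value $0$ is also counted as a distinct value, so the frequencies sum to exactly $n$).

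First I would give an upper bound on $F_p$. Since every $f_i$ with $i \ge 2$ satisfies $f_i \le f_1$, each term $f_i^p = f_i \cdot f_i^{p-1} \le f_i \cdot f_1^{p-1}$, so
\[
F_p = f_1^p + \sum_{i \ge 2} f_i^p \;\le\; f_1^p + f_1^{p-1}\sum_{i \ge 2} f_i \;=\; f_1^p + f_1^{p-1}(n - f_1).
\]
Second, I would give a lower bound on $n^p - f_1^p$ via the standard factorization
\[
n^p - f_1^p \;=\; (n - f_1)\bigl(n^{p-1} + n^{p-2}f_1 + \cdots + f_1^{p-1}\bigr).
\]
Since $0 \le f_1 \le n$, every one of the $p$ terms in the second factor is at least $f_1^{p-1}$, yielding $n^p - f_1^p \ge p \cdot f_1^{p-1}(n - f_1)$.

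Combining the two bounds gives
\[
n^p - F_p \;\ge\; n^p - f_1^p - f_1^{p-1}(n - f_1) \;\ge\; (p-1)\, f_1^{p-1}(n - f_1) \;\ge\; f_1^{p-1}(n - f_1),
\]
where the last step uses the hypothesis $p \ge 2$. Multiplying through by $\gamma^{1/(p-1)}$ yields the claimed bound. The argument is essentially elementary; the only mild subtlety is making sure we use the convention that the value $0$ contributes to $F_p$ so that $\sum_i f_i = n$, and noting that the step $n^p - f_1^p \ge p f_1^{p-1}(n-f_1)$ needs $p \ge 2$ to leave a full $f_1^{p-1}(n-f_1)$ left over after subtracting the $F_p$ slack — this is the one place $p\ge 2$ is actually used, so I would call that out explicitly.
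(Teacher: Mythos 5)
Your proof is correct, and it takes a different route from the paper's. The paper establishes the same key inequality, $f_1^{p-1}(n-f_1) \le n^p - F_p$, by a counting argument: $n^p - F_p$ counts the $p$-tuples (with repetition) whose entries do not all share the same value, and $f_1^{p-1}(n-f_1)$ counts the sub-collection of tuples whose first $p-1$ entries carry the most frequent value and whose last entry carries a different one; the paper then adds a short remark that $f_1 \ge p$ can be assumed, which your argument never needs. You instead prove the inequality algebraically, via the two bounds $F_p \le f_1^p + f_1^{p-1}(n-f_1)$ (using $f_i \le f_1$ and $\sum_i f_i = n$) and $n^p - f_1^p \ge p\, f_1^{p-1}(n-f_1)$, which combine to give the stronger conclusion $n^p - F_p \ge (p-1) f_1^{p-1}(n-f_1)$. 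Your version is arguably cleaner and carries a sharper constant; it also makes explicit exactly where $p \ge 2$ enters. One small point: as written, your factorization of $n^p - f_1^p$ assumes $p$ is an integer (as does the paper's tuple-counting interpretation); for general real $p \ge 2$ as in the theorem statement you would replace it by convexity of $x \mapsto x^p$, i.e.\ $n^p - f_1^p \ge p f_1^{p-1}(n - f_1)$, which is all you use anyway. Your side remark about the value $0$ contributing a frequency so that $\sum_i f_i = n$ is consistent with the paper's definition of $F_p$.
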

\begin{proof}
We have that $n^p -F_p \geq f_1^{p-1} \cdot (n-f_1)$ for $p \geq 2$. $n^p - F_p$
counts the number of $p$-tuples (allowing repetitions) in which not all entries of the tuple have the same value. The right hand side counts $p$-tuples in which all but one entry are equal to the value of highest frequency (i.e. $f_1$) and the last has a different value. We can see that the right hand side is a subset of the left hand side. 
Note that we can assume $p \leq \frac{\gamma}{2} \cdot n$ since $p$ is a constant. Since we said we could assume that $f_1 \geq \frac{\gamma^{\frac{1}{p-1}}}{2} \cdot n$, we have that $f_1 \geq p$.  
\end{proof} 

Recall that we find our final approximation of $f'_b$ separately if our initial estimate shows that it is greater than $n/2$. Specifically, we instead subtract off $b'$ from $L_0^2$ and query it to get $w_2$. Then we set $f'_b = n - w_2$. We show that this does not incur too much error. 
\begin{lemma} 
\label{cl:bEx}
    If $f_1 \geq \frac{2n}{3}$, 
     the error incurred from our estimate of $f_1$ is at most $\frac{\gamma^{\frac{1}{p-1}}}{3} \cdot f_1^{p-1} \cdot (n - f_1)$.  
\end{lemma}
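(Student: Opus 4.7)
\medskip
\noindent\textbf{Proof proposal for \Cref{cl:bEx}.}
The plan is to bound the quantity $|(f'_b)^p - f_1^p|$, which is the amount that replacing the true top frequency $f_1$ by its estimate $f'_b$ contributes to the final output $n^p - \sum_j (f'_j)^p$. The key conceptual point is that whenever $f'_b > n/2$, the algorithm re-estimates $f'_b$ using $L_0^2$ applied to $\bx - b\cdot\mathbf{1}$, and this gives an additive error that scales with $n - f_1$ rather than with $n$. This is precisely what is needed to eventually convert the error into a relative $\gamma^{1/(p-1)}$ error on $n^p - F_p$ via \Cref{cl:genError}.

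First I would verify that under the hypothesis $f_1 \geq 2n/3$, the value $b$ selected by the algorithm is indeed the value of true frequency $f_1$, and that the re-estimation branch $f'_b > n/2$ is triggered. Both facts follow from \Cref{cl:errS} (and the analogous $\eps n$ error bound for the estimate $n - w_1$ of the zero-frequency via $L_0^1$): the candidate value has estimated frequency at least $f_1 - 3\eps n \geq 2n/3 - 3\eps n > n/2$, while every other value has estimated frequency at most $(n - f_1) + 3\eps n \leq n/3 + 3\eps n$, so for $\eps$ as small as chosen we correctly identify $b$ and enter the re-estimation branch. Once $b$ is correct, the vector $\bx - b\cdot\mathbf{1}$ has exactly $n - f_1$ nonzero entries, so the $L_0^2$ guarantee gives $w_2 = (1 \pm \eps)(n - f_1)$, hence
\[
|f'_b - f_1| \;=\; |w_2 - (n - f_1)| \;\leq\; \eps(n - f_1).
\]

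The main step is to pass from this additive bound on $|f'_b - f_1|$ to a bound on $|(f'_b)^p - f_1^p|$. Write $f'_b = f_1 + e$ with $|e| \leq \eps(n - f_1)$. Since $f_1 \geq 2n/3$ implies $n - f_1 \leq f_1/2$, we get $|e| \leq \eps f_1 / 2$, and in particular $\max(f_1,f'_b) \leq (1 + \eps/2) f_1$. By the mean value theorem applied to $x \mapsto x^p$,
\[
|(f'_b)^p - f_1^p| \;\leq\; p\cdot \max(f_1,f'_b)^{p-1} \cdot |e| \;\leq\; p\,(1+\eps/2)^{p-1}\,f_1^{p-1}\cdot \eps(n - f_1).
\]
For constant $p$ and $\eps$ small (which holds by choice), $(1 + \eps/2)^{p-1} \leq 2$. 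Plugging in $\eps = \gamma^{1/(p-1)}/(16\cdot 2^p)$ yields the bound $\tfrac{p}{8\cdot 2^p}\cdot \gamma^{1/(p-1)}\,f_1^{p-1}(n - f_1)$, which is at most $\tfrac{\gamma^{1/(p-1)}}{3}\,f_1^{p-1}(n - f_1)$ because $3p \leq 8\cdot 2^p$ for all integers $p \geq 2$.

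The only mildly delicate part is the binomial/MVT step: higher-order terms in the expansion of $(f_1 + e)^p$ must not overwhelm the leading term $p f_1^{p-1} e$. This is exactly what the factor $2^p$ built into the definition of $\eps$ absorbs, and the hypothesis $f_1 \geq 2n/3$ (rather than merely $f_1 > n/2$) is what makes $|e|/f_1 \leq \eps/2$ small enough for that absorption to go through cleanly.
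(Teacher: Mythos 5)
Your proposal is correct and follows essentially the same route as the paper: you first argue (via \Cref{cl:errS} and the gap between $f_1 \geq 2n/3$ and all other frequencies) that $b$ is identified correctly and the re-estimation branch is taken, then use the $L_0^2$ guarantee on $\bx - b\cdot\mathbf{1}$ to get $|f'_b - f_1| \leq \eps(n-f_1)$, and finally bound $|(f'_b)^p - f_1^p|$ by roughly $p\,f_1^{p-1}\eps(n-f_1)$, absorbed by the choice $\eps = \gamma^{1/(p-1)}/(16\cdot 2^p)$. The only cosmetic difference is that you control the polynomial increment with the mean value theorem (using $n-f_1 \leq f_1/2$ to keep $|e|/f_1$ small), whereas the paper expands $(f_1+\Delta)^p - f_1^p$ binomially and bounds each term by $f_1^{p-1}$, incurring the $2^p$ factor that the same choice of $\eps$ cancels.
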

\begin{proof}
Let us denote the distinct value that has frequency $f_1$ in $\bx$ as $b$. By \Cref{cl:errS}, we incur at most $3\eps \cdot n$ error in estimating the frequency of $b$ using $\mathcal{S}_1$ (or at most $\eps \cdot n$ error if $b = 0$). Since $f_1 \geq 2n/3$, the next largest frequency is at most $n/3$. Therefore we will not mistake another value for $b$. In addition we will estimate $f'_b$ to be at least $n/2$. 

Since we correctly identify $b$ and correctly determine that its frequency is at least $n/2$, 
in our algorithm we will subtract off $b$ from $L_0^2$ and then query it to get $w_2$. Then we estimate the frequency as $n - w_2$. 

Since $L_0^2$ is a linear sketch, ``subtracting off $b$" means that we simulate updating all the entries of underlying vector $\bx$ by subtracting $b$ from them. Then querying $L_0^2$ will give us the number of entries of $\bx$ that are not equal to $b$. 

By the properties of $L_0^2$, 
we incur at most $\Delta(f) = \eps \cdot (n-f_1) $. Therefore, our total error is at most  
\begin{align*}
(f_1 + \Delta(f))^{p} - f_1^p  = 
\sum_{j=1}^{p} \left[ \binom{p}{j}f_1^{p-j}\Delta(f)^j \right]
& =
\Delta(f) \sum_{j=1}^{p} \left[ \binom{p}{j}f_1^{p-j}\Delta(f)^{j-1} \right] 
\\ & \leq 
\Delta(f) \sum_{j=1}^{p} \left[ \binom{p}{j}f_1^{p-1}\right] = \Delta(f) f_1^{p-1} \cdot 2^p 
\end{align*}
giving us the desired error.
Note that our estimate of $f_1$ could have been $f_1 - \Delta(f)$ but we have 
$\left | (f_1 + \Delta(f))^{p} - f_1^p \right | \geq \left | (f_1 - \Delta(f))^{p} - f_1^p \right|.$
\end{proof}
Let us now consider the case where we do not have $f_1 \geq 2n/3$ but in the algorithm we identify a value $v$ with estimated frequency $f'_v \geq n/2$. By \Cref{cl:errS}, we incur at most $3\eps \cdot n$ error in estimating frequencies using $\mathcal{S}_1$ (or by $L_0^1$ for the value $0$). So it must be that $f_v = \Theta(n)$, and by the same analysis as  \Cref{cl:bEx} we get that this does not incur too much error. 

We now show that estimating the values of frequency at least $\frac{\gamma^{\frac{1}{p-1}}}{2} \cdot (n - f_1)$ does not incur too much error. We denote a set $\mathcal{F}$ which contains every value of $\bx$ with frequency at least $\frac{\gamma^{\frac{1}{p-1}}}{2} \cdot  (n-f_1)$. 

\begin{lemma}
\label{cl:otItems}
     The error incurred from estimating $\sum_{i \in \mathcal{F}} f_i^p$ is at most $\frac{\gamma^{\frac{1}{p-1}}}{3} \cdot f_1^{p-1} \cdot (n - f_1)$ with probability at least $1-\delta/8$. 
\end{lemma}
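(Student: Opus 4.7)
The plan is to bound the total error in $\sum_{v \in \mathcal{F}, v \neq b}(f_v')^p$ versus $\sum_{v \in \mathcal{F}, v \neq b} f_v^p$; the contribution of the $v = b$ term to $\sum_{v \in \mathcal{F}} f_v^p$ is already controlled by \Cref{cl:bEx}. I would proceed in four steps.

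First, I would bound $|\mathcal{F} \setminus \{b\}|$. Every $v \in \mathcal{F} \setminus \{b\}$ has true frequency at least $\tfrac{\gamma^{1/(p-1)}}{2}(n - f_1)$, while the sum of frequencies over $v \neq b$ equals $n - f_1$. Hence $|\mathcal{F} \setminus \{b\}| \leq 2 \gamma^{-1/(p-1)}$.

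Second, I would establish a uniform per-value sampling guarantee. The simulated update in line~20 sends $x_i \leftarrow x_i - b$; after this, the underlying vector has exactly $n - f_1$ nonzero entries (since the only coordinates equal to $b$ are those of the top-frequency value). The $t$ perfect $\ell_0$ samplers in $\mathcal{S}_2$ then realize $t$ i.i.d.\ uniform draws from this support. Applying \Cref{lem:relF} with $r = t = 2\eps^{-2}\log(2(\delta/8)^{-1})$, with probability at least $1 - \delta/8$ the sample frequencies $\hat f_v$ satisfy $|\hat f_v/t - f_v/(n-f_1)| \leq \eps$ simultaneously for every value $v$. Combining with the $(1 \pm \eps)$-approximation $w_2$ to $n - f_1$ from $L_0^2$ and the scaling $f_v' = \hat f_v \cdot w_2/t$, I would conclude that $|f_v' - f_v| \leq 3\eps(n - f_1)$ simultaneously for all $v$.

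Third, I would lift to $p$-th powers via the mean value theorem: $|(f_v')^p - f_v^p| \leq p \cdot |f_v' - f_v| \cdot \max(f_v, f_v')^{p-1}$. Since $v \in \mathcal{F}$ gives $f_v \geq \tfrac{\gamma^{1/(p-1)}}{2}(n-f_1)$ and the additive error $3\eps(n-f_1) = \tfrac{3\gamma^{1/(p-1)}}{16 \cdot 2^p}(n-f_1)$ is therefore at most $f_v$, we have $\max(f_v, f_v') \leq 2 f_v$ and thus $\max(f_v, f_v')^{p-1} \leq 2^{p-1} f_v^{p-1}$. Summing over $v \in \mathcal{F} \setminus \{b\}$ and invoking the power-sum inequality $\sum_v f_v^{p-1} \leq (\sum_v f_v)^{p-1} \leq (n-f_1)^{p-1}$ (valid on non-negative reals since $p - 1 \geq 1$), the total error is at most $3p \cdot 2^{p-1}\eps(n-f_1)^p$.

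Finally, I would plug in $\eps = \gamma^{1/(p-1)}/(16 \cdot 2^p)$ to obtain the bound $\tfrac{3p}{32}\gamma^{1/(p-1)}(n-f_1)^p$, and compare to the target $\tfrac{\gamma^{1/(p-1)}}{3} f_1^{p-1}(n-f_1)$. The main obstacle is the comparison $(n-f_1)^{p-1} \lesssim f_1^{p-1}$: in the regime $f_1 \geq n/2$ (the case in which the algorithm's $L_0^2$ branch is taken, so the precise form of the error in \Cref{cl:otItems} matters) this is immediate, while in the complementary regime $\tfrac{\gamma^{1/(p-1)}}{2}n \leq f_1 < n/2$ (guaranteed by \Cref{cl:smallf}) I would use $f_1^{p-1}(n-f_1) = \Omega(\gamma \cdot n^p)$ to match the error bound $O(\gamma^{1/(p-1)} n^p)$ up to constants, with the factor $16 \cdot 2^p$ in the choice of $\eps$ calibrated to absorb them.
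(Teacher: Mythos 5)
Your argument does not establish the lemma as stated, and the failure point is exactly where the paper needs its one non-obvious idea. Your final bound is $3p\,2^{p-1}\eps(n-f_1)^p \approx \frac{3p}{32}\gamma^{1/(p-1)}(n-f_1)^p$, so you need $(n-f_1)^{p-1}$ to be at most a constant times $f_1^{p-1}$. But the lemma is invoked whenever $f_1 \geq \frac{\gamma^{1/(p-1)}}{2}n$ (the only assumption in force after \Cref{cl:smallf}), and when $f_1 = \Theta(\gamma^{1/(p-1)}n)$ with $n-f_1 = \Theta(n)$ the allowed error $\frac{\gamma^{1/(p-1)}}{3}f_1^{p-1}(n-f_1) = \Theta(\gamma^{p/(p-1)}n^p)$, whereas your bound is $\Theta(\gamma^{1/(p-1)}n^p)$ --- a deficit of $1/\gamma$, not a constant, so it cannot be absorbed by the $16\cdot 2^p$ calibration of $\eps$ as your last paragraph claims. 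Moreover this is not loose bookkeeping that a sharper per-value analysis could fix: with $\Theta(\gamma^{-1/(p-1)})$ values each of frequency $\Theta(\gamma^{1/(p-1)}(n-f_1))$, the simultaneous per-value guarantee of \Cref{lem:relF} permits every estimate to be off by the full $3\eps(n-f_1)$ in the same direction, which would produce error $\Theta(\gamma(n-f_1)^p)$ in $\sum_v f_v^p$, exceeding the lemma's budget by $\Theta(\gamma^{-1/(p-1)})$. So no proof that uses only the worst-case per-value error can reach the stated bound. (A smaller, fixable issue: even in the window $n/2 \leq f_1 < 2n/3$ your constant requires $(n-f_1)^{p-1} \leq \frac{32}{9p}f_1^{p-1}$, which already fails for $p \geq 4$.)

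The missing ingredient is that the per-value errors cannot all be extremal at once: the $\mathcal{S}_2$ sample counts sum to $t$, so the deviations across values are negatively correlated and their aggregate concentrates. The paper's proof bounds each $f_i^{p-1} \leq f_1^{p-1}$ (rather than summing $f_v^{p-1}$ as you do), reducing the problem to controlling the total frequency-estimation error $\sum_i \Delta(f_i)$, and then applies Chebyshev to the aggregate sampler counts (using $\mathrm{Var}[\sum_{i\neq 1} f_i'] \leq r\sum_{i\neq 1} f_i$ and the negative covariances) to get $\sum_i \Delta(f_i) \leq \frac{\eps}{2}(n-f_1)$ --- the same order as a single value's error. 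That concentration event is precisely the source of the extra failure probability $\delta/8$ in the lemma statement, which your proof never uses. To be fair, your bound is not useless: since $n^p - F_p \geq (n-f_1)^p$ (because $F_p \leq f_1^{p-1}n$ and $n^{p-1}-f_1^{p-1} \geq (n-f_1)^{p-1}$), an error of $O(p)\cdot\gamma^{1/(p-1)}(n-f_1)^p$ is $O(p\,\gamma^{1/(p-1)})\cdot(n^p-F_p)$ and could, with a rescaled $\eps$, be folded directly into the final guarantee of \Cref{thm:genSketch}, bypassing \Cref{cl:genError}; but as written it does not prove the lemma's bound of $\frac{\gamma^{1/(p-1)}}{3}f_1^{p-1}(n-f_1)$, and you would still need to address misidentification of $b$ and membership of $\mathcal{F}$ in $\mathcal{B}$, which the paper treats explicitly.
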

\begin{proof}
We first show how much error we incur by estimating the frequency of one value in $\mathcal{F}$. By \Cref{lem:relF}, we incur $\eps(n-f_1)$ error in estimating the frequency if $w_2$ had no error in approximating $n-f_1$. By the correctness of $L_0^2$, $w_2$ is a $(1 \pm \eps)$ multiplicative approximation to $n-f_1$. Therefore the total error in approximating $f_i$ with $f_{\mathcal{S}_2, i} \cdot w_2/t$ where $f_{\mathcal{S}_2, i}$ is the frequency of $i$ in $\mathcal{S}_2$ is at most $3\eps \cdot (n-f_1)$. 

By similar reasoning above, even if we choose $b$ incorrectly, we know by \Cref{cl:errS} that $f_{b'} \geq f_b - 6 \eps \cdot n$. Therefore, subtracting off $b'$ to from $\mathcal{S}_2$ and estimate the frequencies of values in $\mathcal{F}$ by reweighing $\eps$ still gets appropriate error. 
In addition, note that because in the algorithm we add all values with estimate frequency at least $\frac{\gamma^{\frac{1}{p-1}}}{4} \cdot (n-f'_1) $, we will put all values that are in $\mathcal{F}$ in $\mathcal{B}$ correctly. We now look at the error incurred in estimating all the frequencies of values in $\mathcal{B}$. 

We denote $\Delta(f_i) = 3 \cdot \eps \cdot (n-f_1)$. Let us consider all frequencies except $f_1$. We have that the error is at most 
    \begin{align*}
    \sum_{i \in \mathcal{B}, i> 1} 
    \left[ (f'_i)^p - f_i^p \right] 
   & = \sum_{i \in \mathcal{B},i> 1} \left [ (f_i + \Delta(f_i))^p - f_i^p \right]   \\ & = \sum_{i \in \mathcal{B},i> 1} \left[ \sum_{j=1}^p \left(\binom{p}{j} f_i^{p-j} \Delta(f_i)^{j}\right)\right]
   \leq \sum_{i \in \mathcal{B},i> 1} \left[ \Delta(f_i)  \sum_{j=1}^p \left(\binom{p}{j} f_i^{p-1}\right)\right] \\&
   \leq \sum_{i \in \mathcal{B},i> 1} \left[ \Delta(f_i) \cdot 2^p \cdot f_i^{p-1} \right] = 2^p \cdot f_1^{p-1} \cdot \sum_{i \in \mathcal{B},i> 1} \Delta(f_i). 
    \end{align*}
    
We will now show that $\sum_{i \in \mathcal{B},i> 1} \Delta(f_i)$ is appropriately bounded. 
Note that $\sum_{i \in \mathcal{B},i> 1} \Delta(f_i)$ is the sum of the errors in calculating the frequencies of values in $\mathcal{B}$ (except for $f_1$). Following the proof of \Cref{lem:relF}, 
we have $\E[f_i'] = r \cdot f_i$ and $\textnormal{Var}[{f_i'}] \leq r \cdot f_i$. This gives us $\E[\sum_{i \neq 1} f_i'] = r \cdot \sum_{i \neq 1} f_i$ and $\textnormal{Var}[{\sum_{i \neq 1} f_i'}] \leq r \cdot \sum_{i \neq 1} f_i$ since the covariance of $f'_j$ and $f'_k$ for $j \neq k$ is negative. Recall that we have $\sum_{i \neq 1} f_i \leq n-f_1$ and that we set $r = \tilde{O}(\gamma^{-2})$.  So, we can now apply Chebyshev's (with success amplification) to get that with probability at least $1-\delta/8$ we have $\sum_{i \in \mathcal{B},i>1} \Delta(f_i) \leq \frac{\eps}{2} \cdot (n-f_1)$.

If we had $f_1 \leq 2n/3$, we get error $\Theta(\eps n)$ from estimating its frequency from $\mathcal{S}_1$ as proven by \Cref{cl:errS}.  Since we know that $\Theta(n - \eps n) \leq f_1 \leq 2n/3$, by re-weighing $\eps$ we get appropriate error. 
\end{proof}

We now deal with values $j$ such that $f_j \leq \frac{\gamma^{\frac{1}{p-1}}}{2} \cdot (n-f_1).$ We potentially do not approximate these frequencies. However, their contribution to $\sum f_i^p$ is low, and they give us small error as show below. 

\begin{lemma}
\label{cl:smallEN}
     The error incurred by not estimating values with frequency less than $\frac{\gamma^\frac{1}{p-1}}{2} \cdot  (n-f_1)$ is at most $\frac{\gamma^{\frac{1}{p-1}}}{3} \cdot 
     (n^p - F_p)$. 
\end{lemma}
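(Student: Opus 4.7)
The plan is a direct two-step comparison: upper bound $\sum f_v^p$ over the small-frequency values using the cap on each such $f_v$, then divide by a clean lower bound on $n^p - F_p$ that follows from $f_1$ being the largest frequency.

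For the upper bound, every small value satisfies $f_v \leq B := \frac{\gamma^{1/(p-1)}}{2}(n - f_1)$. Factoring one copy of $f_v$ out of $f_v^p$ and using $\sum_{v \neq 1} f_v = n - f_1$, I would conclude that the contribution of all small values is at most $B^{p-1}(n - f_1) = \gamma(n - f_1)^p / 2^{p-1}$.

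For the lower bound, since $f_v \leq f_1$ for every $v$, one has $F_p = \sum_v f_v^p \leq f_1^{p-1} \sum_v f_v = n f_1^{p-1}$. Hence $n^p - F_p \geq n(n^{p-1} - f_1^{p-1})$, and factoring $n^{p-1} - f_1^{p-1} = (n - f_1)\sum_{i=0}^{p-2} n^{p-2-i} f_1^i \geq (n - f_1) n^{p-2}$ yields $n^p - F_p \geq n^{p-1}(n - f_1)$.

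Dividing the two bounds, the ratio is at most $\frac{\gamma}{2^{p-1}}\left(\frac{n - f_1}{n}\right)^{p-1} \leq \frac{\gamma}{2^{p-1}}$. Since $\gamma \leq \gamma^{1/(p-1)}$ for $\gamma \in (0,1)$ and $p \geq 2$, this is at most $\gamma^{1/(p-1)}/2^{p-1}$, which beats the target $\gamma^{1/(p-1)}/3$ whenever $p \geq 3$. The main obstacle I anticipate is the boundary case $p = 2$, where the crude ratio $\gamma/2$ just misses $\gamma/3$ by a constant. To close this gap I would use the sharper lower bound $n^2 - F_2 \geq 2 f_1 (n - f_1)$ (obtained by keeping only the cross pairs that involve the value of frequency $f_1$) together with the previously established $f_1 \geq \gamma^{1/(p-1)} n / 2$ from the discussion preceding \Cref{cl:smallf}, giving $n^2 - F_2 \geq \gamma n(n - f_1)$, and additionally exploit that the algorithm's actual inclusion threshold is $\gamma^{1/(p-1)}/4$ rather than $\gamma^{1/(p-1)}/2$ (plus an $O(\eps)$ slack), which shaves a further constant from $B^{p-1}$ and absorbs the remaining gap.
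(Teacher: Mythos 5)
Your main argument is correct and follows essentially the same route as the paper's proof of \Cref{cl:smallEN}: cap each neglected frequency by $B=\frac{\gamma^{1/(p-1)}}{2}(n-f_1)$, multiply by the remaining mass $\sum_{v\neq 1} f_v = n-f_1$ to get the error bound $\frac{\gamma}{2^{p-1}}(n-f_1)^p$, and then compare against a lower bound on $n^p-F_p$. Your comparison step is in fact a bit cleaner and tighter than the paper's: you use $F_p\le n f_1^{p-1}$ to get $n^p-F_p\ge n^{p-1}(n-f_1)$ and conclude error $\le \frac{\gamma}{2^{p-1}}(n^p-F_p)$, whereas the paper goes through $(n-f_1)^p\le n^p-f_1^p$ plus a Jensen-type step $n^p-F_p\ge (1-1/c)(n^p-f_1^p)$, which only yields roughly $\frac{\gamma}{2^{p-2}}(n^p-F_p)$. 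You are also right to flag that at $p=2$ neither bound meets the stated constant $1/3$; the paper glosses over this, and indeed the statement as written is loose there (with $f_1=\gamma n/2$ and all other frequencies just below the $1/2$-threshold, the neglected contribution is close to $\frac{\gamma}{2}(n^2-F_2)$), so what matters downstream is only an $O(\gamma^{1/(p-1)})\cdot(n^p-F_p)$ bound.

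The genuine problem is your proposed patch for $p=2$. The ``sharper'' bound $n^2-F_2\ge 2f_1(n-f_1)$, instantiated with the only available guarantee $f_1\ge \gamma n/2$, gives $n^2-F_2\ge \gamma n(n-f_1)$, which is \emph{weaker} by a factor of $\gamma$ than the bound $n(n-f_1)$ you already have (and $f_1$ really can be $\Theta(\gamma n)$). Substituting it consumes the $\gamma$: even with the $/4$ threshold you would only conclude error $\le \frac{1}{4}(n^2-F_2)$, with no $\gamma$ factor at all, which is useless. The only part of the patch that does real work is the tighter effective exclusion threshold, roughly $\frac{\gamma^{1/(p-1)}}{4}(n-f'_b)$ plus $O(\eps)(n-f_1)$ slack; combined with your original bound $n^2-F_2\ge n(n-f_1)$ this gives about $\frac{\gamma}{4}(n^2-F_2)\le\frac{\gamma}{3}(n^2-F_2)$ --- but note this proves a variant of the lemma with a smaller threshold than the $\frac{\gamma^{1/(p-1)}}{2}(n-f_1)$ in the statement, not the statement itself. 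So drop the $2f_1(n-f_1)$ step; either prove the threshold-$/4$ variant explicitly, or keep your clean $\frac{\gamma}{2^{p-1}}(n^p-F_p)$ bound and note it suffices for \Cref{thm:genSketch} up to the constant, which is effectively what the paper does.
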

\begin{proof}
    We first observe that we have $\sum_{i \neq 1} f_i = n-f_1$.  So, $\sum_{i \notin \mathcal{F}} f_i^p$ is greatest when there are $\frac{2}{\gamma^\frac{1}{p-1}}$ values each with frequency $\frac{\gamma^\frac{1}{p-1}}{2} \cdot (n-f_1)$. So this sum (and therefore the error we incur) is at most 
    \[\sum_i^{2/\gamma^\frac{1}{p-1}} \left(\frac{\gamma^\frac{1}{p-1}}{2}\cdot (n-f_1)\right)^p \leq
    \frac{\gamma}{2^{p-1}} \cdot (n-f_1)^p.
    \]
    We have that $(n-f_1)^p \leq n^p - f_1^p$ so we are getting $\frac{\gamma}{2^{p-1}} \cdot (n^p - f_1^p)$ total error. 

    The quantity that we want to estimate is $n^p - f_1^p - \sum_{i >1} f_i^p$. By Jensen's inequality we can see that 
    \[
    n^p - f_1^p - \sum_{i >1} f_i^p \geq n^p - f_1^p - \frac{(n-f_1)^p}{c}
    \]
    for some constant $c \geq 2$ since we have $\sum_{i>1}f_1 = n-f_1$ and our summation is over at most $2/\gamma^{\frac{1}{p-1}}$ frequencies. Furthermore, we have that $n^p - f_1^p \geq (n-f_1)^p$. So, achieving $\frac{\gamma}{2^{p-1}} \cdot (n^p - f_1^p)$ gives us the desired error guarantee. 
\end{proof}
Therefore, combining all the lemmas above gives the result.

\section{General Fingerprinting} \label{sec:genFing}
We now present our algorithm for general fingerprinting, \hyperref[alg:generalFingAlg]{general-fingerprinting-sketch} (\Cref{alg:generalFingAlg}) to prove \Cref{thm:genThm}. To utilize our general submodular maximization framework from \Cref{thm:subThm}, we need to provide a sketch that preserves queried values of the general fingerprinting function to within a $(1 \pm \gamma)$ factor. The general fingerprinting function receives as input a subset of the columns of $\bA$ and outputs how many pairs of users they separate. We can therefore see that maximizing this function gives us the desired output. Note that the general fingerprinting function is submodular since when adding a new column to a set $\mathcal{C}$ of columns, if this separates a pair of users that were previously not separated, then this column also separates that pair of users if added to some $T \subseteq \mathcal{C}$. It is also monotone since adding another column to $\mathcal{C}$ never decreases the function value. 

\begin{algorithm}[ht]
\caption{general-fingerprinting-sketch ($n \times d$ matrix $\bA$, $\eps \in (0,1), k \geq 0$)}
\label{alg:generalFingAlg}
\begin{algorithmic}[1] 
    \STATE $\gamma \leftarrow \eps / k$. 
    \FOR{$j \in [d]$}
        \STATE Maintain two $L_0$ sketches with $\eps' = \gamma$ and $\tilde{O}(\gamma^{-2})$ perfect $\ell_0$ samplers for the $j^{\text{th}}$ column of $\bA$. 
    \ENDFOR 
    \STATE \textbf{To answer a query:}
    \STATE The query will ask for the function value on a subset of columns $\mathcal{C}$.
    \STATE For every $j \in \mathcal{C}$, take its first $L_0$ sketch and concatenate them into a matrix (each sketch is a column of the matrix). Denote the matrix as $\mathcal{S}_1$. 
    \STATE For every $j \in \mathcal{C}$, take its second $L_0$ sketch and concatenate them into a matrix. Denote the matrix as $\mathcal{S}_2$.
    \STATE For each $j \in [d]$, we view the first half of its $\ell_0$ samplers as a uniform sampling vector $L_1$. We view the second half as vector $L_2$. 
    \STATE For every $j \in \mathcal{C}$, take its $L_1$ and concatenate them into a matrix $\mathcal{S}_3$. 
    \STATE For every $j \in \mathcal{C}$, take its $L_2$ and concatenate them into a matrix $\mathcal{S}_4$.  
    \STATE Reduce the column dimension of $\mathcal{S}_1, \mathcal{S}_2, \mathcal{S}_3,$ and $\mathcal{S}_4$ by right multiplying by a random vector $v$ from $\{ -\poly(ndk),\ldots, \poly(ndk)\}^{|\mathcal{C}|}$. 
    \STATE Run \hyperref[alg:sketchAlg]{p-Tuples-Sketch} (\Cref{alg:sketchAlg}) with $\mathcal{S}_1, \mathcal{S}_2, \mathcal{S}_3,$ and $\mathcal{S}_4$, $\delta = 1/(\poly(d)k)$, $\gamma = \eps / k$, and $p = 2$ to estimate $\frac{n^2 - F_2}{2}$.
\end{algorithmic}
\end{algorithm}

Let us analyze the memory usage. We keep two $L_0$ sketches per column of $\bA$. As per \Cref{thm:subThm}, we must set $\gamma = \eps / k$ for our sketch. This makes the space of each $L_0$ sketch $\tilde{O}(k^2 / \eps^2)$. So the total space for all $d$ columns is $\tilde{O}(dk^2 / \eps^2)$. The space for each $\ell_0$ sampler is $\tilde{O}(\log^2 n)$, and we keep $\tilde{O}(dk^2/\eps^2 )$ of them giving us $\tilde{O}(dk^2/\eps^2)$. Using \Cref{thm:subThm}, our total  space is therefore $\tilde{O}(dk^3/\eps^2)$. 

For the update time, each update affects one column of $\bA$, and therefore two $L_0$ sketches and $\tilde{O}(\gamma^{-2})$ $\ell_0$ samplers. So, the update time per sketch is $\tilde{O}(\gamma^{-2})$. As per \Cref{thm:subThm}, we will keep $k$ sketches so the total update time is $\tilde{O}(k/\gamma^2) = \tilde{O}(k^3/\eps^2)$. 

Now, we prove the correctness. As per our framework in \Cref{thm:subThm}, our result follows if we can show that our sketch provides $(1 \pm \gamma)$-approximations to all queried values to our general fingerprinting function with probability $O(1/ (\poly(d)k))$. 

Upon a query to our function on a subset of columns $\mathcal{C}$, we return $g_S(\{\bS \cdot \ba_i\}_{i \in \mathcal{C}})$. To do this, for each type of sketch (both the $L_0$ sketch and the $\ell_0$-sampling sketch) for the columns of subset $\mathcal{C}$, we concatenate them and reduce them each to one column. Below, $(\bS \bA)_{\mathcal{C}}$ denotes the sketch of $\bA$ restricted to the columns in $\mathcal{C}$. 
\begin{lemma} \label{cl:disRows}
    With probability $1-1/(\poly(d)k)$, for any rows $x$ and $y$ in $(\bS\bA)_{\mathcal{C}}$ for sketch $\bS$, they are distinct if and only if entry $x$ and $y$ of $[(\bS\bA)_{\mathcal{C}}]\bv$ are distinct for random vector $\bv$ with entries in $\{-\poly(ndk), \poly(ndk) \}$.
\end{lemma}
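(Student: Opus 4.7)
\textbf{Proof plan for \Cref{cl:disRows}.} The plan is to use a standard Schwartz--Zippel style argument for random linear combinations. The ``only if'' direction is immediate from linearity: if rows $x$ and $y$ of $(\bS\bA)_{\mathcal{C}}$ are identical, then their difference is the zero vector, so entries $x$ and $y$ of $[(\bS\bA)_{\mathcal{C}}]\bv$ agree for every choice of $\bv$. The work is in the ``if'' direction, namely that distinct rows yield distinct entries after multiplication by $\bv$ with high probability.

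First I would fix an arbitrary pair of rows $x,y$ such that the row-difference $\bw = (\bS\bA)_{\mathcal{C},x} - (\bS\bA)_{\mathcal{C},y}$ is nonzero. Let $i$ be any coordinate with $w_i \neq 0$. Conditioning on all the other coordinates of $\bv$, the event $\bw^\top \bv = 0$ forces $v_i$ to equal a single specific value in $\mathbb{Q}$, and there are at most one integer in the sampling range $\{-\poly(ndk),\ldots,\poly(ndk)\}$ that achieves this. Since $v_i$ is drawn uniformly from a set of size $\Theta(\poly(ndk))$, the conditional probability of a collision is at most $1/\poly(ndk)$; by taking expectation over the remaining coordinates, the same bound holds unconditionally.

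Next I would union bound over all row pairs. There are at most $\binom{n}{2} < n^2$ pairs, and the sketched matrix $(\bS\bA)_{\mathcal{C}}$ has the same row set for every query, so the pairs in question are fixed once the sketch is built. Choosing the degree of the polynomial defining the range of $\bv$ to be sufficiently large (it only affects the bit complexity of storing $\bv$, not the sketch dimension) makes the union bound give failure probability $n^2 / \poly(ndk) \leq 1/(\poly(d) k)$.

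The only subtlety I anticipate is the subtlety that the entries of $\bw$ are themselves random, since $\bS\bA$ depends on the randomness of the linear sketches $\bS$; however, $\bv$ is drawn independently of $\bS$ and of $\bA$, so we may condition on $\bS\bA$ and apply the above argument to each fixed realization of the (nonzero) difference vector $\bw$. This independence is what allows the Schwartz--Zippel-type bound to go through without any assumption on the distribution of $\bS\bA$, so the main obstacle is really just bookkeeping the conditioning correctly rather than a technical hurdle.
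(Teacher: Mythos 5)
Your proposal is correct and matches the paper's proof: both argue that for a fixed nonzero row-difference $\bw$, conditioning on all but one coordinate of $\bv$ leaves at most one value forcing $\bw^{\intercal}\bv = 0$, giving collision probability $1/\poly(ndk)$, then union bound over row pairs, with the identical-rows direction holding trivially. Your added care in choosing a coordinate $i$ with $w_i \neq 0$ and in conditioning on the realization of $\bS\bA$ (using independence of $\bv$) is a slightly tidier bookkeeping of the same argument.
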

\begin{proof}
    Let us look at two rows of $\bB = (\bS\bA)_{\mathcal{C}}$ that are distinct. We call these rows $\bB_x$ and $\bB_y$. Take $\bw$ to be the vector that is formed from performing $\bB_x - \bB_y$. We first want to show that $\bw^{\intercal} \bv \neq 0$. 

    We have that $\bw^{\intercal} \bv = \bw_1\cdot \bv_1 + \bw_2 \cdot \bv_2 + \cdots + \bw_d \cdot \bv_d$. Fixing the values of $\bv_1$ through $\bv_{d-1}$, there is only one value for $\bv_d$ such that $\bw^{\intercal}\bv = 0$. Therefore, this ``bad" event happens with probability at most $1/\poly(ndk)$. Union bounding over all possible rows of $\bB$, we have that with probability $1-1/\poly(dk)$ if rows $x$ and $y$ of $\bB$ for any $x,y$ are distinct then entries $x$ and $y$ of $\bB \bv$ are distinct. 

    To finish up the proof, we want to show that if rows $x$ and $y$ of $\bB$ for any $x,y$ are identical, then entries $x$ and $y$ of $\bB \bv$ are identical. This is clearly true with probability $1$. 
\end{proof}
Now, we are in the vector case. We claim that the rest of the work is done by passing in $\mathcal{S}_1$, $\mathcal{S}_2$, $\mathcal{S}_3$, and $\mathcal{S}_4$ into our sketch from \Cref{thm:genSketch} with $p = 2$. For each distinct item $i$ in the vector, we denote its frequency as $f_i$. As we can see, $\binom{n}{2} - \sum_i \binom{f_i}{2} = \frac{n^2 - F_2}{2}$ is the general fingerprinting function. This is because $\binom{n}{2}$ denotes all pairs of users and by subtracting off $\sum_i \binom{f_i}{2}$ we are subtracting off pairs of users that share identical values. Note the changes in the parameters of the input between here and in \Cref{thm:genSketch}.

\section{Experiments}
\label{sec:exper}
\begin{figure}[t!]
\centering
    \begin{subfigure}[b]{0.45\textwidth}            
            \includegraphics[width=0.9\textwidth]{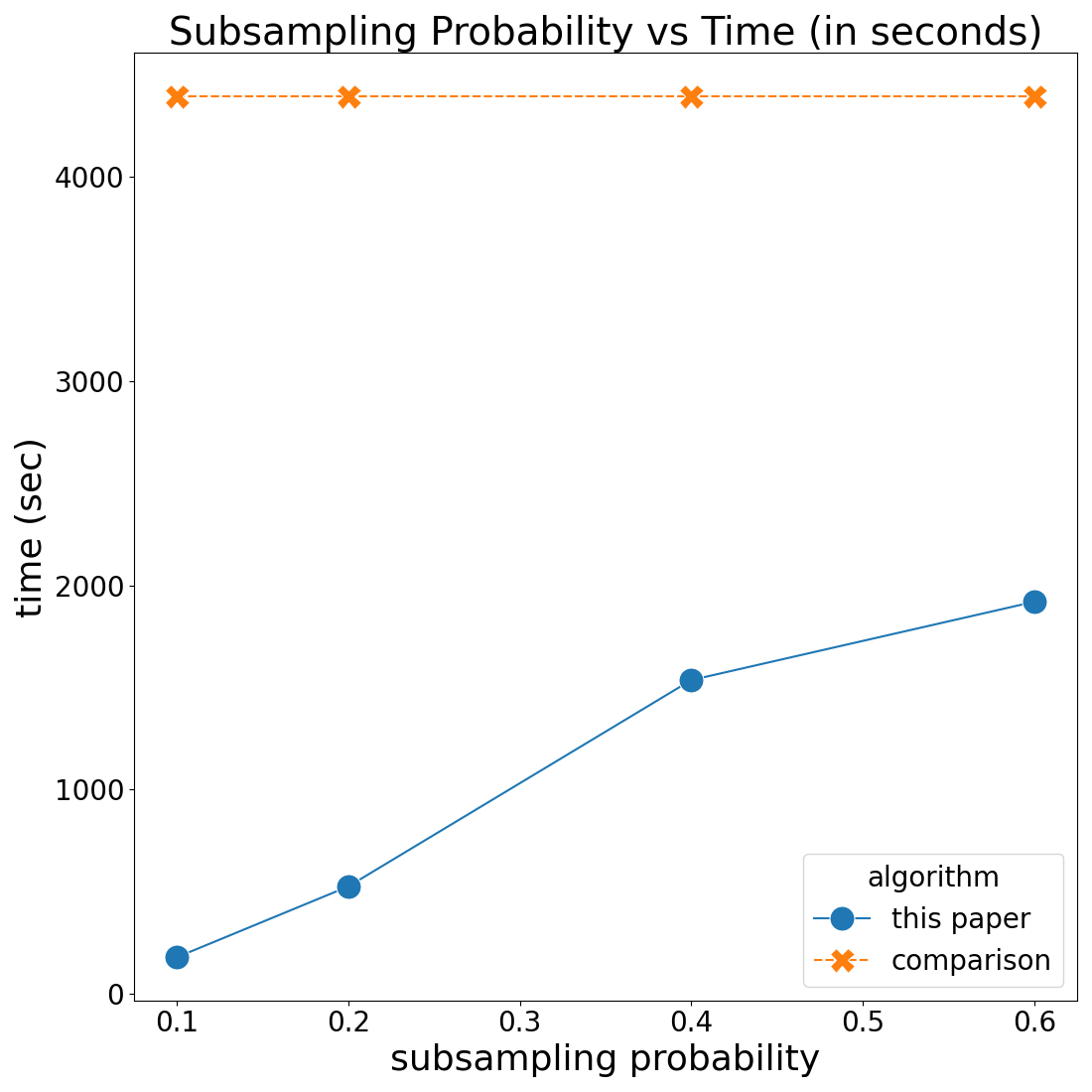}
            \caption{ subsampling probability vs. time}
            \label{fig1}
    \end{subfigure}
    \begin{subfigure}[b]{0.45\textwidth}
            \centering
            \includegraphics[width=0.9\textwidth]{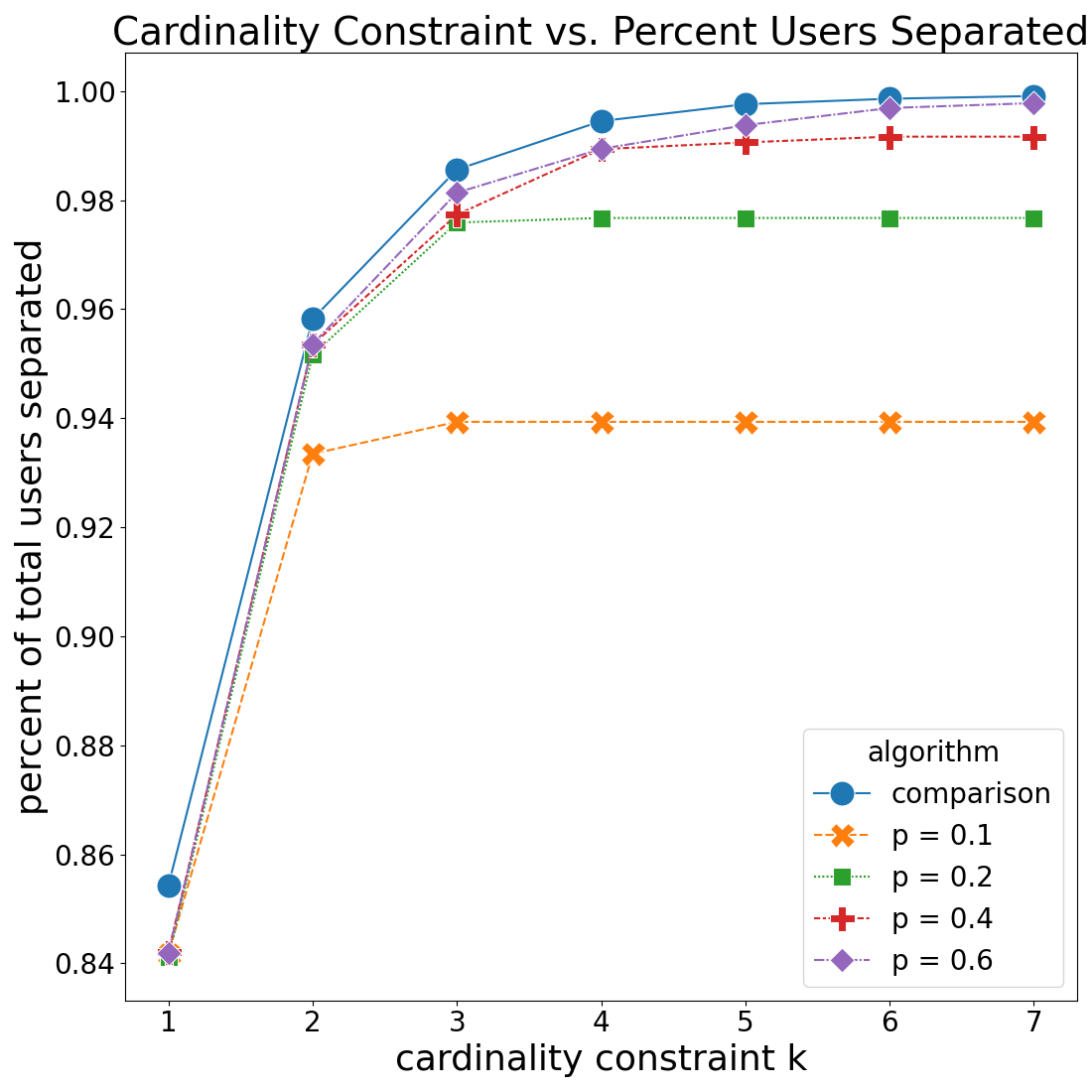}
            \caption{ $k$  vs. accuracy}
            \label{fig2}
    \end{subfigure}
    \caption{Targeted Fingerprinting Results for the ``Adult'' Dataset}\label{fig:total1}
\end{figure}
We outline our fingerprinting results and compare the runtime/accuracy to \cite{GAC2016near} \footnote{\cite{GAC2016near} has two implementations, one of which is supposed to be optimized for time. However, we found that the non-optimized implementation was faster and therefore use it for comparison.}. We then present our results on dimensionality reduction. All experiments were run locally on a M2 MacBook Air. 
We use two publicly-available datasets, the UC Irvine ``Adult" and ``US Census Data (1990)" \cite{misc_adult_2, misc_us_census_data_(1990)_116}. For consistency, we apply the pre-processing from \cite{GAC2016near} to both datasets. ``Adult" has $32,561$ instances (representing users) and $80$ features while ``US Census Data (1990)" has $2,458,285$ instances and $195$ features. We note that for fingerprinting we do not simulate updates since the algorithms of \cite{GAC2016near} are not streaming algorithms and therefore unable to accommodate updates. We also note that we expect the baseline to always achieve better error. This is because we theoretically lose a small $\eps$ factor in our approximation. However, our experiments show that our algorithms still retain good comparative accuracy and greatly increase the time efficiency.

\subsection{Targeted Fingerprinting Results}
We made standard modifications that are done in the practical implementation of streaming algorithms. 
In particular, we use a constant subsampling rate $p \in [0.1, 0.6]$ instead of subsampling at $\log n$ rates, and we sample nonzero entries once we are in the smaller subsampled universe with a fixed probability as this is sufficient for smaller datasets. All presented data are averages over $10$ runs. 

We present our results for ``Adult" with the probability of subsampling rows from $\bA$ to create $\bA'$ to be $p \in [0.1, 0.6]$.  
One run finds the targeted fingerprint of all users in the dataset for some given $k$. 

For $k = 7$, from \cref{fig1} we can see that our algorithm runs about $25$x, $8.4$x, $3$x, and $2.3$x faster than that of \cite{GAC2016near} with subsampling probabilities $0.1$, $0.2$, $0.4$, and $0.6$ respectively. 
In settings where $n$ is very large the subsampling probability in our algorithm will be much smaller, and we only use larger subsampling probabilities for further insight. Note that the implementation of \cite{GAC2016near} is deterministic. We put their runtime as a line for visualization. 
Now we look at accuracy. For increasing $k$, we compute the average percent of users our algorithm is able to separate from a given target user and compare it to \cite{GAC2016near}. 
In \cref{fig2}, we show that we retain good accuracy despite subsampling rows and then subsampling nonzero entries. Note that the vertical axis's minimum value is $84 \%$. As the subsampling probability increases, the accuracy of our implementation converges to that of \cite{GAC2016near}. 
\begin{figure}
\begin{center}
\centerline{\includegraphics[width=0.4\textwidth]{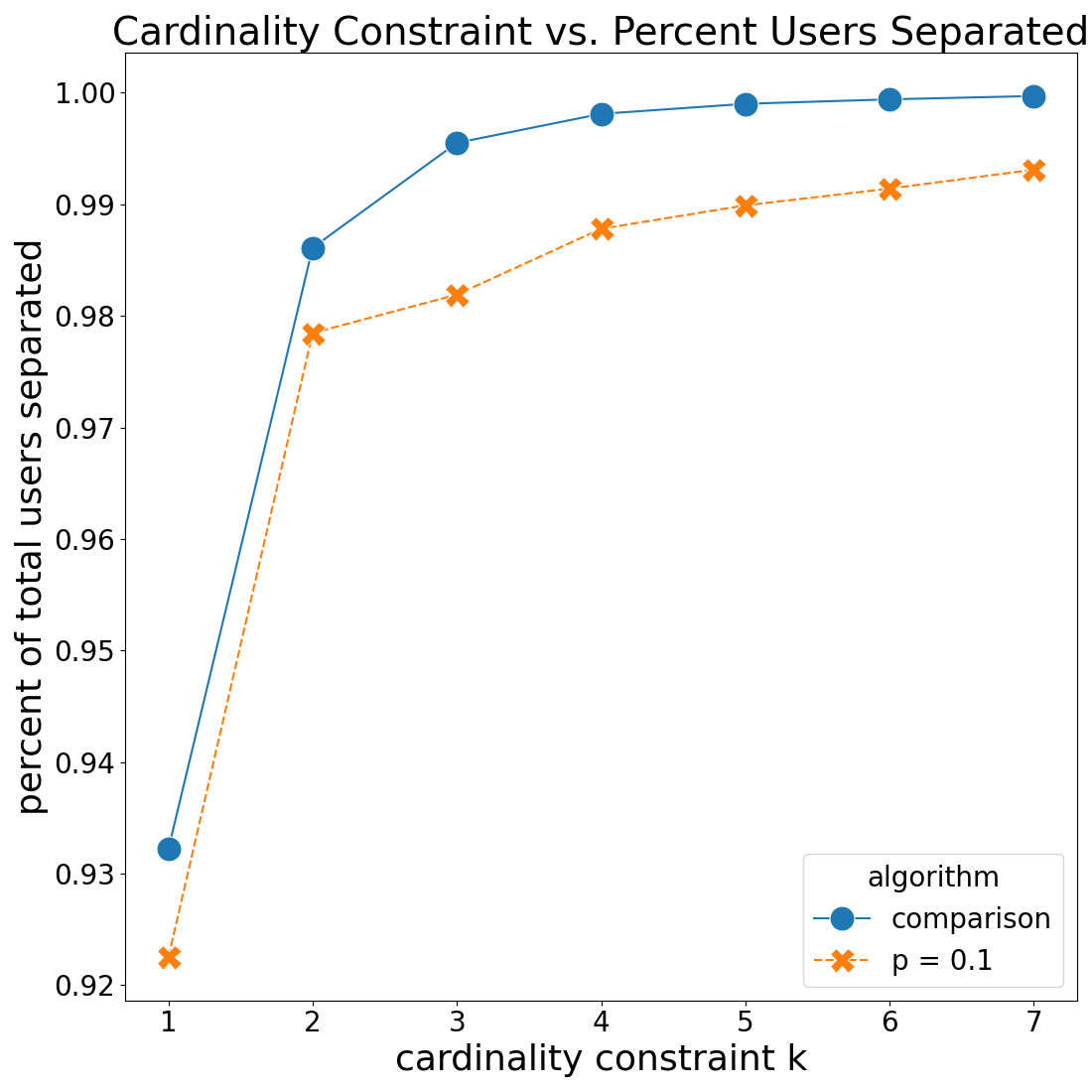}}
\caption{Targeted Fingerprinting Results for the ``US Census Data" Dataset: k vs. accuracy}
\label{fig3}
\end{center}
\end{figure}

Now, we present our results for ``US Census Data (1990)". Due to limited compute, we look at one subsampling level of $0.1$. 
Over $10$ runs for $k = 7$, the average time of our algorithm to compute a fingerprint for an input user was $1.06$ seconds while the comparison average was $52.6$ seconds. The subsampling took an extra $46.355$ seconds. So, our algorithm was about $49$x times faster. For accuracy,  
\cref{fig3} shows that we quickly converge to the accuracy of \cite{GAC2016near} with growing $k$. Note that the vertical axis's minimum value is $92 \%$.

\subsection{General Fingerprinting Results}
For general fingerprinting, the main difference between our theoretical and implemented algorithm is that we only create one sketch rather than $k$ sketches. 
We first present our results for ``Adult". The main variable we vary is the size of our $L_0$ sketch, specifically with $300, 600, 900,$ and $1,250$ rows. We had our algorithm compute a general fingerprint for $k = 1, 2, \ldots, 20$.
The runtime of our algorithm was largely independent of the size of the sketch and ran in about $0.8$ seconds which is $44$x faster than the $35.30$ second runtime of \cite{GAC2016near}.  
We measure accuracy by looking at the proportion between the number of pairs of users that our algorithm separates to the number of pairs of users that \cite{GAC2016near} separates. For each sketch size, we never dip below an accuracy ratio of $80\%$, and as the sketch size increases the accuracy ratio increases to around $99\%$. 
We now present our results for ``US Census Data (1990)". We again vary the size of our $L_0$ sketch, using $55,000$, $ 180,000$, and $400,000$ rows. We computed a general fingerprint for $k = 1,2, \ldots, 10$. We use smaller $k$ for comparison for this dataset since the implementation of \cite{GAC2016near} was not able to terminate even after several hours for larger $k$. 

\begin{figure}[ht]
\centering
    \begin{subfigure}[b]{0.45\textwidth}            
            \includegraphics[width=0.9\textwidth]{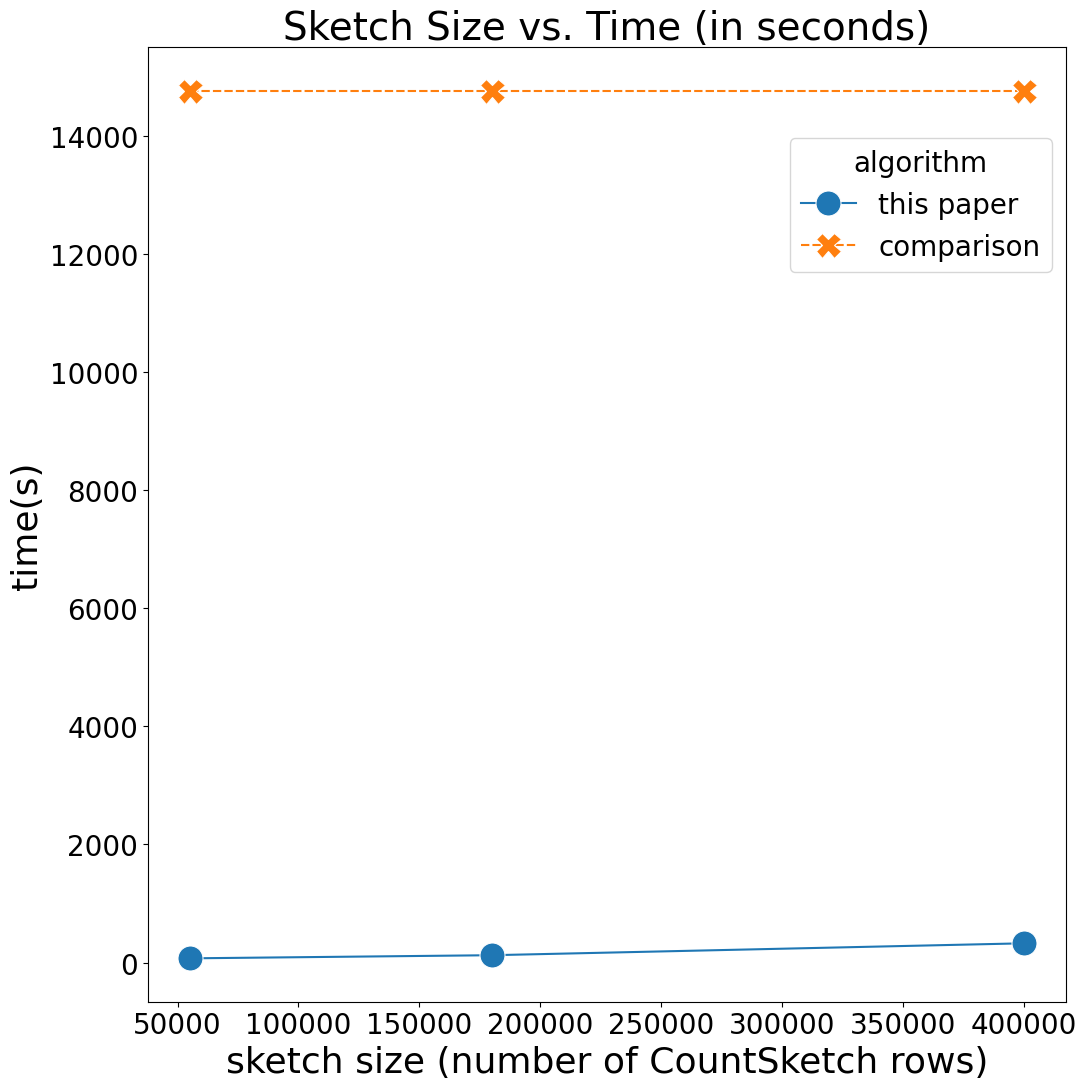}
            \caption{sketch size vs. time}
            \label{fig4}
    \end{subfigure}
    \begin{subfigure}[b]{0.45\textwidth}
            \centering
            \includegraphics[width=0.9\textwidth]{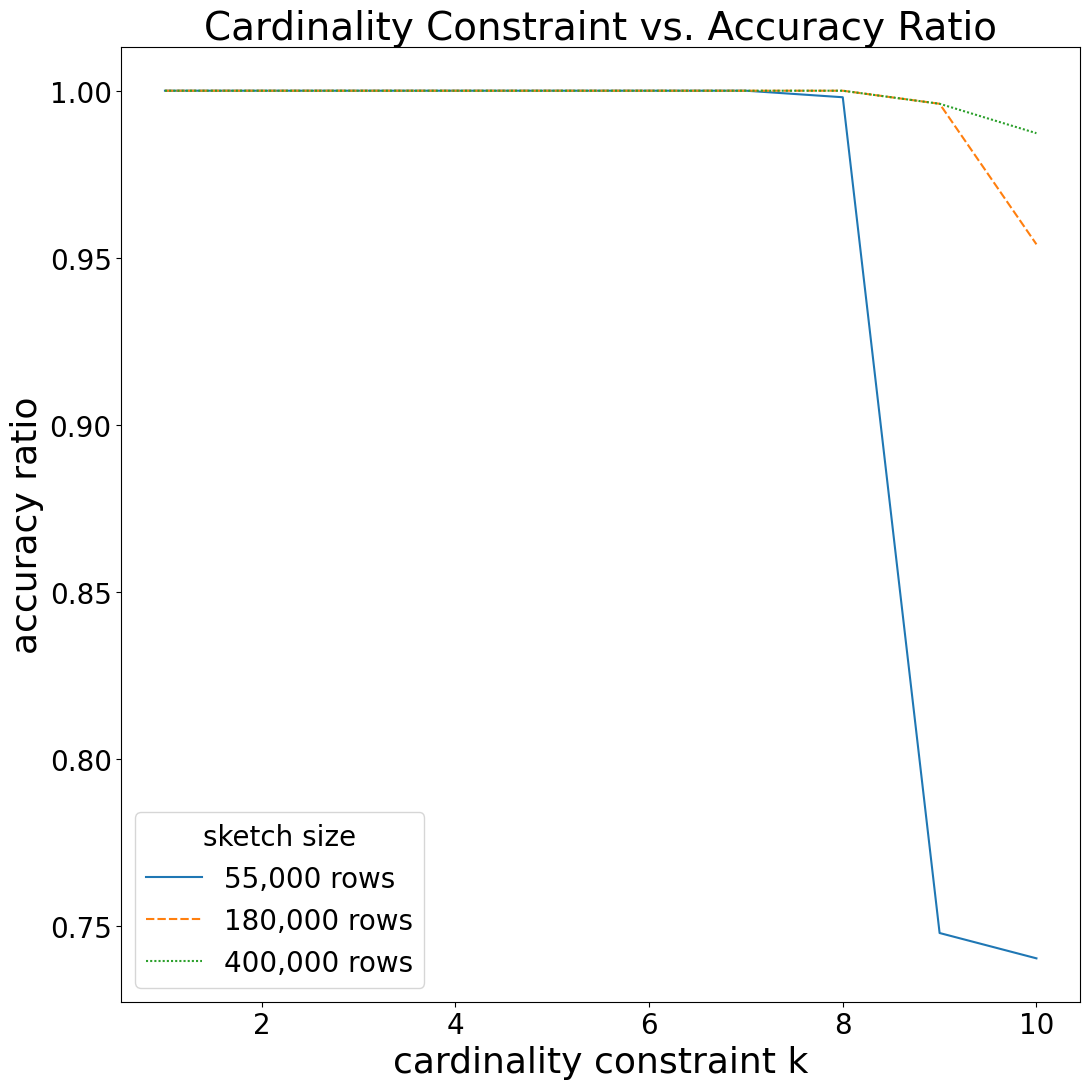}
            \caption{ k vs. accuracy}
            \label{fig5}
    \end{subfigure}
    \caption{General Fingerprinting Results for the ``US Census Data'' Dataset}\label{fig:total2}
\end{figure}
In \cref{fig4} we can see that the runtime of our algorithm increases as the sketch size increases. Our implementation is about $210$, $120$, and $45$ times faster than that of \cite{GAC2016near} for $55,000$, $180,000$, and $400,000$ rows respectively. For a fingerprint of size $20$ our implementation takes a little over twice the amount of time as for a fingerprint of size $10$. We estimate that the runtime of the comparison algorithm also doubles but cannot be sure due to its non-termination. Concerning accuracy, we again see in \cref{fig5} that as sketch size increases, the accuracy ratio increases. We make note of a steep drop-off for a sketch with $55,000$ rows. However, our accuracy ratio never dips below $70\%$.

\subsection{Dimensionality Reduction Results}
We use the UCI ``Wine" dataset  which consists of $178$ instances and $13$ features \cite{misc_wine_109}. Each of the instances is labeled by one of three wine types. We used our general fingerprinting algorithm to select features that best separate the data. Then, we ran $k$-means with $3$ clusters (for the $3$ wine types) using just the selected features. Therefore, this is a dimensionality reduction technique, since for many clustering algorithms (including $k$-means and $k$-means$++$) the efficiency depends on the feature dimension. 
We measure accuracy in the following way. After running $k$-means on the reduced feature space, 
for each cluster, we calculate the majority wine type. Then, for each instance, if its actual wine type is not the same as the majority wine type of its assigned cluster, we count it towards the error.  
We used general fingerprinting to reduce the feature dimension to $3, 4$, and $5$ features. Our accuracy for all was around $68\%$. When running $k$-means using all $12$ features, the accuracy was around $71 \%$, which suggests that we do not introduce that much error.
In addition, when running $k$-means instead on just $3$, $4$, and $5$ completely randomly chosen features,
the accuracy decreases to around $52\%$. We also increase the efficiency of running $k$-means. 
Running $k$-means with our reduced $3, 4,$ and $5$ features compared to running it with all $13$ features is about $3.2$, $2.4$, and $2.1$ times faster, respectively.

\section*{Acknowledgments}
Hoai-An Nguyen was supported in part by an NSF GRFP fellowship grant number DGE2140739 and NSF CAREER Award CCF-2330255. Huy Nguyen was supported in part by NSF award number 2311649. Alina Ene was supported in part by NSF CAREER award CCF-1750333 and an Alfred P. Sloan Research Fellowship.

We thank Sepehr Assadi for numerous helpful discussions. We thank William He for giving many useful presentational comments. We also thank Praneeth Kacham, Noah Singer, and Brian Zhang for helping us review the paper and giving useful comments. 

\bibliographystyle{alpha}
\bibliography{general}

\end{document}